\def\eqref#1{equation~\ref{#1}}
\def\1{\bm{1}}
\DeclareMathAlphabet{\mathsfit}{\encodingdefault}{\sfdefault}{m}{sl}
\SetMathAlphabet{\mathsfit}{bold}{\encodingdefault}{\sfdefault}{bx}{n}
\newtheorem{theorem}{Theorem}[section]
\newtheorem{proposition}[theorem]{Proposition}
\newtheorem{lemma}[theorem]{Lemma}
\newtheorem{definition}[theorem]{Definition}
\newtheorem{assumption}[theorem]{Assumption}
\theoremstyle{remark}
\newtheorem{remark}[theorem]{Remark}
\title{Doubly Robust Proximal Causal Learning for Continuous Treatments}
\author{Yong Wu\textsuperscript{1,3,4,5}~~~~~~~~ Yanwei Fu\textsuperscript{2}~~~~~~~~ Shouyan Wang\textsuperscript{1,3,4,5,6,7}~~~~~~~~ Xinwei Sun\textsuperscript{2}\thanks{Corresponding author}\\
\textsuperscript{1}Institute of Science and Technology for Brain-Inspired Intelligence, Fudan University\\
\textsuperscript{2}School of Data Science, Fudan University~~~\textsuperscript{3}Zhangjiang Fudan International Innovation Center\\
\textsuperscript{4}Key Laboratory of Computational Neuroscience and Brain-Inspired Intelligence (Fudan University)\\
\textsuperscript{5}MOE Frontiers Center for Brain Science, Fudan University\\
\textsuperscript{6}Shanghai Engineering Research Center of AI \& Robotics, Fudan University\\
\textsuperscript{7}Engineering Research Center of AI \& Robotics, Ministry of Education, Fudan University}
\begin{document}

\maketitle
\addtocontents{toc}{\protect\setcounter{tocdepth}{0}}
\begin{abstract}
Proximal causal learning is a powerful framework for identifying the causal effect under the existence of unmeasured confounders. Within this framework, the doubly robust (DR) estimator was derived and has shown its effectiveness in estimation, especially when the model assumption is violated. However, the current form of the DR estimator is restricted to binary treatments, while the treatments can be continuous in many real-world applications. The primary obstacle to continuous treatments resides in the delta function present in the original DR estimator, making it infeasible in causal effect estimation and introducing a heavy computational burden in nuisance function estimation. To address these challenges, we propose a kernel-based DR estimator that can well handle continuous treatments for proximal causal learning. Equipped with its smoothness, we show that its oracle form is a consistent approximation of the influence function. Further, we propose a new approach to efficiently solve the nuisance functions. We then provide a comprehensive convergence analysis in terms of the mean square error. We demonstrate the utility of our estimator on synthetic datasets and real-world applications\footnote{\label{code} Code is available at \url{https://github.com/yezichu/PCL_Continuous_Treatment}.}.

\end{abstract}
% \vspace{-0.2cm}
\section{Introduction}
\vspace{-0.1cm}
\label{sec.intro}

The causal effect estimation is a significant issue in many fields such as social sciences \citep{hedstrom2010causal}, economics \citep{varian2016causal}, and medicine \citep{yazdani2015causal}. A critical challenge in causal inference is non-compliance to randomness due to the presence of unobserved confounders, which can induce biases in the estimation.

One approach to address this challenge is the proximal causal learning (PCL) framework \citep{miao2018identifying, tchetgen2020introduction, cui2023semiparametric}, which offers an opportunity to learn about causal effects where ignorability condition fails. This framework employs two proxies - a treatment-inducing proxy and an outcome-inducing proxy - to identify the causal effect by estimating the bridge/nuisance functions. Particularly, \cite{cui2023semiparametric} derived the doubly robust estimator within the PCL framework, which combines the estimator obtained from the treatment bridge function and the estimator obtained from the outcome bridge function. The doubly robust estimator has been widely used in causal effect estimation \citep{bang2005doubly}, as it is able to tolerate violations of model assumptions of bridge functions.

However, current doubly robust estimators \citep{cui2023semiparametric} within the proximal causal framework mainly focus on binary treatments, whereas the treatments can be continuous in many real-world scenarios, including social science, biology, and economics. For example, in therapy studies, we are not only interested in estimating the effect of receiving the drug but also the effectiveness of the drug dose. Another example comes from the data \citep{donohue2001impact} that focused on policy-making, where one wishes to estimate the effect of legalized abortion on the crime rate.

Previous work on causal effect for continuous treatments has focused primarily on the unconfoundedness assumption \citep{kallus2018policy,colangelo2020double}. However, extending them within the proximal causal farmework encounters several key challenges. Firstly, the Proximal Inverse Probability Weighting (PIPW) part in the original doubly robust (DR) estimator relies
on a delta function centered around the treatment value being analyzed, rendering it impractical
for empirically estimating causal effects with continuous treatments. Secondly, deriving the influence function will involve dealing with the Gateaux derivative of bridge functions, which is particularly intricate due to its implicit nature. Lastly, the existing estimation process of bridge functions requires running an optimization for each new treatment, rendering it computationally inefficient for practical applications. In light of these formidable challenges, our contribution lies in addressing the open question of deriving the DR estimator for continuous treatments within the proximal causal framework.

To address these challenges, we propose a kernel-based method that can well handle continuous treatments for PCL. Specifically, we incorporate the kernel function into the PIPW estimator, as a smooth approximation to causal effect. We then derive the DR estimator and show its consistency for a broad family of kernel functions. Equipped with smoothness, we show that such a DR estimator coincides with the influence function. To overcome the computational issue in nuisance function estimation, we propose to estimate the propensity score and incorporate it into a min-max optimization problem, which is sufficient to estimate the nuisance functions for all treatments. We show that our estimator enjoys the $O(n^{-4/5})$ convergence rate in mean squared error (MSE). We demonstrate the utility and efficiency on synthetic data and the policy-making \citep{donohue2001impact}.

\textbf{Contributions.} To summarize, our contributions are: 
\begin{enumerate}[topsep=0.1ex,leftmargin=*]
    \item We propose a kernel-based DR estimator that is provable to be consistent for continuous treatments effect within the proximal causal framework.
    \item We efficiently solve bridge functions for all treatments with only a single optimization.
    \item We present the convergence analysis of our estimator in terms of MSE. 
    \item We demonstrated the utility of our estimator on two synthetic data and real data.
\end{enumerate}

\vspace{-0.15cm}
\section{Background}
\vspace{-0.1cm}
\label{sec.related}

\textbf{Proximal Causal Learning.} The proximal causal learning (PCL) can be dated back to \cite{kuroki2014measurement}, which established the identification of causal effects in the presence of unobserved confounders under linear models. Then \cite{miao2018identifying,miao2018confounding} and its extensions \citep{shi2020multiply,tchetgen2020introduction} proposed to leverage two proxy variables for causal identification by estimating the outcome bridge function. Building upon this foundation, \cite{cui2023semiparametric} introduced a treatment bridge function and incorporated it into the Proximal Inverse Probability Weighting (PIPW) estimator. Besides, under binary treatments, they derived the Proximal Doubly Robust (PDR) estimator via influence functions. However, continuous treatments pose a challenge as the treatment effect is not pathwise differentiable with respect to them, preventing the derivation of a DR estimator. In this paper, we employ the kernel method that is provable to be consistent in treatment effect estimation. We further show that the kernel-based DR estimator can be derived from influence functions.

\textbf{Causal inference for Continuous Treatments.} The most common approaches for estimating continuous treatment effects are regression-based models \citep{imbens2004nonparametric,hill2011bayesian}, generalized propensity score-based models \citep{imbens2000role,hirano2004propensity,imai2004causal}, and entropy balance-based methods \citep{hainmueller2012entropy,imai2014covariate,tubbicke2021entropy}. Furthermore, \cite{kennedy2017non,kallus2018policy} and \cite{colangelo2020double} extended the DR estimation to continuous treatments by combining regression-based models and 
the generalized propensity score-based models. However, it remains open to derive the DR estimator for continuous treatments within the proximal causal framework. In this paper, we fill in this blank with a new kernel-based DR estimator that is provable to derive from influence function.

\textbf{Nuisance Parameters Estimation.} In proximal causal learning, one should estimate nuisance parameters to obtain the causal effect. Many methods have been proposed for this goal \citep{tchetgen2020introduction,singh2020kernel,xu2021deep, kompa2022deep}, but they primarily focus on the estimation of the outcome bridge function. Recently, \cite{kallus2021causal,ghassami2022minimax} have provided non-parametric estimates of treatment bridge function, but they are restricted to binary treatments. When it comes to continuous treatments, existing methods can be computationally inefficient since it has to resolve an optimization problem for each treatment. In this paper, we propose a new method that can efficiently solve bridge functions for all treatments with only a single optimization.

\vspace{-0.15cm}
\section{Proximal Causal Inference}
\vspace{-0.1cm}

\textbf{Problem setup.} We consider estimating the Average Causal Effect (ACE) of a continuous treatment $A$ on an outcome $Y$: $\mathbb{E}[Y(a)]$, where $Y(a)$ for any $a \in \mathrm{supp}(A)$ denotes the potential outcome when the treatment $A=a$ is received. We respectively denote $X$ and $U$ as observed covariates and unobserved confounders. To estimate $\mathbb{E}[Y(a)]$, we assume the following consistency assumptions that are widely adopted in causal inference \citep{peters2017elements}: 

\begin{assumption}[Consistency and Positivity]
\label{assum:pos}
We assume \textbf{(i)} $Y(A)=Y$ almost surely (a.s.); and \textbf{(ii)} $0<p(A=a|U=u,X=x)<1$ a.s.
\end{assumption}

\begin{assumption}[Latent ignorability]
\label{assum:exchange}
We assume $Y(a) \perp A | U, X$. 
\end{assumption}

\begin{wrapfigure}{r}{0.5\textwidth}
	\centering
    \includegraphics[width=1\linewidth]{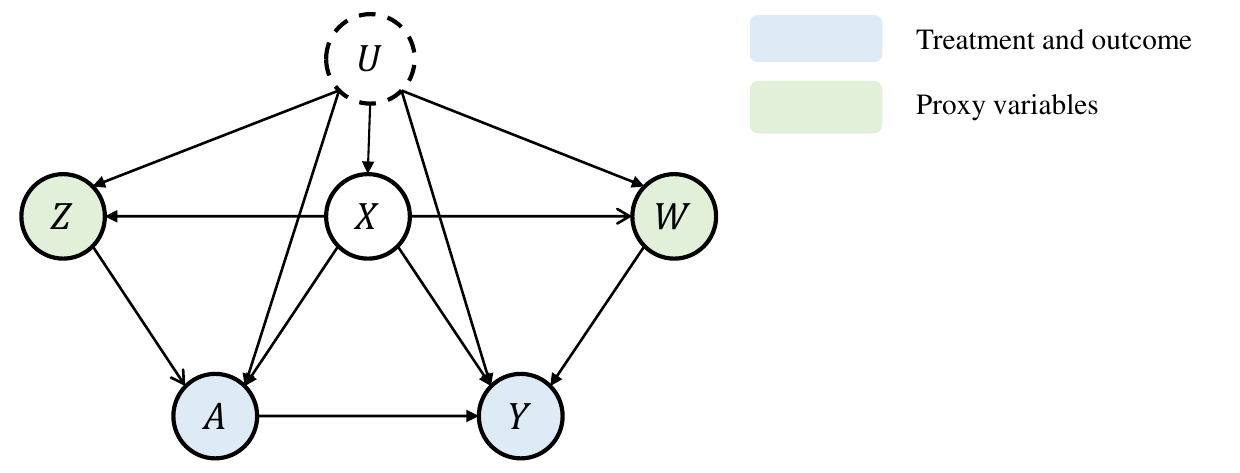}
    \caption{Illustration of causal DAG in proximal causal learning, where $Z, W$ are proxy variables.}
    \label{fig:DAG}
\end{wrapfigure}

Assump.~\ref{assum:exchange} means that the strong ignorability condition may fail due to the presence of unobserved confounder $U$. To account for such a confounding bias, the proximal causal learning incorporates a treatment-inducing proxy $Z$ and an outcome-inducing proxy $W$. As illustrated in Fig.~\ref{fig:DAG}, these proxies should satisfy the following conditional independence: 

\begin{assumption}[Conditional Independence of Proxies]
\label{assum.structure} The treatment-inducing proxy $Z$ and the outcome-inducing proxy $W$ satisfy the following conditional independence: \textbf{\emph{(i)}} $Y\perp Z\mid A,U,X$; and \textbf{\emph{(ii)}} $W\perp (A,Z)\mid U,X$.
\end{assumption}

Equipped with such conditional independence, previous work by \cite{miao2018identifying,cui2023semiparametric} demonstrated that we can express the causal effect, denoted as $\beta(a)$, as follows:
\begin{align}
\label{eq.iden}
    \mathbb{E} \left[ Y(a) \right] = \mathbb{E}\left[ h_0(a,W,X) \right] = \mathbb{E} \left[ \mathbb{I}(A=a) q_0(a,Z,X) Y \right], 
\end{align}
where $h_0$ and $q_0$ are two nuisance/bridge functions such that the following equations hold: 
\begin{align}
    \mathcal{R} _h(h_0;y):= \mathbb{E}[Y-h_0(A,W,X) | A,Z,X]& =0,\label{eq.h}  \\
    \mathcal{R} _q(q_0;p):= \mathbb{E}\left[q_0(A,Z,X)-1/p(A|W,X) | A,W,X\right]& =0. \label{eq.q}
\end{align}
To ensure the existence and uniqueness of solutions to the above equations, we additionally assume that \citep{miao2018identifying,tchetgen2020introduction,cui2023semiparametric}:
\begin{assumption}
\label{assum:bridge}
Let $\nu$ denote any square-integrable function. For any $(a,x)$, we have 
\begin{enumerate}[noitemsep,topsep=0.1pt,leftmargin=0.4cm]
    \item (Completeness for outcome bridge functions). We assume that 
    $\mathbb{E} [\nu(U)|W,a,x]=0$ and $\mathbb{E} [\nu(Z) | W, a,x]=0$ iff $\nu(U)=0$ almost surely.
    \item (Completeness for treatment bridge functions). We assume that 
    $\mathbb{E} [\nu(U) | Z,a,x]=0$ and $\mathbb{E} [\nu(W) | Z,a,x]=0$ iff $\nu(U)=0$ almost surely.
\end{enumerate}
\end{assumption}

Under assump.~\ref{assum:bridge}, we can solve $h_0$ and $q_0$ via several optimization approaches derived from conditional moment equations, including two-stage penalized regression \citep{singh2020kernel,mastouri2021proximal,xu2021deep}, maximum moment restriction \citep{zhang2020maximum,muandet2020kernel}, and minimax optimization \citep{dikkala2020minimax,muandet2020dual,kallus2021causal}. With solved $h_0,q_0$, we can estimate $\mathbb{E}[Y(a)]$ via: 
\begin{align*}
    \mathbb{E}_n[Y(a)] = \frac{1}{n} \sum_{i=1}^n h_0(a_i,w_i,x_i), \ \text{ or } \ \mathbb{E}_n[Y(a)] = \frac{1}{n} \sum_{i=1}^n \mathbb{I}(a_i=a) q_0(a,z_i,x_i)y_i.
\end{align*}
Furthermore, \cite{cui2023semiparametric} proposes a doubly robust estimator to improve robustness against misspecification of bridge functions. 
\begin{align}
  \mathbb{E} \left[ Y(a) \right] & =  \mathbb{E}[\mathbb{I}(A=a)q_0(a,Z,X)\left(Y-h_0(a,W,X)\right) + h_0(a,W,X)], \label{eq.DR} \\
  & \approx \frac{1}{n} \sum_{i=1}^n \left( \mathbb{I}(A=a)q_0(a,z_i,x_i)(y_i-h_0(a,w_i,x_i)) + h_0(a,w_i,x_i) \right). \label{eq.DR-n}
\end{align}

Although this proximal learning method can efficiently estimate $\mathbb{E}[Y(a)]$ for binary treatments, it suffers from several problems when it comes to continuous treatments. First, for any $a \in \mathrm{supp}(A)$, it almost surely holds that there does not exist any sample $i$ that satisfies $a_i = a$ for $i=1,...,n$, making Eq.~\ref{eq.DR-n} infeasible. Besides, it is challenging to derive the influence function for continuous treatments as it involves the derivative computation for implicit functions $h_0$ and $q_0$. Lastly, to estimate $q_0$, previous methods suffered from a large computational cost since they had to re-run the optimization algorithm for each new treatment, making it inapplicable in real-world applications.

To resolve these problems for continuous treatment, we first introduce a kernel-based method in Sec.~\ref{sec.kernel}, which can estimate $\mathbb{E}[Y(a)]$ in a feasible way. Then in Sec.~\ref{sec.nuisance}, we introduce a new optimization algorithm that can estimate $h_0, q_0$ for all treatments with a single optimization algorithm. Finally, we present the theoretical results in Sec.~\ref{sec.theory}.

\vspace{-0.15cm}
\section{Proximal Continuous Estimation}
\vspace{-0.1cm}

\label{sec.kernel}

In this section, we introduce a kernel-based doubly robust estimator for $\beta(a):=\mathbb{E}[Y(a)]$ with continuous treatments. We first present the estimator form in Sec.~\ref{sec.est-form}, followed by Sec.~\ref{sec.influence} to show that such an estimator can well approximate the influence function for $\beta(a)$.

\vspace{-0.2cm}
\subsection{Kernel-based Proximal Estimation}
\vspace{-0.15cm}
\label{sec.est-form}

As mentioned above, the main challenge for continuous treatments lies in the estimation infeasibility caused by the indicator function in the proximal inverse probability weighted estimator (PIPW) with $q_0$: $\hat{\beta}(a) =\frac{1}{n}\sum_{i=1}^n \mathbb{I}( a_i=a)q_0( a,z_i,x_i) y_i$. To resolve this problem, we note that the indicator function can be viewed as a Dirac delta function $\delta_a(a_i)$. The average of this Dirac delta function over $n$ samples $\frac{1}{n} \sum_{i=1}^n \delta_a(a_i)$ approximates to the marginal probability $\mathbb P(a)$ \citep{doucet2009tutorial}, which equals to 0 when $A$ is continuous. 

To address this problem, we integrate the kernel function $K(A-a)$ that can alleviate the unit concentration of the Dirac delta function. We can then rewrite the PIPW estimator as follows, dubbed as \textbf{P}roximal \textbf{K}ernel \textbf{I}nverse \textbf{P}robability \textbf{W}eighted (PKIPW) estimator: 
\begin{align}
\label{eq.kernel-ipw}
    \hat{\beta}(a) =\frac{1}{n}\sum_{i=1}^n K_{h_{\mathrm{bw}}}(a_i-a)q_0( a,z_i,x_i) y_i,
\end{align}
where $h_{\mathrm{bw}} > 0$ is the bandwidth such that $K_{h_{\mathrm{bw}}}(a_i-a) = \frac{1}{h_{\mathrm{bw}}} K\left(\frac{a_i-a}{h_{\mathrm{bw}}}\right)$. The kernel function $K_{h_{\mathrm{bw}}}(A-a)$ that has been widely adopted in density estimation, assigns a non-zero weight to each sample, thus making it feasible to estimate $\beta(a)$. To demonstrate its validity, we next show that it can approximate $\beta(a)$ well. This result requires that the kernel function $K$ is bounded differentiable, as formally stated in the following.

\begin{assumption}\label{assum.kernel}
    The second-order symmetric kernel function $K\left(\cdot\right)$ is bounded differentiable, i.e., $\int k(u)\mathrm{d}u=1,\int uk(u)\mathrm{d}u=0,\kappa_{2}(K)=\int u^{2}k(u)\mathrm{d}u<\infty$. We define $\Omega_{2}^{(i)}(K)=\int(k^{(i)}(u))^{2}\mathrm{d}u$.
\end{assumption}
Assump.~\ref{assum.kernel} adheres to the conventional norms within the domain of nonparametric kernel estimation and maintains its validity across widely adopted kernel functions, including but not limited to the Epanechnikov and Gaussian kernels. Under assump.~\ref{assum.kernel}, we have the following theorem:

\begin{restatable}{theorem}{thmindenty} 
\label{thm.iden}
   Under assump.~\ref{assum.kernel}, suppose $\beta(a)=\mathbb{E}[\mathbb{I}(A=a)q_0(a,Z,X)Y]$ is continuous and bounded uniformly respect to $a$, then we have
   $$
    \mathbb{E} [Y(a)]=\mathbb{E} [\mathbb I(A=a)q_0(a,Z,X)Y] = \lim_{h_{\mathrm{bw}}\rightarrow 0} \mathbb{E} \left[ K_{h_{\mathrm{bw}}}(A-a)q_0(a,Z,X)Y \right],
   $$

\end{restatable}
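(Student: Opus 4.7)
The plan is to treat the first equality $\mathbb{E}[Y(a)] = \mathbb{E}[\mathbb{I}(A=a)q_0(a,Z,X)Y]$ as the already-established identification result of Eq.~\ref{eq.iden} from \citep{cui2023semiparametric} (with the indicator read in the generalised-density sense that is natural for continuous $A$), and to prove the second equality by a standard kernel smoothing argument combined with dominated convergence.

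First, I would introduce the auxiliary function
\begin{equation*}
g(a';a) := f_A(a')\, \mathbb{E}[q_0(a,Z,X) Y \mid A = a'],
\end{equation*}
so that the tower property with respect to $A$ gives
\begin{equation*}
\mathbb{E}[K_{h_{\mathrm{bw}}}(A-a) q_0(a,Z,X) Y] = \int K_{h_{\mathrm{bw}}}(a'-a)\, g(a';a)\, \mathrm{d}a',
\end{equation*}
and the identification equality rewrites as $g(a;a) = \beta(a)$.

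Second, I would perform the change of variables $u = (a'-a)/h_{\mathrm{bw}}$ using $K_{h_{\mathrm{bw}}}(v) = h_{\mathrm{bw}}^{-1} K(v/h_{\mathrm{bw}})$, so that the previous display becomes
\begin{equation*}
\int K(u)\, g(a + h_{\mathrm{bw}} u;\, a)\, \mathrm{d}u.
\end{equation*}
For each fixed $u$, the integrand tends pointwise to $K(u)\beta(a)$ as $h_{\mathrm{bw}} \to 0$ by continuity of $g(\cdot;a)$ at $a$. The uniform boundedness of $\beta$ supplies a dominating envelope $\|\beta\|_\infty \, |K(u)|$, which is integrable since assump.~\ref{assum.kernel} makes $K$ a bounded second-order kernel satisfying $\int K(u)\,\mathrm{d}u = 1$. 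Dominated convergence then yields
\begin{equation*}
\lim_{h_{\mathrm{bw}} \to 0} \int K(u)\, g(a + h_{\mathrm{bw}} u;\, a)\, \mathrm{d}u = \beta(a) \int K(u)\, \mathrm{d}u = \beta(a),
\end{equation*}
which, combined with the identification equality, closes the argument.

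The main obstacle is conceptual rather than computational: the hypothesis only states that the diagonal $a \mapsto g(a;a) = \beta(a)$ is continuous and uniformly bounded, whereas the dominated convergence step technically needs continuity of the slice $a' \mapsto g(a';a)$ at $a' = a$ together with a local uniform envelope. I would resolve this by reading ``continuous and bounded uniformly with respect to $a$'' as joint continuity of $(a',a) \mapsto g(a';a)$ plus uniform boundedness on a neighbourhood of the diagonal. This is the standard convention in kernel-based causal estimation and is automatic whenever the joint density of $(A,Z,X,Y)$ and the nuisance $q_0$ are continuous in $a$.
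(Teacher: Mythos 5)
Your argument is essentially the paper's own proof: both rewrite $\mathbb{E}\left[K_{h_{\mathrm{bw}}}(A-a)\,q_0\,Y\right]$ as a convolution of the kernel with a function of the treatment value, substitute $a' = a + h_{\mathrm{bw}}u$, and conclude by dominated convergence (envelope $C\,|K(u)|$ from uniform boundedness, pointwise limit from continuity, and $\int K(u)\,\mathrm{d}u = 1$), with the first equality imported from the identification result via the Dirac-delta reading of the indicator. The only real difference is the point you flag yourself: the paper lets the first argument of $q_0$ move with the integration variable (so its continuity/boundedness hypothesis on $\beta$ applies verbatim to the convolved function, at the cost of its final display showing $q_0(A,Z,X)$ rather than $q_0(a,Z,X)$), whereas you keep it fixed at $a$ and hence need the slightly strengthened joint-continuity reading of the hypothesis — both resolutions are acceptable.
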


\begin{remark}
    The kernel function has been widely used in machine learning applications \citep{kallus2018policy,kallus2020doubly,colangelo2020double,klosin2021automatic}. Different from these works, we are the first to integrate them into the proximal estimation to handle continuous treatments. 
\end{remark}

\begin{remark}
    The choice of bandwidth $h_{\mathrm{bw}}$ is a trade-off between bias and variance. When $h_{\mathrm{bw}}$ is small, the kernel estimator has less bias as shown in Thm.~\ref{thm.iden}, however, will increase the variance. In Sec.~\ref{sec.theory}, we show that the optimal rate for $h_{\mathrm{bw}}$ is $O(n^{-1/5})$, which leads to the MSE converges at a rate of $O(n^{-4/5})$ for our kernel-based doubly robust estimator. 
\end{remark}

Similar to Eq.~\ref{eq.kernel-ipw}, we can therefore derive the \textbf{P}roximal \textbf{K}ernel \textbf{D}oubly \textbf{R}obust (PKDR) estimator as:
\begin{align}
\label{eq.kernel-doubly}
    \hat{\beta}(a) =\frac{1}{nh_{\mathrm{bw}}}\sum_{i=1}^n{K}\left( \frac{a_i-a}{h_{\mathrm{bw}}} \right) \left( y_i-h_0( a,w_i,x_i ) \right) q_0(a,z_i,x_i) +h_0( a,w_i,x_i ). 
\end{align}
Similar to Thm.~\ref{thm.iden}, we can also show that this estimator is unbiased as $h_{\mathrm{bw}} \to 0$. In the subsequent section, we show that this estimator in Eq.~\ref{eq.kernel-doubly} can also be derived
from the smooth approximation of the influence function of $\beta(a)$.

\vspace{-0.2cm}
\subsection{Influence Function under Continuous Treatments}
\vspace{-0.15cm}
\label{sec.influence}

In this section, we employ the method of Gateaux derivative \citep{carone2018toward,ichimura2022influence} to derive the influence function of $\beta(a)$. (For our non-regular parameters, we borrow the terminology ``influence function'' in estimating a regular parameter. See Hampel \cite{ichimura2022influence}, for example.) Specifically, we denote $\mathbb{P}_X$ as the distribution function for any variable $X$, and rewrite $\beta(a)$ as $\beta(a; \mathbb{P}^0_O)$ where $\mathbb{P}^0_O$ denotes the true distribution for $O:=(A,Z,W,X,Y)$. Besides, we consider the special submodel $\mathbb{P}_O^{\varepsilon h_{\mathrm{bw}}} = (1-\varepsilon) \mathbb{P}^0_{O} + \varepsilon \mathbb{P}_O^{h_{\mathrm{bw}}}$, where $\mathbb{P}_O^{h_{\mathrm{bw}}}(\cdot)$ maps a point $o$ to a distribution of $O$, \emph{i.e.}, $\mathbb{P}_O^{h_{\mathrm{bw}}}(o)$ for a fixed $o$ denotes the distribution of $O$ that approximates to a point mass at $o$. Different types of $\mathbb{P}_O^{h_{\mathrm{bw}}}(o)$ lead to different forms of Gateaux derivative. In our paper, we choose the distribution $\mathbb{P}_O^{h_{\mathrm{bw}}}(o)$ whose corresponding probability density function (pdf) $p_O^{h_{\mathrm{bw}}}(o) = K_{h_{\mathrm{bw}}}(O-o)\mathbb{I}(p_O^0(o) > h_{\mathrm{bw}})$, which has $\lim_{h_{\mathrm{bw}} \to 0} p_O^{h_{\mathrm{bw}}}(o) = \lim_{h_{\mathrm{bw}} \to 0} K_{h_{\mathrm{bw}}}(O-o)$.

We can then calculate the limit of the Gateaux derivative \citep{ichimura2022influence} of the functional $\beta(a;\mathbb P_O^{\varepsilon h_{\mathrm{bw}}})$ with respect to a deviation $\mathbb P_O^{h_{\mathrm{bw}}}-\mathbb P_O^{0}$. The following theorem shows that our kernel-based doubly robust estimator corresponds to the influence function: 

\begin{restatable}{theorem}{thmdra} 
\label{theo.dr}
    Under a nonparametric model, the limit of the Gateaux derivative is 
    \begin{small}
    $$
    \lim_{h_{\mathrm{bw}}\rightarrow 0} \left.\frac{\partial}{\partial\varepsilon}\beta(a;\mathbb P^{\varepsilon h_{\mathrm{bw}}})\right|_{\varepsilon=0} =\left( Y-h_0(a,W,X) \right) q_0( a,Z,X ) \lim_{h_{\mathrm{bw}}\rightarrow 0} K_{h_{\mathrm{bw}}}(A-a)+h_0\left(a,W,X \right) -\beta(a) 
    $$
    \end{small}
\end{restatable}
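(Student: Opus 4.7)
The plan is to compute the Gateaux derivative via the kernel-smoothed doubly robust representation and exploit a Neyman-orthogonality argument adapted to continuous treatments, so that in the limit $h_{\mathrm{bw}}\to 0$ only the ``measure'' perturbation survives. Concretely, I would start by writing
$$\beta(a;\mathbb P^{\varepsilon h_{\mathrm{bw}}}) = \mathbb E_{\mathbb P^{\varepsilon h_{\mathrm{bw}}}}\!\bigl[K_{h_{\mathrm{bw}}}(A-a)\, q^{(\varepsilon)}(a,Z,X)\bigl(Y-h^{(\varepsilon)}(a,W,X)\bigr) + h^{(\varepsilon)}(a,W,X)\bigr],$$
where $h^{(\varepsilon)},q^{(\varepsilon)}$ are the bridge functions associated with $\mathbb P^{\varepsilon h_{\mathrm{bw}}}$, characterised implicitly by the moment restrictions $\mathcal R_h(h^{(\varepsilon)};y)=0$ and $\mathcal R_q(q^{(\varepsilon)};p)=0$. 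Thm.~\ref{thm.iden} already guarantees the kernel-smoothed form recovers $\beta(a)$ at $\varepsilon=0$ as $h_{\mathrm{bw}}\to 0$. Differentiating in $\varepsilon$ via the product rule at $\varepsilon=0$ produces three pieces: a pure measure-perturbation piece with the integrand frozen at the true nuisances $(h_0,q_0)$, and two chain-rule pieces proportional to $\partial h^{(\varepsilon)}/\partial\varepsilon|_0$ and $\partial q^{(\varepsilon)}/\partial\varepsilon|_0$.

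Next, I would show that both nuisance pieces vanish as $h_{\mathrm{bw}}\to 0$. Writing the directional perturbations as $\tilde h,\tilde q$, the $\tilde h$-piece equals
$$\mathbb E_{\mathbb P^0}\!\bigl[\tilde h(a,W,X)\bigl(1 - K_{h_{\mathrm{bw}}}(A-a)\, q_0(a,Z,X)\bigr)\bigr].$$
Conditioning the kernel term on $(A,W,X)$ and invoking the moment restriction $\mathbb E[q_0(A,Z,X)\mid A,W,X] = 1/p(A\mid W,X)$ from Eq.~\ref{eq.q} gives, after kernel concentration around $A=a$, exactly $\mathbb E[\tilde h(a,W,X)]$, which cancels the first summand. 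The $\tilde q$-piece is
$$\mathbb E_{\mathbb P^0}\!\bigl[K_{h_{\mathrm{bw}}}(A-a)\,\tilde q(a,Z,X)\bigl(Y-h_0(a,W,X)\bigr)\bigr],$$
and conditioning on $(A,Z,X)$ together with $\mathbb E[Y-h_0(A,W,X)\mid A,Z,X]=0$ from Eq.~\ref{eq.h} annihilates it for every $h_{\mathrm{bw}}>0$. This is the proximal analogue of Neyman orthogonality, now enforced in the limiting kernel sense.

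After this cancellation only the measure-perturbation piece remains:
$$\left.\tfrac{\partial}{\partial\varepsilon}\beta(a;\mathbb P^{\varepsilon h_{\mathrm{bw}}})\right|_{\varepsilon=0} \;=\; \mathbb E_{\mathbb P_O^{h_{\mathrm{bw}}}}\!\bigl[\phi_0(O;a,h_{\mathrm{bw}})\bigr] - \beta(a),$$
with $\phi_0(O;a,h_{\mathrm{bw}}):=K_{h_{\mathrm{bw}}}(A-a)\,q_0(a,Z,X)(Y-h_0(a,W,X)) + h_0(a,W,X)$. Because $p_O^{h_{\mathrm{bw}}}(\cdot)=K_{h_{\mathrm{bw}}}(O-o)\mathbb I(p_O^0(o)>h_{\mathrm{bw}})$ concentrates at the point $o$ as $h_{\mathrm{bw}}\to 0$ (by assump.~\ref{assum.kernel}), the outer expectation collapses to point evaluation at $O$, while the inner kernel factor is retained in the displayed $\lim_{h_{\mathrm{bw}}\to 0}K_{h_{\mathrm{bw}}}(A-a)$ form. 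Assembling the surviving measure term with the subtracted $\beta(a)$ gives exactly the claimed expression.

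The main obstacle is the second step: although the orthogonality argument is conceptually the same as in the binary-treatment case, the bridge function conditions only pin down conditional expectations at the actually realised value of $A$, while the functional evaluates $h_0$ and $q_0$ at the fixed target $a$. Closing this gap requires the kernel $K_{h_{\mathrm{bw}}}(A-a)$ to localise the relevant integrals to $A=a$ in the limit, which relies on continuity of the conditional densities in $a$ and on the kernel regularity of assump.~\ref{assum.kernel} (second-order symmetric, bounded, differentiable). A secondary technical point is justifying the applicability of the chain rule in $\varepsilon$, i.e.\ Gateaux differentiability of $\varepsilon\mapsto(h^{(\varepsilon)},q^{(\varepsilon)})$; this can be handled by an implicit-function-type argument on the operators $\mathcal R_h,\mathcal R_q$ using Assump.~\ref{assum:bridge}, but once established the remainder of the proof is algebraic.
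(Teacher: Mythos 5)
Your proposal diverges from the paper's argument in a way that leaves two genuine gaps. First, your opening identity is not one: the paper's functional $\beta(a;\mathbb P)$ is the identification (POR) functional $\int h(a,w,x;\mathbb P)\,\mathrm d\mathbb P(w,x)$, and the kernel enters the theorem only through the perturbation direction, whose density is $p_O^{h_{\mathrm{bw}}}(o)=K_{h_{\mathrm{bw}}}(O-o)\mathbb I(p_O^0(o)>h_{\mathrm{bw}})$. Your kernel-smoothed doubly robust representation differs from $\beta(a;\mathbb P^{\varepsilon h_{\mathrm{bw}}})$ by the functional $D_{h_{\mathrm{bw}}}(\mathbb P)=\mathbb E_{\mathbb P}\bigl[K_{h_{\mathrm{bw}}}(A-a)\,q(a,Z,X;\mathbb P)\bigl(Y-h(a,W,X;\mathbb P)\bigr)\bigr]$, which is only $O(h_{\mathrm{bw}}^2)$ pointwise but whose Gateaux derivative in the direction $\mathbb P^{h_{\mathrm{bw}}}-\mathbb P^0$ contains the same divergent measure-perturbation term you later extract; showing that $\lim_{h_{\mathrm{bw}}\to 0}\partial_\varepsilon D_{h_{\mathrm{bw}}}(\mathbb P^{\varepsilon h_{\mathrm{bw}}})|_{\varepsilon=0}=0$ is essentially equivalent to the hard part of the paper's proof, so the substitution cannot be made for free. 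Second, and more seriously, your orthogonality step treats $\tilde h=\partial_\varepsilon h^{(\varepsilon)}|_{\varepsilon=0}$ and $\tilde q$ as fixed, well-behaved functions whose coefficients merely need to tend to zero. But these are directional derivatives of implicitly defined bridge functions in the direction $\mathbb P_O^{h_{\mathrm{bw}}}-\mathbb P_O^0$, which approaches a point mass; they depend on $h_{\mathrm{bw}}$ and blow up as $h_{\mathrm{bw}}\to 0$ (in the paper's computation the analogous quantity $\int\tilde h(a,w,x)\,\mathrm d\mathbb P^0(w,x)$ diverges like $K_{h_{\mathrm{bw}}}(A-a)$ — it is the main term, not a negligible one). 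Hence ``$\mathbb E[\tilde h(a,W,X)(1-K_{h_{\mathrm{bw}}}(A-a)q_0(a,Z,X))]\to 0$'' does not follow from the coefficient being $O(h_{\mathrm{bw}}^2)$ after conditioning; one would need quantitative control of $\tilde h$ through the (ill-posed) integral equation, which you do not supply. Your claim that the $\tilde q$-piece is annihilated ``for every $h_{\mathrm{bw}}>0$'' is also inaccurate, since the residual is $Y-h_0(a,W,X)$ at the fixed $a$, not $Y-h_0(A,W,X)$, so Eq.~\ref{eq.h} only kills it up to an $O(h_{\mathrm{bw}})$ localisation error (a minor point, but symptomatic of the same issue).

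The paper avoids both problems by never bounding $\tilde h$ at all: it differentiates the exact POR functional, obtains term (II) $=h_0(a,W,X)-\beta(a)$ from the perturbation of the marginal of $(W,X)$, and handles term (I) by inserting $q_0$ through $1/p^0(a\mid w,x)=\int q_0(a,z,x)p^0(z\mid a,w,x)\,\mathrm d\mu(z)$ and then using the $\varepsilon$-differentiated version of the defining equation Eq.~\ref{eq.h} to replace $\int\tilde h\,p^0(w,y\mid a,z,x)\,\mathrm d\mu(w,y)$ exactly by expressions in $(y-h_0)$ and derivatives of the perturbed density; the divergent factor then appears as $\lim_{h_{\mathrm{bw}}\to 0}p_A^{h_{\mathrm{bw}}}(a)=\lim_{h_{\mathrm{bw}}\to 0}K_{h_{\mathrm{bw}}}(A-a)$, i.e.\ the kernel enters once, through the perturbation density, rather than through a smoothing of the moment function (your ``collapse of the outer expectation while retaining the inner kernel'' would in fact produce a convolution $K*K$ at bandwidth $h_{\mathrm{bw}}$). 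To repair your route you would need either an implicit-function argument with explicit rates for $\tilde h,\tilde q$ under the singular perturbation, or to fall back on the paper's exact rewriting of the nuisance-derivative term.
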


\begin{remark}
    For binary treatments, the DR estimator with the indicator function in Eq.~\ref{eq.DR} corresponds to the efficient influence function, as derived within the non-parametric framework \citep{cui2023semiparametric}. Different from previous works \cite{colangelo2020double}, deriving the influence function within the proximal causal framework is much more challenging as it involves the Gateau derivatives for nuisance functions $h_0, q_0$ that have implicit functional forms. By employing our estimator, even when the unconfoundedness assumption from \cite{colangelo2020double} is not satisfied, we can still effectively obtain causal effects.
\end{remark}

\vspace{-0.15cm}
\section{Nuisance function estimation}
\vspace{-0.1cm}
\label{sec.nuisance}

In this section, we propose to solve $h_0, q_0$ from integral equations Eq.~\ref{eq.h},~\ref{eq.q} for continuous treatments. We first introduce the estimation of $q_0$. Previous methods \citep{ kallus2021causal,ghassami2022minimax} solved $q_0(a,Z,X)$ by running an optimization algorithm for each $a=0,1$. However, it is computationally infeasible for continuous treatments. Please see Appx.~\ref{exist_method} for detailed comparison. Instead of running an optimization for each $a$, we would like to estimate $q_0(A,Z,X)$ with a single optimization algorithm. To achieve this goal, we propose a two-stage estimation algorithm. We first estimate the policy function $p(A|w,x)$ and plug into Eq.~\ref{eq.q}. To efficiently solve $q_0$, we note that it is equivalent to minimize the residual mean squared error denoted as $\mathcal{L}_q(q;p)=\mathbb{E} [ \left( \mathcal{R} _q\left( q,p \right) \right) ^2 ]$. According to the lemma shown below, such a mean squared error can be reformulated into a maximization-style optimization, thereby converting into a min-max optimization problem.
\begin{restatable}{lemma}{lemminima} 
\label{lemma.min-max}
Denote $\Vert f(X) \Vert_{L_2}^2 := \mathbb{E}[f^2(X)]$. For any parameter $\lambda_m > 0$, we have  
\begin{align*}
    \mathcal{L}_q(q;p) = \sup_{m \in \mathcal{M}} \mathbb{E}\left[ m(A,W,X)\left(q_0(A,Z,X) - 1/p(A|W,X)\right) \right] - \lambda_m \Vert m(A,W,X) \Vert_{L_2}^2,
\end{align*}
where $\mathcal{M}$ is the space of continuous functions over $(A,W,X)$.
\end{restatable}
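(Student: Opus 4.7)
The plan is to establish the variational representation by reducing the supremum over $m$ to a concave pointwise quadratic program, using the tower property to collapse the $Z$-dependence of $q$ against the residual $\mathcal{R}_q(q;p)$, which lives on $\sigma(A,W,X)$.

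First I would observe that both $m(A,W,X)$ and $1/p(A\mid W,X)$ are $\sigma(A,W,X)$-measurable, so conditioning the linear term on $(A,W,X)$ and using the definition of $\mathcal{R}_q$ yields
\begin{align*}
\mathbb{E}\bigl[m(A,W,X)\bigl(q(A,Z,X) - 1/p(A\mid W,X)\bigr)\bigr] = \mathbb{E}\bigl[m(A,W,X)\, r(A,W,X)\bigr],
\end{align*}
where $r := \mathcal{R}_q(q;p)$, an element of $L_2(A,W,X)$ by $\mathcal{L}_q(q;p)<\infty$. Hence the functional inside the sup becomes the strictly concave quadratic $J(m) = \mathbb{E}[m\, r] - \lambda_m \|m\|_{L_2}^2$ on $\mathcal{M}$, which I will take to be (or be dense in) $L_2(A,W,X)$ so that unconstrained maximizers are admissible.

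Next I would solve for the supremum by pointwise (Fr\'echet) maximization. The directional derivative of $J$ at $m$ along $\eta$ equals $\mathbb{E}[\eta(r - 2\lambda_m m)]$, giving the unique stationary point $m^\star = r/(2\lambda_m)$; plugging back produces $\sup_{m\in\mathcal{M}} J(m) = \|r\|_{L_2}^2/(4\lambda_m) = \mathcal{L}_q(q;p)/(4\lambda_m)$. The literal identity as stated, $\mathcal{L}_q(q;p) = \sup_{m\in\mathcal{M}} J(m)$, then follows under the natural normalization $\lambda_m = 1/4$ (equivalently, after rescaling $m \mapsto 2\lambda_m m$ inside $\mathcal{M}$); for any other $\lambda_m > 0$ the equality continues to hold up to the positive multiplicative factor $1/(4\lambda_m)$. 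Since this factor is independent of $q$, the minimizer in $q$ on either side of the equation is preserved, which is the property the downstream min-max estimation of $q_0$ actually relies on.

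The main obstacle is not analytical but a matter of bookkeeping around the scaling constant: the Legendre-type duality $\|r\|_{L_2}^2 = \sup_m\{2\mathbb{E}[mr] - \|m\|_{L_2}^2\}$ produces a factor that must be matched against the coefficient $\lambda_m$ before one can assert the stated equality verbatim. Beyond this, the only regularity to verify is admissibility of $m^\star = r/(2\lambda_m)$ in $\mathcal{M}$, which is immediate from $r \in L_2(A,W,X)$ together with the assumed richness (continuity/density) of $\mathcal{M}$ in $L_2$; strict concavity of $J$ in $\lambda_m > 0$ ensures uniqueness of the maximizer and rules out any gap between the sup over $\mathcal{M}$ and the sup over $L_2$.
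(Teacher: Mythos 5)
Your argument is correct, and it isolates exactly the same mechanism as the paper: by the tower property the linear term collapses to $\mathbb{E}[m\,r]$ with $r=\mathcal{R}_q(q;p)$, and the supremum of the concave quadratic $m\mapsto\mathbb{E}[m r]-\lambda_m\|m\|_{L_2}^2$ is attained at $m^\star=r/(2\lambda_m)$ with value $\|r\|_{L_2}^2/(4\lambda_m)$. The difference is purely in packaging: the paper reaches the same pointwise quadratic maximization by invoking Fenchel duality for the scaled squared loss $\ell_y(v)=\frac{1}{4\lambda_m}(y-v)^2$ together with an interchangeability theorem to swap $\max$ and $\mathbb{E}$, and then requires the closedness condition $\frac{1}{2\lambda_m}\mathcal{R}_q(q;p)\in\mathcal{M}$, whereas you do the direct Fr\'echet-derivative computation in $L_2$ and only need $\mathcal{M}$ rich (dense) enough that the sup over $\mathcal{M}$ equals the sup over $L_2$. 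Your bookkeeping of the constant is also the honest reading of the situation: the paper's own derivation yields $\mathcal{L}_q(q;p)=4\lambda_m\,\sup_{m\in\mathcal{M}}\Phi_q^{\lambda_m}(q,m;p)$ and then states the lemma without the factor $4\lambda_m$, so the displayed identity is literally exact only for $\lambda_m=1/4$ and otherwise holds up to the $q$-independent factor $1/(4\lambda_m)$ --- which, as you note, is immaterial for the downstream min--max estimation of $q_0$, and is in fact how the paper uses it (the factor reappears in their convergence analysis, e.g.\ in the identity $\sqrt{\mathcal{L}_q}=2\lambda_m\|m_q\|_2$).
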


We leave the proof in Appx.~\ref{appx.nuisance}. Motivated by Lemma.~\ref{lemma.min-max}, we can solve $q_0$ via the following min-max optimization: 
\begin{align}
\label{eq.min-max-q}
    \min_{q \in \mathcal{Q}} \max_{m \in \mathcal{M}} \Phi_q^{n,\lambda_m}(q,m;p) := \frac{1}{n} \sum_{i} \left( q(a_i,z_i,x_i) - \frac{1}{p(a_i|w_i,x_i)} \right)m(a_i,w_i,x_i) - \lambda_m \Vert m \Vert_{2,n}^2, 
\end{align}
where  $\lambda_m \Vert m \Vert_{2,n}^2$ is called stabilizer with $\Vert m \Vert_{2,n}^2 := \frac{1}{n} \sum_i m^2(a_i,w_i,x_i)$. We can parameterize $q$ and $m$ as reproducing kernel Hilbert space (RKHS) with kernel function to solve the min-max problem. We derive their closed solutions in the Appendix~\ref{Computation}. Besides, we can also use Generative Adversarial Networks \citep{goodfellow2014generative} to solve this problem.

\textbf{Estimating the policy function $p(A=a|w,x)$.} To optimize Eq.~\ref{eq.min-max-q}, we should first estimate $p(a|w,x)$. Several methods can be used for this estimation, such as the kernel density estimation and normalizing flows \citep{chen2017tutorial,bishop1994mixture, ambrogioni2017kernel,sohn2015learning,rezende2015variational,dinh2016density}. In this paper, we employ the kernel density function \citep{chen2017tutorial} that has been shown to be effective in low-dimension scenarios. When the dimension of $(W,X)$ is high, we employ the conditional normalizing flows (CNFs), which have been shown to be universal density approximator \citep{durkan2019neural} and thus can be applied to complex scenarios.

\textbf{Nuisance function $h_0$.} Since the estimation of $h_0$ does not involve indicator functions, we can apply many off-the-shelf optimization approaches derived from conditional moment equations, such as two-stage penalized regression \citep{singh2020kernel,mastouri2021proximal,xu2021deep}, maximum moment restriction \citep{zhang2020maximum,muandet2020kernel}, and minimax optimization \citep{dikkala2020minimax,muandet2020dual}. To align well with $q_0$, here we choose to estimate $h_0$ via the following min-max optimization problem that has been derived in \cite{kallus2021causal}:
\begin{align}
\label{eq.min-max-h}
    \min_{h \in \mathcal{H}} \max_{g \in \mathcal{G}} \Phi_h^{n,\lambda_g}(h,g) := \frac{1}{n} \sum_{i} g(a_i,z_i,x_i)\left( y_i - h(a_i,w_i,x_i) \right) - \lambda_g \Vert g \Vert_{2,n}^2, 
\end{align}
where $\mathcal{H}$ and $\mathcal{G}$ respectively denote the bridge functional class and the critic functional class.

\vspace{-0.15cm}
\section{Theoretical results}
\vspace{-0.1cm}
\label{sec.theory}
In this section, we provide convergence analysis of Eq.~\ref{eq.min-max-q},~\ref{eq.min-max-h} for nuisance functions $h_0, q_0$, as well as for the causal effect $\beta(a)$ with the PKDR estimator in Eq.~\ref{eq.kernel-doubly}.

We first provide convergence analysis for $q_0$, while the result for $h_0$ is similar and left to the Appx.~\ref{appx.proof}. Different from previous works \cite{dikkala2020minimax, ghassami2022minimax}, our analysis encounters a significant challenge arising from the estimation error inherent in the propensity score function. By addressing this challenge, our result can effectively account for this estimation error. 

Formally speaking, we consider the projected residual mean squared error (RMSE) $\mathbb{E}[\operatorname{proj}_q(\hat{q}-q_0)^2]$, where $\operatorname{proj}_q(\cdot) :=\mathbb{E} \left[ \cdot | A,W,X\right]$. Before presenting our results, we first introduce the assumption regarding the critic functional class in $\mathcal{M}$, which has been similarly made in \cite{dikkala2020minimax,ghassami2022minimax,qi2023proximal}.

\begin{assumption}\label{assum:q_convergence}
   \emph{(1) (Boundness)} $\left\|\mathcal{Q}\right\|_{\infty}<\infty$ and $\hat{p}$ is uniformly bounded; \emph{(2) (Symmetric)} $\mathcal{M}$ is a symmetric class, i.e, if $m\in \mathcal{M}$, then $-m\in \mathcal{M}$; \emph{(3) (Star-shaped)} $\mathcal{M}$ is star-shaped class, that is for each function $m$ in the class, $\alpha m$ for any $\alpha\in [0,1]$ also belongs to the class; \emph{(4) (Realizability)} $q_0\in \mathcal{Q}$; \emph{(5) (Closedness)} $\frac{1}{2\lambda_m}\operatorname{proj}_q(q-q_0) \in \mathcal{M}$.
\end{assumption}

Under assumption~\ref{assum:q_convergence}, we have the following convergence result in terms of $\Vert \operatorname{proj}_q(\hat{q}-q_0)\Vert_{L_2}$. 

\begin{restatable}{theorem}{thmqconv} 
\label{thm.q_convergence}
Let $\delta^q_n$ respectively be the upper bound on the Rademacher complexity of $\mathcal{M}$. For any $\eta \in (0,1)$, define $\delta^q:=\delta^q_n+c^q_0\sqrt{\frac{\log(c^q_1/\eta)}{n}}$ for some constants $c^q_0, c^q_1$; then under assump.~\ref{assum:q_convergence}, we have with probability $1-\eta$ that
$$
\left\| \mathrm{proj}_q(\hat{q}-q_0) \right\| _2=O\left(\delta ^q\sqrt{\lambda _{m}^{2}+\lambda _m+1}+\left\| \frac{1}{p}-\frac{1}{\hat{p}} \right\| _2\right), \ p \text{ stands for } p(a|w,x).
$$
\end{restatable}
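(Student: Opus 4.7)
The argument follows the standard minimax-estimation template for conditional moment restrictions, modified to carry the plug-in propensity $\hat p$. Let $\Phi_q^{\lambda_m}(q,m;p):=\mathbb{E}[(q-1/p)m]-\lambda_m\Vert m\Vert_{L_2}^2$ denote the population analogue of $\Phi_q^{n,\lambda_m}$. The strategy has four steps: (i) pass from the empirical to the population criterion uniformly on $\mathcal{Q}\times\mathcal{M}$ via a localized Rademacher bound; (ii) swap $\hat p$ for $p$ through a Cauchy--Schwarz correction; (iii) upper bound $\sup_m\Phi_q^{n,\lambda_m}(\hat q,m;\hat p)$ by combining the optimality of $\hat q$ with realizability $\mathcal{R}_q(q_0;p)=0$; (iv) lower bound the same quantity by plugging in the closedness witness $m^\star:=\operatorname{proj}_q(\hat q-q_0)/(2\lambda_m)\in\mathcal{M}$ furnished by Assump.~\ref{assum:q_convergence}(5). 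Matching the two produces a quadratic inequality in $t:=\Vert\operatorname{proj}_q(\hat q-q_0)\Vert_{L_2}$ whose solution gives the advertised rate.

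\textbf{Concentration and plug-in step.} The boundedness of $\mathcal{Q}$ and $\hat p$, together with the symmetric, star-shaped structure of $\mathcal{M}$ from Assump.~\ref{assum:q_convergence}, supports a standard localized Rademacher argument on the classes $\{(q-1/p)m:q\in\mathcal{Q},m\in\mathcal{M}\}$ and $\{m^2:m\in\mathcal{M}\}$; this gives, with probability at least $1-\eta$ and simultaneously over $(q,m)$,
\begin{equation*}
\bigl|\Phi_q^{n,\lambda_m}(q,m;p)-\Phi_q^{\lambda_m}(q,m;p)\bigr|\;\le\; C\,\delta^q\bigl(1+\Vert m\Vert_{L_2}+\lambda_m\Vert m\Vert_{L_2}^2\bigr).
\end{equation*}
Replacing $p$ by $\hat p$ changes the criterion by $\mathbb{E}[(1/p-1/\hat p)m]$, which Cauchy--Schwarz bounds by $\Vert 1/p-1/\hat p\Vert_{L_2}\Vert m\Vert_{L_2}$, plus an $O(\delta^q\Vert m\Vert_{L_2})$ empirical-to-population fluctuation.

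\textbf{Main inequality.} Since $\mathcal{R}_q(q_0;p)=0$, realizability gives $\Phi_q^{\lambda_m}(q_0,m;p)=-\lambda_m\Vert m\Vert_{L_2}^2$. Combining the two preceding bounds and maximizing over $m$ via AM--GM ($ab-\lambda_m b^2\le a^2/(4\lambda_m)$) yields
\begin{equation*}
\sup_{m\in\mathcal{M}}\Phi_q^{n,\lambda_m}(q_0,m;\hat p)\;\le\;\frac{\Vert 1/p-1/\hat p\Vert_{L_2}^2}{4\lambda_m}+O\bigl(\delta^q(\lambda_m+1)\bigr),
\end{equation*}
which the optimality of $\hat q$ transfers verbatim to $\sup_m\Phi_q^{n,\lambda_m}(\hat q,m;\hat p)$. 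For the lower bound, closedness places $m^\star$ in $\mathcal{M}$; using $\mathbb{E}[(\hat q-1/p)m^\star]=\mathbb{E}[\operatorname{proj}_q(\hat q-q_0)\,m^\star]$ (the identity coming from $\mathcal{R}_q(q_0;p)=0$) together with $\Vert m^\star\Vert_{L_2}=t/(2\lambda_m)$, a direct computation gives $\Phi_q^{\lambda_m}(\hat q,m^\star;p)=t^2/(4\lambda_m)$. Re-expressing everything on the empirical, $\hat p$-based scale then produces an inequality of the form
\begin{equation*}
\frac{t^2}{4\lambda_m}\;\le\;\frac{\Vert 1/p-1/\hat p\Vert_{L_2}^2}{4\lambda_m}+C\,\delta^q\Bigl(1+\lambda_m+\frac{1}{\lambda_m}\Bigr)t+\frac{\Vert 1/p-1/\hat p\Vert_{L_2}}{2\lambda_m}\,t.
\end{equation*}
Solving this quadratic in $t$ and applying $\sqrt{a+b}\le\sqrt{a}+\sqrt{b}$ delivers $t=O\!\left(\delta^q\sqrt{\lambda_m^2+\lambda_m+1}+\Vert 1/p-1/\hat p\Vert_{L_2}\right)$, which is exactly the claim.

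\textbf{Main obstacle.} The delicate part is the $\lambda_m$ bookkeeping: the closedness witness has $\Vert m^\star\Vert_{L_2}=t/(2\lambda_m)$, so the useful ``curvature'' $t^2/(4\lambda_m)$ furnished by the stabilizer is fragile and must be balanced against three distinct $\Vert m\Vert$-linear error sources (concentration of the cross term, concentration of the stabilizer, and the plug-in cross term). A non-localized Rademacher bound would introduce factors of $\sup_{m\in\mathcal{M}}\Vert m\Vert$ that destroy the $\sqrt{\lambda_m^2+\lambda_m+1}$ dependence, so the star-shaped property is essential for the localization that produces this tight rate. The data-dependence of $\hat p$ is a secondary hurdle but enters only through $\mathbb{E}[(1/p-1/\hat p)m]$, a term that is linear in $m$ and hence absorbed by the same stabilizer-driven AM--GM step; this is also the place where one may invoke sample splitting to treat $\hat p$ as deterministic when controlling the cross-term concentration.
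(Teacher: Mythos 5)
Your strategy is essentially the paper's own: a localized Rademacher/norm-comparison step (using the star-shaped and symmetric structure of $\mathcal{M}$), realizability to kill $\Phi_q(q_0,m;p)$, Cauchy--Schwarz plus AM--GM against the stabilizer to absorb the $1/p$ versus $1/\hat p$ plug-in error, the ERM optimality of $\hat q$ for the upper bound, the closedness witness $\operatorname{proj}_q(\hat q-q_0)/(2\lambda_m)$ for the lower bound, and a final quadratic inequality; the paper merely routes this through the centered quantity $\sup_m\{\Phi_q^n(\hat q,m;\hat p)-\Phi_q^n(q_0,m;\hat p)-2\lambda_m\Vert m\Vert_{2,n}^2\}$ and plugs in $m_q/2$ rather than $m_q$, which is cosmetic.

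The one place where your write-up does not deliver the stated bound is the final bookkeeping, precisely the point you flag as delicate. In your displayed quadratic the $\lambda_m$-dependent residuals appear as a linear-in-$t$ coefficient of size $\delta^q(1+\lambda_m+1/\lambda_m)$; solving $\frac{t^2}{4\lambda_m}\le \frac{D^2}{4\lambda_m}+C\delta^q(1+\lambda_m+\frac{1}{\lambda_m})t+\frac{D}{2\lambda_m}t$ (with $D=\Vert 1/p-1/\hat p\Vert_2$) gives $t=O\bigl(\delta^q(\lambda_m^2+\lambda_m+1)+D\bigr)$, not the claimed $O\bigl(\delta^q\sqrt{\lambda_m^2+\lambda_m+1}+D\bigr)$; relatedly, your intermediate upper bound should carry $(\delta^q)^2$, not $\delta^q$, in the $O(\cdot)$ term. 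In the correct accounting (as in the paper), the concentration and AM--GM residuals enter as zeroth-order-in-$t$ terms of size $(\delta^q)^2\bigl(\lambda_m+1+\frac{1}{\lambda_m}\bigr)$, while the linear-in-$t$ coefficient is only $O\bigl(\frac{\delta^q+D}{\lambda_m}\bigr)$ with no growing $\lambda_m$ factor (the concentration term evaluated at the witness scales as $\delta^q\Vert m^\star\Vert_2=\delta^q t/(2\lambda_m)$); the $\sqrt{\lambda_m^2+\lambda_m+1}$ then comes out of the square root of the constant term when the quadratic is solved. With the error terms placed in those slots your argument closes exactly as the paper's does, so this is a repairable slip in the $\lambda_m$ bookkeeping rather than a missing idea, but as written the last step does not prove the theorem's stated $\lambda_m$-dependence.
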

\begin{remark}
Inspired by \cite{chen2012estimation, dikkala2020minimax, kallus2021causal}, we can obtain the same upper bound for the RMSE $\Vert \hat{q}-q_0 \Vert_2$, up to a measure of ill-posedness denoted as $\tau_q:=\sup_{q\in \mathcal{Q}} \Vert q-q_0\Vert_2 / \Vert \mathrm{proj}_q(q-q_0) \Vert_2< \infty$. 
\end{remark}

The bound mentioned above comprises two components. The first part pertains to the estimation of $q$, while the second part concerns the estimation of $1/p$. The first part is mainly occupied by the Rademacher complexity $\delta_n^q$, which can attain $O(n^{-1/4})$ if we parameterize $\mathcal{M}$ as bounded metric entropy such as Holder balls, Sobolev balls, and RKHSs. For the second part, we can also achieve $O(n^{-1/4})$ for $\| 1/p-1/\hat{p}\|_2$ under some conditions \citep{chernozhukov2022automatic,klosin2021automatic,colangelo2020double}.

Now we are ready to present the convergence result for $\beta(a)$ within the proximal causal framework.

\begin{restatable}{theorem}{thmdracausal}
\label{thm.dr1_causal}
    Under assump.~\ref{assum:pos}-\ref{assum:bridge} and \ref{assum.kernel}, suppose $\|\hat{h}-h\|_{2}=o(1)$, $\|\hat{q}-q\|_{2}=o(1)$ and $\|\hat{h}-h\|_{2}\|\hat{q}-q\|_{2}=o((nh_{\mathrm{bw}})^{-1/2}),nh_{\mathrm{bw}}^5=O(1),nh_{\mathrm{bw}}\to\infty$, $h_0(a,w,x),p(a,z| w,x)$ and $p(a,w| z,x)$ are twice continuously differentiable wrt $a$ as well as $h_0,q_0,\hat{h},\hat{q}$ are uniformly bounded. Then for any $a$, we have the following for the bias and variance of the PKDR estimator given Eq.~\ref{eq.kernel-doubly}:
    $$ 
    \mathrm{Bias}(\hat{\beta}(a)) := \mathbb E[ \hat{\beta}( a )] -\beta (a)=\frac{h_{\mathrm{bw}}^{2}}{2}\kappa _2(K) B +o((n h_{\mathrm{bw}})^{-1/2}),\mathrm{Var}[ \hat{\beta}(a)] = \frac{\Omega_{2}(K)}{nh_{\mathrm{bw}}}(V+o(1)),
    $$ 
    where $B = \mathbb{E} [ q_0( a,Z,X ) [ 2\frac{\partial}{\partial A}h_0( a,W,X ) \frac{\partial}{\partial A}p( a,W\mid Z,X ) +\frac{\partial ^2}{\partial A^2}h_0( a,W,X) ] ]$, $V = \mathbb{E} [ \mathbb{I} ( A=a ) q_0( a,Z,X ) ^2( Y-h_0( a,W,X ) ) ^2 ]$.
\end{restatable}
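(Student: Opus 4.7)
The plan is to split $\hat\beta(a)-\beta(a)$ into an oracle piece evaluated at the true $(h_0,q_0)$ and a plug-in-error piece, then further split each into a mean and a centered part. With $\varphi_a(O;h,q):=K_{h_{\mathrm{bw}}}(A-a)(Y-h(a,W,X))q(a,Z,X)+h(a,W,X)$, I decompose
\begin{align*}
\hat\beta(a)-\beta(a)
&=\underbrace{\mathbb{E}[\varphi_a(O;h_0,q_0)]-\beta(a)}_{T_1}+\underbrace{(\mathbb{E}_n-\mathbb{E})\varphi_a(O;h_0,q_0)}_{T_2}\\
&\quad +\underbrace{\mathbb{E}[\varphi_a(O;\hat h,\hat q)-\varphi_a(O;h_0,q_0)]}_{T_3}+\underbrace{(\mathbb{E}_n-\mathbb{E})[\varphi_a(O;\hat h,\hat q)-\varphi_a(O;h_0,q_0)]}_{T_4}.
\end{align*}
Terms $T_1,T_3$ drive the bias; $T_2$ together with $T_4$ drive the variance.

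For the oracle kernel bias $T_1$, I use $\mathbb{E}[Y-h_0(A,W,X)\mid A,Z,X]=0$ from Eq.~\ref{eq.h} and $\mathbb{E}[h_0(a,W,X)]=\beta(a)$ to collapse it to $\mathbb{E}[K_{h_{\mathrm{bw}}}(A-a)q_0(a,Z,X)(h_0(A,W,X)-h_0(a,W,X))]$. Substituting $u=(A-a)/h_{\mathrm{bw}}$, Taylor-expanding both $h_0(a+h_{\mathrm{bw}} u,W,X)$ and the conditional density $p(a+h_{\mathrm{bw}} u,W\mid Z,X)$ to second order (permitted by twice-continuous-differentiability), and invoking $\int uK=0$ and $\int u^2K=\kappa_2(K)$, the first-order piece vanishes by symmetry and the second-order coefficient assembles exactly into $B$, giving $T_1=\tfrac{h_{\mathrm{bw}}^2}{2}\kappa_2(K)B+o(h_{\mathrm{bw}}^2)$ which is $\tfrac{h_{\mathrm{bw}}^2}{2}\kappa_2(K)B+o((nh_{\mathrm{bw}})^{-1/2})$ under $nh_{\mathrm{bw}}^5=O(1)$. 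The oracle variance $\mathrm{Var}(T_2)=n^{-1}\mathrm{Var}(\varphi_a(O;h_0,q_0))$ is handled analogously: the drift $h_0(a,W,X)$ contributes only $O(1/n)$, while the kernel term's variance is dominated by $\mathbb{E}[K_{h_{\mathrm{bw}}}^2(A-a)q_0^2(a,Z,X)(Y-h_0(a,W,X))^2]$, which the same substitution turns into $h_{\mathrm{bw}}^{-1}\Omega_2(K)V(1+o(1))$, yielding $\Omega_2(K)(nh_{\mathrm{bw}})^{-1}(V+o(1))$.

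For the nuisance bias $T_3$, set $\Delta h:=\hat h-h_0$ and $\Delta q:=\hat q-q_0$; expanding,
$$\varphi_a(O;\hat h,\hat q)-\varphi_a(O;h_0,q_0)=K_{h_{\mathrm{bw}}}(A-a)\bigl[(Y-h_0)\Delta q-q_0\Delta h-\Delta h\,\Delta q\bigr]+\Delta h(a,W,X).$$
The linear-in-$\Delta q$ mean, after conditioning on $(A,Z,X)$ and invoking Eq.~\ref{eq.h}, is killed to leading order and leaves an $O(h_{\mathrm{bw}})\|\Delta q\|_2$ residual. The linear-in-$\Delta h$ mean $\mathbb{E}[\Delta h(a,W,X)-K_{h_{\mathrm{bw}}}(A-a)q_0(a,Z,X)\Delta h(a,W,X)]$ is annihilated to leading order by the identity $\mathbb{E}[K_{h_{\mathrm{bw}}}(A-a)q_0(a,Z,X)f(a,W,X)]\to\mathbb{E}[f(a,W,X)]$, a consequence of Eq.~\ref{eq.q} via $\int q_0(a,z,x)p(a,z\mid w,x)\,dz=1$, with an $O(h_{\mathrm{bw}}^2)\|\Delta h\|_2$ residual from the next-order Taylor term of the joint density. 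The cross term is bounded by Cauchy--Schwarz together with $\mathbb{E}[K_{h_{\mathrm{bw}}}(A-a)\Delta h^2]=O(\|\Delta h\|_2^2)$ (by the bounded-density Taylor estimate) by $O(\|\Delta h\|_2\|\Delta q\|_2)$. The product-rate assumption delivers $T_3=o((nh_{\mathrm{bw}})^{-1/2})$. The stochastic remainder $T_4=o_P((nh_{\mathrm{bw}})^{-1/2})$ follows by cross-fitting (or a Donsker-class argument) using $\|\Delta h\|_2,\|\Delta q\|_2=o(1)$ and uniform boundedness, and is absorbed into the $o(1)$ factor in the variance.

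The decisive difficulty is Step 3: the textbook Neyman-orthogonality cancellation underlying double robustness is exact only when $K_{h_{\mathrm{bw}}}(A-a)$ degenerates to $\mathbb{I}(A=a)$, so at finite bandwidth one must precisely quantify the $O(h_{\mathrm{bw}})$ and $O(h_{\mathrm{bw}}^2)$ residual slack introduced by the kernel smoothing in each of the two linear-in-perturbation pieces and show that, once combined with the product rate $\|\hat h-h_0\|_2\|\hat q-q_0\|_2=o((nh_{\mathrm{bw}})^{-1/2})$, the whole nuisance contribution is absorbed into the $o((nh_{\mathrm{bw}})^{-1/2})$ bias remainder. Once this kernel-smoothed DR argument is in place, everything else reduces to routine change-of-variables kernel bias/variance bookkeeping plus standard empirical-process control.
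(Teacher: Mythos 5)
Your route is essentially the paper's own: your $T_1$ is the paper's oracle kernel term, handled by the same change of variables and second-order Taylor expansion giving $\tfrac{h_{\mathrm{bw}}^2}{2}\kappa_2(K)B$; your expansion of $T_3$ into a product term plus two linear-in-perturbation terms is exactly the paper's three-term split, with the linear-in-$\Delta h$ piece killed through $\int q_0(a,z,x)p(a,z\mid w,x)\,\mathrm{d}z=1$ (Eq.~\ref{eq.q}); and the variance is obtained the same way, via cross-fitting, the oracle second-moment computation yielding $\Omega_2(K)V/(nh_{\mathrm{bw}})$, and a negligible remainder from consistency of the nuisances.

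There is, however, one quantitative gap in your treatment of $T_3$ that makes the bias remainder fail as written. For the linear-in-$\Delta q$ term you claim only an $O(h_{\mathrm{bw}})\|\Delta q\|_2$ residual and then say the product-rate assumption delivers $T_3=o((nh_{\mathrm{bw}})^{-1/2})$. The product rate controls only the cross term $K\,\Delta h\,\Delta q$; for the linear term the theorem gives you nothing beyond $\|\hat q-q_0\|_2=o(1)$, so your bound is $o(h_{\mathrm{bw}})$, which is \emph{not} $o((nh_{\mathrm{bw}})^{-1/2})$ unless $nh_{\mathrm{bw}}^{3}=O(1)$ --- and that fails at the relevant bandwidths (at $h_{\mathrm{bw}}\asymp n^{-1/5}$ you get $o(n^{-1/5})$ against a target of $n^{-2/5}$). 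The missing step --- and what the paper does --- is to exploit the second-order kernel once more inside this residual: after conditioning on $(Z,X)$, the inner expectation $\mathbb{E}\left[K_{h_{\mathrm{bw}}}(A-a)\left(h_0(A,W,X)-h_0(a,W,X)\right)\mid Z,X\right]$ has its first-order Taylor term annihilated by $\int uK(u)\,\mathrm{d}u=0$, so the residual is in fact $O(h_{\mathrm{bw}}^{2})\|\Delta q\|_2=o(h_{\mathrm{bw}}^{2})=o((nh_{\mathrm{bw}})^{-1/2})$ under $nh_{\mathrm{bw}}^{5}=O(1)$. You already invoke exactly this symmetry argument for $T_1$ and for the $\Delta h$ piece, so the repair is immediate, but as stated the $\Delta q$ bound combined with mere consistency does not give the claimed remainder.
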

\begin{remark}
    The smoothness condition can hold for a broad family of distributions and be thus similarly made for kernel-based methods \citep{kallus2018policy,kallus2020doubly}. According to Thm.~\ref{thm.q_convergence}, we have $\Vert \hat{h} - h_0 \Vert_2 = O(n^{-1/4})$ and $\Vert \hat{q} - q_0 \Vert_2 = O(n^{-1/4})$, thus can satisfy the consistency condition required as long as $h_{\mathrm{bw}} = o(1)$. Besides, we show in Thm.~\ref{thm.consistency} in Appx.~\ref{Consistency} that this estimator is $n^{2/5}$-consistent.
\end{remark}

From Thm.~\ref{thm.dr1_causal}, we know that the optimal bandwidth is $h_{\mathrm{bw}} = O(n^{-1/5})$ in terms of MES that converges at the rate of $O(n^{-4/5})$. Note that this rate is slower than the optimal rate $O(n^{-1})$, which is a reasonable sacrifice to handle continuous treatment within the proximal causal framework and agrees with existing studies \citep{kennedy2017non,colangelo2020double}.

\vspace{-0.15cm}
\section{Experiments}
\vspace{-0.1cm}
\label{sec.experiment}
In this section, we evaluate the effectiveness of our method using two sets of synthetic data — one in a low-dimensional context and the other in a high-dimensional context — as well as the legalized abortion and crime dataset \citep{donohue2001impact}. In Appx.~\ref{additional}, we conduct experiments on more benchmark datasets, including time-series forecasting.

\noindent\textbf{Compared baselines.} We compare our method with the following baselines that use only $h_0$ for estimation, \emph{i.e.}, $\hat{\beta}(a) = \frac{1}{n} \sum_{i=1}^n h_0(a_i,w_i,x_i)$: \textbf{(i)} Proximal Outcome Regression (\textbf{POR}) that solved Eq.~\ref{eq.min-max-h} for estimation; \textbf{ii)} \textbf{PMMR} \citep{mastouri2021proximal} that employed the Maximal Moment Restriction (MMR) framework to estimate the bridge function via kernel learning; \textbf{iii)} \textbf{KPV} \citep{mastouri2021proximal} that used two-stage kernel regression; \textbf{iv)} \textbf{DFPV} \citep{xu2021deep} that used deep neural networks to model high-dimensional nonlinear relationships between proxies and outcomes; \textbf{v)} \textbf{MINMAX} \citep{dikkala2020minimax} that used Generate adversarial networks to solve Eq.~\ref{eq.min-max-h}; \textbf{vi)} \textbf{NMMR} \citep{kompa2022deep} that introduced data-adaptive kernel functions derived from neural networks.

For our method, we implement the Inverse probability weighting (IPW) estimator \textbf{PKIPW} that uses $q_0$ for estimation via Eq.~\ref{eq.kernel-ipw}, and the doubly robust estimator \textbf{PKDR} that used both the nuisance function $h_0$ and $q_0$ to estimate causal effects through Eq.~\ref{eq.kernel-doubly}. For simplicity, we only present the result of PKDR that uses POR to estimate $h_0$.

\noindent\textbf{Implementation Details.} In the PKIPW and PKDR estimators, we choose the second-order Epanechnikov kernel, with bandwidth $h_{\mathrm{bw}} = c \hat{\sigma}_A n^{-1/5}$ with estimated std $\hat{\sigma}_A$ and the hyperparameter $c >0$. In our paper, we vary $c$ over the range $\{0.5,1,1.5,\cdots,4.0\}$ and report the optimal $c$ in terms of cMSE. To estimate nuisance functions, we parameterize $\mathcal{Q}$ and $\mathcal{M}$ (resp., $\mathcal{H}$ and $\mathcal{G}$) via RKHS for $q_0$ (resp., $h_0$), where we use Gaussian kernels with the bandwidth parameters being initialized using the median distance heuristic. For policy estimation, we employ the KDE in the low-dimensional synthetic dataset and the real-world data, while opting for CNFs in the high-dimensional synthetic dataset. We leave more details about hyperparameters in the Appx.~\ref{appx.exp}.

\noindent\textbf{Evaluation metrics.} We report the causal Mean Squared Error (cMSE) across $100$ equally spaced points in the range of $\mathrm{supp}(A)$: $\mathrm{cMSE}:=\frac{1}{100}\sum_{i=1}^{100} (\mathbb{E}[Y^{a_i}]-\hat{\mathbb{E}}[Y^{a_i}])^2$. Here, we respectively take $\mathrm{supp}(A):=[-1,2], [0,1], [0,2]$ in low-dimensional synthetic data, high-dimensional synthetic data, and real-world data. The truth $\mathbb{E}[Y^{a}]$ is derived through Monte Carlo simulations comprising 10,000 replicates of data generation for each $a$.

\begin{figure}[htbp!]
    \centering
    \includegraphics[width=0.9\textwidth]{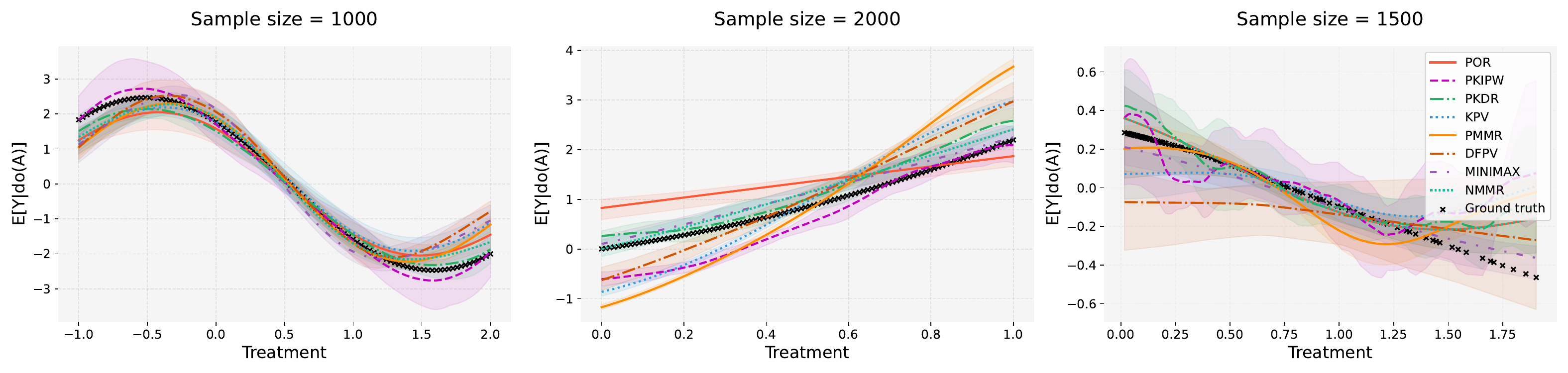}
    \caption{ATE comparison of different methods across various methods on three datasets; Left: ATE comparison using 1000 samples in the first experiment; Middle: ATE comparison using 2000 samples in the second experiment; Right: ATE comparison for the abortion and crime dataset.}
\label{fig:Synthetic_study}
\end{figure}

\vspace{-0.2cm}
\subsection{Synthetic study}
\vspace{-0.15cm}

We consider two distinct scenarios. The first scenario demonstrates the effectiveness of the kernel method in the context of the doubly robust estimator under model misspecification, while the second scenario evaluates the utility in high-dimensional settings. For both scenarios, we report the mean cMSE of each method across 20 times.

\vspace{-0.2cm}
\subsubsection{doubly Robustness Study}
\vspace{-0.15cm}
\textbf{Data generation.} We follow the generative process in \cite{mastouri2021proximal} and leave details in the Appx.~\ref{appx.exp}. Similar to \cite{kang2007demystifying, cui2023semiparametric}, we consider four scenarios where either or both confounding bridge functions are misspecified by considering a model using a transformation of observed variables:

\begin{itemize}[noitemsep,topsep=0.1pt,leftmargin=0.4cm]
    \item \textbf{Scenario 1.} We follow \cite{mastouri2021proximal} to generate data; 
    \item \textbf{Scenario 2.} The outcome confounding bridge function is misspecified with $W^{*}=|W|^{1/2}+1$;
    \item  \textbf{Scenario 3.} The treatment confounding bridge function is misspecified with $Z^{*}=|Z|^{1/2}+1$;
    \item  \textbf{Scenario 4.} Both confounding bridge functions are mis-specified.
\end{itemize}

\begin{table}[htbp]
  \centering
  \caption{cMSE of all methods on two synthetic data and the real-world data.}
  \renewcommand{\arraystretch}{1.15}
  \scalebox{0.70}{
    \begin{tabular}{cc|c|cccccc|cc}
    \hline
    \hline
    \multicolumn{2}{c|}{\multirow{2}{*}{Dataset}} & \multirow{2}{*}{Size} & \multirow{2}{*}{PMMR} & \multirow{2}{*}{KPV} & \multirow{2}{*}{DFPV} & \multirow{2}{*}{MINMAX} & \multirow{2}{*}{NMMR} & \multirow{2}{*}{POR} & \multirow{2}{*}{\textbf{PKIPW}} & \multirow{2}{*}{\textbf{PKDR}} \\
    \multicolumn{2}{c|}{} &       &       &       &       &       &       &       &       &  \\
    \hline
    \multicolumn{1}{c|}{\multirow{8}{*}{\parbox{1cm}{Doubly Robust}}} & \multirow{2}{*}{Scenario 1} & 500   & $0.16_{\pm 0.05}$ & $0.37_{\pm 0.26}$ & $0.30_{\pm 0.13}$ & $0.20_{\pm 0.15}$ & $0.26_{\pm 0.11}$ & $0.19_{\pm 0.11}$ & $0.11_{\pm 0.06}$ & $\mathbf{0.11_{\pm 0.06}}$ \\
    \multicolumn{1}{c|}{} &       & 1000  & $0.14_{\pm 0.03}$ & $0.21_{\pm 0.09}$ & $0.26_{\pm 0.06}$ & $0.10_{\pm 0.05}$ & $0.25_{\pm 0.09}$ & $0.16_{\pm 0.10}$ & $0.11_{\pm 0.08}$ & $\mathbf{0.08_{\pm 0.04}}$ \\
\cline{2-11}    \multicolumn{1}{c|}{} & \multirow{2}{*}{Scenario 2} & 500   & $3.32_{\pm 0.06}$  & $3.50_{\pm 0.16}$ & $1.03_{\pm 0.19}$ & $7.48_{\pm 2.05}$ & $4.72_{\pm 1.38}$ & $3.47_{\pm 0.08}$  & $0.16_{\pm 0.11}$ & $\mathbf{0.16_{\pm 0.09}}$ \\
    \multicolumn{1}{c|}{} &       & 1000  & $3.32_{\pm 0.06}$ & $3.49_{\pm 0.19}$ & $0.97_{\pm 0.18}$ & $5.27_{\pm 0.96}$ & $5.71_{\pm 1.10}$ & $3.48_{\pm 0.07}$ & $\mathbf{0.20_{\pm 0.14}}$  & $0.24_{\pm 0.19}$ \\
\cline{2-11}    \multicolumn{1}{c|}{} & \multirow{2}{*}{Scenario 3} & 500   & $\mathbf{0.15_{\pm 0.05}}$ & $0.29_{\pm 0.18}$ & $0.28_{\pm 0.13}$ & $0.33_{\pm 0.26}$ & $0.38_{\pm 0.15}$ & $0.20_{\pm 0.12}$ & $2.38_{\pm 0.60}$ & $0.20_{\pm 0.15}$ \\
    \multicolumn{1}{c|}{} &       & 1000  & $\mathbf{0.14_{\pm 0.03}}$ & $0.20_{\pm 0.08}$  & $0.30_{\pm 0.10}$ & $0.27_{\pm 0.13}$ & $0.47_{\pm 0.37}$ & $0.22_{\pm 0.14}$ & $2.15_{\pm 0.90}$ & $0.19_{\pm 0.10}$ \\
\cline{2-11}    \multicolumn{1}{c|}{} & \multirow{2}{*}{Scenario 4} & 500   & $3.32_{\pm 0.05}$& $3.51_{\pm 0.21}$  & $\mathbf{1.00_{\pm 0.22}}$ & $6.69_{\pm 1.03}$ & $5.60_{\pm 1.26}$ & $3.46_{\pm 0.08}$ & $2.91_{\pm 5.58}$ & $3.38_{\pm 5.03}$ \\
    \multicolumn{1}{c|}{} &       & 1000  & $3.29_{\pm 0.03}$ & $3.46_{\pm 0.14}$ & $\mathbf{1.02_{\pm 0.23}}$ & $5.10_{\pm 0.90}$ & $5.65_{\pm 1.51}$ & $3.44_{\pm 0.07}$  & $1.87_{\pm 0.91}$  & $3.56_{\pm 3.67}$ \\
    \hline
    \multicolumn{2}{c|}{\multirow{2}{*}{High Dimension}} & 1000  & $0.74_{\pm 0.09}$ & $0.34_{\pm 0.06}$ & $0.22_{\pm 0.11}$  & $0.12_{\pm 0.07}$ & $0.14_{\pm 0.05}$ & $0.31_{\pm 0.07}$ & $0.20_{\pm 0.03}$ & $\mathbf{0.08_{\pm 0.04}}$ \\
    \multicolumn{2}{c|}{} & 2000  & $0.69_{\pm 0.05}$ & $0.36_{\pm 0.02}$ & $0.24_{\pm 0.09} $& $0.07_{\pm 0.03}$ & $\mathbf{0.05_{\pm 0.04}}$ & $0.30_{\pm 0.08}$ & $0.19_{\pm 0.03}$ & $0.09_{\pm 0.04}$ \\
    \hline
    \multicolumn{2}{c|}{Abort. \& Crim } & 1500  & $0.02_{\pm 0.00}$ & $0.03_{\pm 0.01}$ & $0.07_{\pm 0.05}$  & $0.05_{\pm 0.00}$ & $0.01_{\pm 0.01}$ & $\mathbf{0.01_{\pm 0.00}}$ & $0.04_{\pm 0.02}$  & $0.02_{\pm 0.00}$ \\
    \hline
    \hline
    \end{tabular}}
  \label{tab:experiment}%
\end{table}%

\textbf{Results.} We present the mean and the standard deviation (std) of cMSE over 20 times across four scenarios, as depicted in Fig.~\ref{fig:Synthetic_study} and Tab.~\ref{tab:experiment}. For each scenario, we consider two sample sizes, 500 and 1,000. In the first scenario, our PKDR is comparable and even better than the estimator based on $h$. For scenarios with misspecification, the PKIPW estimator and the baselines with only $h_0$ respectively perform well in scenario 2 and scenario 3. Notably, the PKDR can constantly perform well in these scenarios, due to its doubly robustness against model mis-specifications. In scenario 4 where both models of $h_0$ and $q_0$ are misspecified, all methods suffer from a large error. Besides, we can see that the PKIPW method has a large variance in scenario 4, where both estimations of the policy function and $q_0$ can be inaccurate due to mis-specifications \citep{robins2007comment, jiang2022new}. It is worth mentioning that compared to others, DFPV exhibits minimal errors in scenario 4. This could be attributed to their approach of individually fitting each variable's kernel function using different neural networks, thereby enhancing flexibility in their models.

\textbf{Sensitivity Analysis.} According to Thm.~\ref{thm.dr1_causal}, $h_{\mathrm{bw}}$ is the trade-off between bias and variance. To show this, we report the cMSE as $c$ in $h_{\mathrm{bw}}:=c\hat{\sigma}_A n^{-1/5}$ varies in $\{0.5,1.0,1.5,...,4.0\}$. As $c$ (\emph{i.e.}, $h_{\mathrm{bw}}$) increases, the cMSE first decreases, then rises, and reaches its optimum at $c=1.5$, which is consistent with the optimal value derived in \cite{kallus2021causal}. 

\begin{figure}[htpb!]
    \centering
\begin{minipage}{.5\textwidth}
    \centering
    \includegraphics[width=0.8\linewidth]{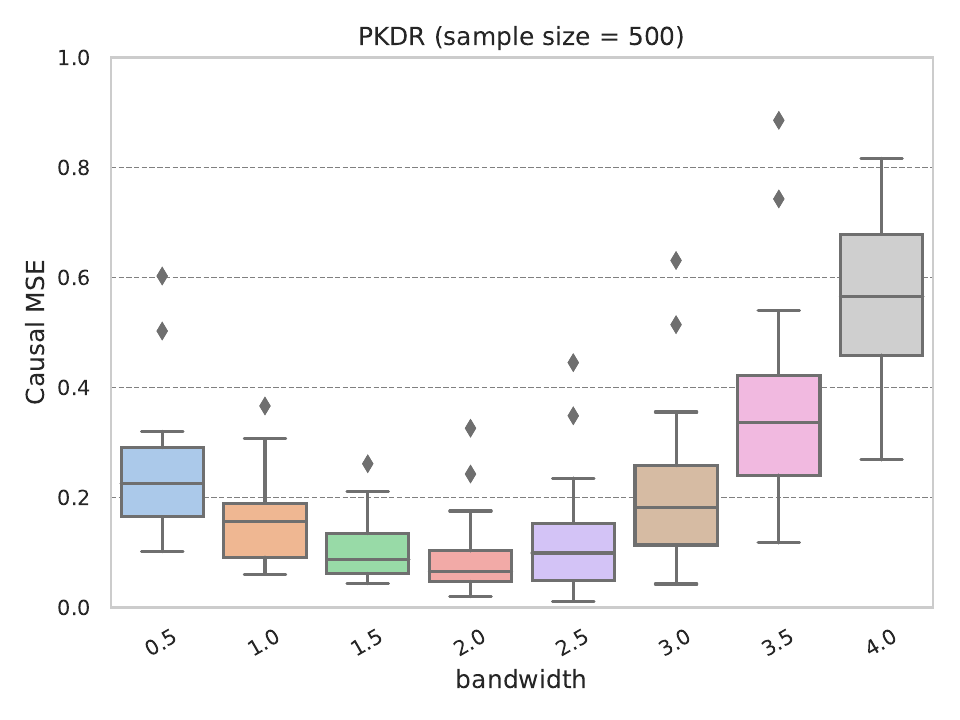}
    \label{fig:sen-pkipw}
\end{minipage}%
\begin{minipage}{.5\textwidth}
    \centering
    \includegraphics[width=0.8\linewidth]{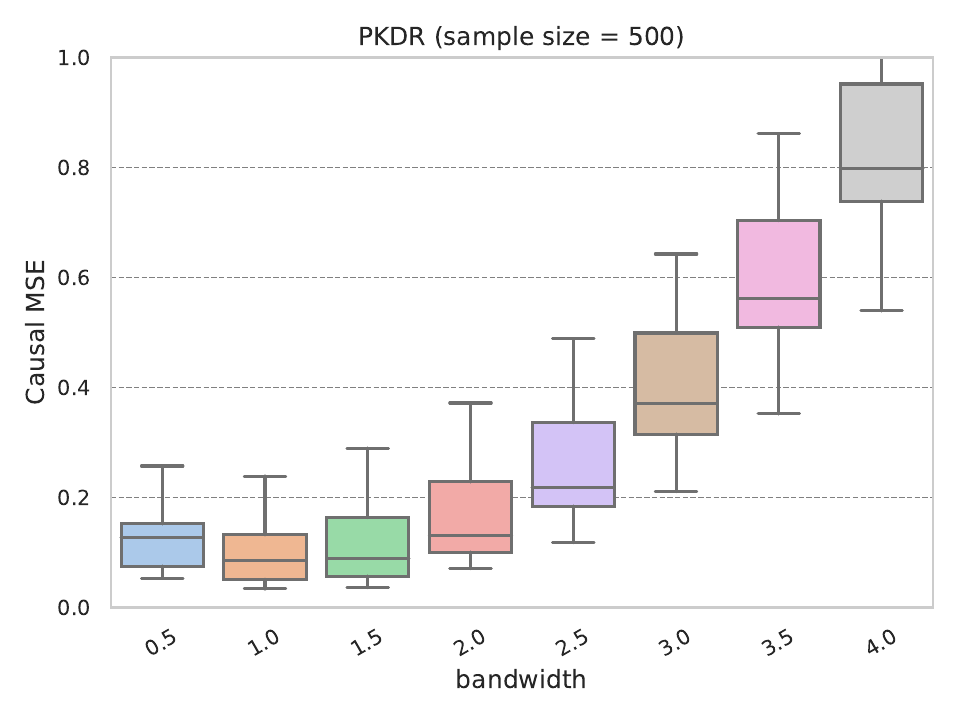}
    \label{fig:sen-pkdr}
\end{minipage}%
\caption{Sensitive analysis of $c$ in $h_{bw} = c \hat{\sigma}_A n^{-1/5}$ in PKIPW (left) and PKDR (right) estimators.}
\label{fig:sen-bw}
\end{figure}

\vspace{-0.15cm}
\subsubsection{High Dimensional Study}
\vspace{-0.1cm}
\textbf{Data generation.} We follow \cite{colangelo2020double, singh2020kernel} to generate data, in which we set $\operatorname{dim}(X) = 100$, $\operatorname{dim}(Z) = 10$, and $\operatorname{dim}(W) = 10$. Specifically, we set $X\sim N(0,\Sigma)$ with $\Sigma \in \mathbb{R}^{100\times 100}$ has $\Sigma_{ii}=1$ for $i \in [\operatorname{dim}(X)]$ and $\Sigma_{ij}=\frac{1}{2}\cdot \mathbb{I}{|i-j|=1}$ for $i\ne j$. The outcome $Y$ is generated from $Y = A^2+1.2A + 1.2(X^\top\beta_x+W^\top\beta_w)+AX_1+0.25U$, where $\beta_x,\beta_w$ exhibit quadratic decay, \emph{i.e.}, $[\beta_x]_j=j^{-2}$. More details can be found in the Appx.~\ref{appx.exp}.

\textbf{Results.} We report the mean and std of cMSE over 20 times with sample sizes set to 1,000 and 2,000, as depicted in Fig.~\ref{fig:Synthetic_study} and Tab.~\ref{tab:experiment}. As shown, we find that the ATE curve fitted by PKDR estimator is closest to the real curve, and its cMSE is also the lowest. This result suggests the robustness of our methods against high-dimensional covariates.

\vspace{-0.2cm}
\subsection{Legalized Abortion And Crime}
\vspace{-0.15cm}

We obtain the data from \cite{donohue2001impact, mastouri2021proximal} that explores the relationship between legalized abortion and crime. In this study, we take the treatment as the effective abortion rate, the outcome variable $Y$ as the murder rate, the treatment-inducing proxy $Z$ as the generosity towards families with dependent children, and the outcome-inducing proxies $W$ as beer consumption per capita, log-prisoner population per capita, and concealed weapons laws. We follow the protocol \cite{woody2020estimatingf} to preprocess data. We take the remaining variables as the unobserved confounding variables $U$. Following \cite{ mastouri2021proximal}, the ground-truth value of $\beta(a)$ is taken from the generative model fitted to the data.

\textbf{Results.} The results are presented in Fig.~\ref{fig:Synthetic_study} and Tab.~\ref{tab:experiment}. It is evident that all three methods effectively estimate $\beta(a)$, which suggests the utility of our method in real-world scenarios. However, when $a$ falls within the range of $[1.5, 2]$, deviations become apparent in the fitted curve. We attribute these deviations to an inadequate sample size as Fig.~\ref{fig:Synthetic_study}. It's worth noting that the DFPV method employing Neural Networks (NN) exhibits higher variances. This suggests potential numerical instability in certain experiments, a phenomenon in line with observations made in \cite{kompa2022deep}.

\vspace{-0.2cm}
\section{Conclusion}
\vspace{-0.15cm}
\label{sec.conclusion} 

In this paper, we propose a kernel-based doubly robust estimator for continuous treatments within the proximal causal framework, where we replace the conventional indicator function with a kernel function. Additionally, we propose a more efficient approach to estimating the nuisance function $q_0$ by estimating the policy function and incorporating it into a min-max optimization. Our analysis reveals that the MSE converges at a rate of $O(n^{-4/5})$ when we select the optimal bandwidth to balance bias and variance. We demonstrate the utility of our PKDR estimator in synthetic as well as the legalized abortion and crime dataset.

\noindent\textbf{Limitation and future works.} Our estimator is required to estimate the policy function, which may lead to a large variance especially when the policy function is mis-specified. Potential solutions include the variance reduction method including the stabilized IPW estimator, whose estimation forms and theoretical analysis will be explored in the future.

\vspace{-0.15cm}
\section{ACKNOWLEDGMENTS}
\vspace{-0.1cm}
This work was supported by the National Key Research and Development Program of China (No. 2022YFC2405100); STI 2030—Major Projects (No. 2022ZD0205300); Shanghai Municipal Science and Technology Major Project (No.2021SHZDZX0103); the State Key Program of National Natural Science Foundation of China under Grant No. 12331009. The computations in this research were performed using the CFFF platform of Fudan University.

\bibliography{reference}
\bibliographystyle{iclr2024_conference}

\newpage
\appendix
\renewcommand{\contentsname}{Appendix}
\tableofcontents
\newpage
\addtocontents{toc}{\protect\setcounter{tocdepth}{2}}

\section{Preliminaries}
\subsection{Notation}
In this section, we will define some notations used throughout the proof in the appendix. Moreover,  we will introduce other notations in the corresponding subsection.

\begin{table}[!ht]
\caption{Table of Notations}
\centering
\renewcommand*{\arraystretch}{1.5}
\begin{tabular}{ >{\centering\arraybackslash}m{3cm} | >{\centering\arraybackslash}m{10.1cm} } 
\hline\hline
Notation & Meaning \\ 
\hline
 $Z,W$ & Treatment-inducing proxy and outcome-inducing proxy \\ 
 
 $X,U$ &  Covariates and unobserved confounders \\

  $Y(a)$ & The potential outcome with $A = a$\\

  $\operatorname{do}(a)$ & $\operatorname{do}(A=a)$, intervention variable $A_S$ with a value of $a_S$\\

   $\beta(a)$ & ATE at point $a$ obtained by Eq.~\ref{eq.iden} or~\ref{eq.DR}

   Average causal effect at point $a$\\

   $\hat{\beta}(a)$ & Estimate of $\beta(a)$\\

\hline

 $h_0:=h_0(a,w,x)$ & The nuisance/bridge function that satisfies Eq.~\ref{eq.h}\\

 $q_0:=h_0(a,z,x)$ & The nuisance/bridge function that satisfies Eq.~\ref{eq.q}\\

 $K(\cdot)$ & Kernel function\\

 $\kappa_{2}(K)$ & $\int u^2k(u)du$\\

 $\Omega_{2}^{(i)}(K)$ & Second moment of $K$, i.e. $\int(k^{(i)}(u))^{2}du$\\

 $K_{h_{\mathrm{bw}}}(a)$ & Kernel function with $\mathrm{bw}$, i.e. $K_{h_{\mathrm{bw}}}(u)=h_{\mathrm{bw}}^{-1}k(u/h_{\mathrm{bw}})$\\

\hline
$\mathbb{P}_{O}^{h_{\mathrm{bw}}}(\cdot)$ & The distribution approach a point mass at $O$ at $h_{\mathrm{bw}}\to 0$\\

$\mathbb{P}_{O}^{\varepsilon h_{\mathrm{bw}}}(\cdot) $& The special submodel, i.e. $\mathbb{P}_{O}^{\varepsilon h_{\mathrm{bw}}} = (1-\varepsilon)\mathbb{P}_{O}^{0}+\varepsilon\mathbb{P}_{O}^{h_{\mathrm{bw}}}$\\

$\mathbb{P}_{O}^{0}(\cdot)$ & The true distribution\\

$\beta(\cdot;\mathbb{P}^{\varepsilon h_{\mathrm{bw}}})$ & A statistical functional that map the distribution $\mathbb{P}^{\varepsilon h_{\mathrm{bw}}}$ to a real \\

\hline
$\mathcal{Q},\mathcal{H}$ (\emph{resp.} $\mathcal{M},\mathcal{G}$) & The bridge functional class (\emph{resp.} The dual functional class)\\

$\operatorname{proj}_{q}(\cdot)$ & Projection operator, i.e. $\mathbb{E}\left[\cdot|A,W,X\right]$\\

$\widehat{\mathcal{R}}_n(\delta ;\mathcal{G} )$ & The empirical Rademacher complexity of the function class $\mathcal{G}$\\

$\overline{\mathcal{R} }_n(\delta ;\mathcal{G} )$ & The population Rademacher complexity of the function class $\mathcal{G}$\\

$\delta_{n}$ & The upper bound on the Rademacher complexity of the function class $\mathcal{G}$\\

$N(\varepsilon,\mathcal{G},\|\cdot\|_{\infty})$ & The size of the smallest empirical cover of $\mathcal{G}$\\

$H(\epsilon,\mathcal{G},\|\cdot\|)$ & The empirical metric entropy of $\mathcal{G}$\\

\hline
$\Phi_q,\Phi_h $ & Moment restriction, defined by Eq.~\ref{eq.exp_q} \\

$\Phi_{q}^{\lambda_{m}},\Phi_{h}^{\lambda_{g}} $ & Moment restriction with regularization
, defined by Eq.~\ref{eq.exp_lambda_q}\\

$\Phi_q^n,\Phi_h^n $ & The empirical version of $\Phi_q,\Phi_h $\\

$\Phi_{q}^{n,\lambda_{m}},\Phi_{h}^{n,\lambda_{g}} $ & The empirical version of $\Phi_{q}^{\lambda_{m}},\Phi_{h}^{\lambda_{g}} $\\

\hline
$\mathbb{E}[\cdot],\mathbb{P}[\cdot]$ & The expectation and the probability distribution of a random variable\\
$||f||_{L_2} $ & $\sqrt{\int{\left| f\left( x \right) \right|^2\mathrm{d}\mathbb{P} \left( x \right)}}$\\
$||x||_{2} $ & $\sqrt{\sum_i{\left| x_i \right|^2}}$\\
\hline \hline
\end{tabular}
\label{tab:notation}
\end{table}

\subsection{Critical Radius}
The bound we provide is based on the critical radii of the function classes involved, as described in Sec.~\ref{sec.theory}. These critical radius are defined in terms of the empirical localized Rademacher critical radius, which characterizes the critical radius of a function class up to a constant factor. Specifically, the empirical Rademacher complexity and population Rademacher complexity  of a function class $\mathcal{G}:\mathcal{V}\rightarrow[-1,1]$ is defined as follows:
$$
\begin{aligned}
    \widehat{\mathcal{R}}_n(\delta ;\mathcal{G} )&=\mathbb{E} _{\left\{ \epsilon _i \right\} _{i=1}^{n}}\left[ \mathop {\mathrm{sup}} \limits_{g\in \mathcal{G} ,\parallel g\parallel _2\le \delta}\left| \frac{1}{n}\sum_{i=1}^n{\epsilon _i}g\left( v_i \right) \right| \right]\\
    \overline{\mathcal{R} }_n(\delta ;\mathcal{G} )&=\mathbb{E} _{\left\{ \epsilon _i \right\} _{i=1}^{n},\left\{ v_i \right\} _{i=1}^{n}}\left[ \mathop {\mathrm{sup}} \limits_{g\in \mathcal{G} ,\parallel g\parallel _2\le \delta}\left| \frac{1}{n}\sum_{i=1}^n{\epsilon _i}g\left( v_i \right) \right| \right]\\
\end{aligned}
$$

where $\{v_i\}_{i=1}^{n}$ are i.i.d. samples from some distribution $D$ on $\mathcal{V}$ and $\{\epsilon_i\}_{i=1}^{n}$ are i.i.d. Rademacher random variables taking values equiprobably in $\left\{-1,1\right\}$. The empirical critical radius is defined as any solution $\hat{\delta}_n$ to $\widehat{\mathcal{R}}_n(\delta ;\mathcal{G} )\leq\delta^{2}$ and the critical radius of a function class $\mathcal{G}$ is the smallest solution $\delta_{n}^{*}$ to the inequality $\overline{\mathcal{R} }_n(\delta ;\mathcal{G} )\leq\delta^{2}$. Proposition 14.1 of \cite{wainwright2019high} shows that w.p. $1-\zeta$,
$$
\delta_n=O\left(\hat{\delta}_n+\sqrt{\frac{\log(1/\zeta)}{n}}\right).
$$
Thus we can choose $\delta_n$ based on the empirical critical radius $\hat{\delta}_n$. Moreover, \cite{dikkala2020minimax} suggests using the covering number to obtain an upper bound on the empirical Rademacher complexity and thus the critical radius. We denote with $N(\epsilon,\mathcal{G},\|\cdot\|)$ as the size of the smallest empirical-cover of $\mathcal{G}$. The empirical metric entropy of $\mathcal{G}$ is defined as $H(\epsilon,\mathcal{G},\|\cdot\|)=\operatorname{log}(N(\epsilon,\mathcal{G},\|\cdot\|))$. An empirical $\delta$-slice of $\mathcal{G}$ is defined as $\mathcal{G}_{\delta}=\{g\in\mathcal{G}:\|g\|_{2,n}\leq\delta\}$. Then the empirical critical radius of $\mathcal{G}$ is upper bounded by any solution to the inequality:
\begin{equation}\label{eq.radius}  \int_{\delta^2/8}^\delta\sqrt{\dfrac{H(\epsilon,\mathcal{G}_{\delta},\|\cdot\|)}{n}}d\epsilon\leq\dfrac{\delta^2}{20}
\end{equation}
\newpage
\section{Related Works}
Regarding the proximal causal inference framework, there is a growing literature on using proxy variables for causal inference from observational data, due to its ability to account for unmeasured $U$ through two confounding proxy variables: a treatment-inducing proxy $Z$ and an outcome-inducing proxy $W$, which are respectively independent to the outcome and the treatment.

The proximal causal learning (PCL) can be dated back to \cite{kuroki2014measurement}, which established the identification of causal effects in the presence of unobserved confounders under linear models. Then \cite{miao2018confounding} and its extensions \citep{shi2020multiply,tchetgen2020introduction} proposed to leverage two proxy variables $(Z,W)$ for causal identification by estimating the outcome bridge function (Eq.~\ref{eq.h}). Building upon this foundation, \cite{cui2023semiparametric,ying2023proximal} introduced a treatment bridge function (Eq.~\ref{eq.q}) and incorporated it into the Proximal Inverse Probability Weighting (PIPW) estimator. Besides, under binary treatments, they derived the Proximal Doubly Robust (PDR) estimator via influence functions. However, their methods cannot handle continuous treatments, whereas the treatments can be continuous in many real-world scenarios, including social science, biology, and economics. The main challenge for continuous treatments lies in the estimation infeasibility caused by the indicator function in two estimators. \textbf{To the best of our knowledge, we are the first to generalize the PIPW estimator and PDR estimator to the continuous case by replacing the conventional indicator function with a kernel function.} The advantage of using the kernel function is that we do not need to discretize the continuous data, but use the kernel function to incorporate local information about the similar treatments. Besides,\textbf{we derive the corresponding influence function, a process that involves handling the Gateaux derivative of bridge functions.}

Most existing work focuses on how to estimate the outcome bridge function. \cite{singh2020kernel} and \cite{mastouri2021proximal} propose to use a two-stage kernel estimator for the outcome bridge function (\textbf{KPV}), and \cite{xu2021deep} further improved upon this with an adaptive features derived from neural networks (\textbf{DFPV}). Besides, an alternative approach based on maximum moment restriction (MMR) uses single-stage estimators of the bridge function. The masterpiece in this regard is that \cite{mastouri2021proximal} extends the MMR framework to the proximal setting through the use of kernel functions (\textbf{PMMR}). \cite{kompa2022deep} introducex data-adaptive kernel functions derived from neural networks. Another traditional method is to transform the conditional moment equation into an unconditional moment equation, and then solve a minimax optimization problem \citep{dikkala2020minimax,ghassami2022minimax,qi2023proximal} (\textbf{MINIMAX, POR}). However, these methods that estimate only the outcome bridge function, rather than also estimating the treatment bridge function, which would permit us to construct a doubly robust estimator. For the treatment bridge function, \citep{kallus2021causal,ghassami2022minimax} solved $q_0(a,Z,X)$ by running an optimization algorithm for each $a=0,1$. However, it is computationally infeasible for continuous treatments. Instead of running an optimization for each $a$, we would like to estimate $q_0(A,Z,X)$ with a single optimization algorithm. \textbf{To achieve this goal, we propose a two-stage estimation algorithm. We first estimate the policy function and then convert the conditional moments into equivalent forms, which can lead to a min-max optimization problem.}

\newpage
\section{Regularity condition}
% Equations \ref{eq.h} and \ref{eq.q} are Fredholm integral equations of the first kind. However, in order for solutions to exist for equations \ref{eq.h} and \ref{eq.q}, we need to add an additional completeness condition and some regularity conditions.

One conventional approach to studying their solutions is through singular value decomposition, as discussed by \cite{carrasco2007linear}. We first introduce the singular value decomposition of the operator:

% The regularity conditions for the existence of solutions of the integral equation\cite{cui2020semiparametric} are presented in the following:
Given Hilbert spaces $H_1$ and $H_2$, a compact operator $K:H_{1}\longmapsto H_{2}$ and its adjoint operator $K':H_2\longmapsto H_1$, there exists a singular system $(\lambda_n,\varphi_n,\psi_n)_{n=1}^{+\infty}$ of $K$ with nonzero singular values $\left\{\lambda_{n}\right\}$ and orthogonal sequences $\left\{\varphi_n\in H_1\right\}$ and $\{\psi_{n}\in H_{2}\}$ such that
$$
K\varphi_n=\lambda_n\psi_n, K'\psi_n=\lambda_n\varphi_n.
$$
By means of singular value decomposition, Picard's theorem characterizes the conditions for the existence of solutions of the corresponding Fredholm integral equations of the first type. We apply Picard's theorem to the setting of proxy variables.

Let $L^2\{\mathbb P(x)\}$ denote the space of all square integrable functions of $x$ with respect to a cumulative distribution function $\mathbb P(x)$, which is a Hilbert space with inner product $\langle g_1,g_2\rangle=\int g_1(x)g_2(x)\mathrm{d} \mathbb P(x)$. For brevity, we replace $W_{ij}$ and $Z_{ij}$ with $W$ and $Z$ below. Let $T_{a,x}$ denote the operator: $L^{2}\{\mathbb P(w|a,x)\}\to L^{2}\{\mathbb P(z|a,x)\}$, $T_{a,x}h=\mathbb{E}[h(W)|z,a,x]$ and let $(\lambda_{a,x,n},\varphi_{a,x,n},\phi_{a,x,n})_{n=1}^{\infty}$ denote a singular value decomposition of $T_{a,x}$. Also let $T'_{a,x}$ denote the operator: $L^2\{\mathbb P(z|a,x)\}\to L^2\{\mathbb P(w|a,x)\}$, $T'_{a,x}q=\mathbb{E}[q(Z)|w,a,x]$ and let $(\lambda'_{a,x,n},\varphi'_{a,x,n},\phi'_{a,x,n})^{\infty}_{n=1}$ denote a singular value decomposition of $T'_{a,x}$. We assume the following regularity conditions:
\begin{assumption}(Regularity  conditions)\label{assumption-Regularity1}
\begin{enumerate}[noitemsep,topsep=0.1pt,leftmargin=0.4cm]
    \item (Existence of compact operator $T_{a,x}$ and $T'_{a,x}$)
    $$
    \int\int p(w|z,a,x)p(z|w,a,x)\mathrm{d} w \mathrm{d}z<\infty;
    $$
    \item (Existence of solutions)
    $$
    \int\mathbb{E}^{2}[Y|z,a,x]p(z|a,x)\mathrm{d} z<\infty;
    $$
    \item (Eigenvalue structure of compact operator $T_{a,x}$)
    $$ \sum_{n=1}^{\infty}\lambda_{a,x,n}^{-2}|\langle\mathbb{E}[Y|z,x,x],\phi_{a,x,n}\rangle|^{2}<\infty;
    $$
    \item (Existence of solutions)
    $$
    \int p^{-2}(a|w,x)p(w|a,x)\mathrm{d} w<\infty;
    $$
    \item
    (Eigenvalue structure of compact operator $T'_{a,x}$)
    $$   \sum_{n=1}^{\infty}\lambda_{a,x,n}^{'-2}|\langle p^{-1}(a|w,x),\phi_{a,x,n}^{\prime}\rangle|^{2}<\infty.
    $$
\end{enumerate}
\end{assumption}

\begin{theorem}[\cite{cui2023semiparametric}]
    Under Assumption \ref{assumption-Regularity1}(1,2,3) and Assumption \ref{assum:bridge}(1), there exist functions $h(w,a,x)$ such that
    $$
    \mathbb{E}[Y|Z,A,x]=\int h(w,A,x)\mathrm{d}\mathbb P(w|Z,A,x),
    $$
    almost surely.
\end{theorem}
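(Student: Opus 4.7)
The plan is to fix a value $(a,x)$ and treat the statement as a Fredholm integral equation of the first kind, then invoke Picard's theorem. With $g_{a,x}(z) := \mathbb{E}[Y \mid Z=z, A=a, X=x]$ and the operator $T_{a,x}: L^{2}\{\mathbb{P}(w\mid a,x)\} \to L^{2}\{\mathbb{P}(z\mid a,x)\}$ defined by $(T_{a,x}h)(z) = \mathbb{E}[h(W)\mid z,a,x]$, the desired identity reduces to finding $h(\cdot,a,x)$ with $T_{a,x}\,h(\cdot, a, x) = g_{a,x}$.

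First I would verify that $T_{a,x}$ is compact: Assumption~\ref{assumption-Regularity1}(1) is exactly the Hilbert--Schmidt criterion $\int\!\int p(w\mid z,a,x)\,p(z\mid w,a,x)\,\mathrm{d}w\,\mathrm{d}z < \infty$, so $T_{a,x}$ admits a singular system $(\lambda_{a,x,n}, \varphi_{a,x,n}, \phi_{a,x,n})_{n\ge 1}$. Assumption~\ref{assumption-Regularity1}(2) guarantees $g_{a,x}\in L^{2}\{\mathbb{P}(z\mid a,x)\}$, placing the right-hand side in the codomain of $T_{a,x}$.

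Next I would check Picard's two conditions for solvability of $T_{a,x}h = g_{a,x}$: \emph{(i)} $g_{a,x}\in \overline{R(T_{a,x})}$, and \emph{(ii)} $\sum_{n\ge 1}\lambda_{a,x,n}^{-2}|\langle g_{a,x},\phi_{a,x,n}\rangle|^{2}<\infty$. Condition \emph{(ii)} is precisely Assumption~\ref{assumption-Regularity1}(3). Condition \emph{(i)} is where Assumption~\ref{assum:bridge}(1) enters: the completeness requirement that $\mathbb{E}[\nu(Z)\mid W,a,x]=0$ forces $\nu(Z)=0$ a.s.\ says the adjoint operator $T'_{a,x}$ has trivial kernel, and by the standard Hilbert-space duality $\overline{R(T_{a,x})} = \ker(T'_{a,x})^{\perp}$ we get $\overline{R(T_{a,x})} = L^{2}\{\mathbb{P}(z\mid a,x)\}$, so \emph{(i)} is automatic. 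Picard's theorem then produces
\[
h(\cdot, a, x) \;=\; \sum_{n\ge 1} \lambda_{a,x,n}^{-1}\,\langle g_{a,x}, \phi_{a,x,n}\rangle\, \varphi_{a,x,n} \;\in\; L^{2}\{\mathbb{P}(w\mid a,x)\},
\]
satisfying the required equation for each fixed $(a,x)$.

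The main obstacle is to assemble this pointwise-in-$(a,x)$ family of solutions into a single jointly measurable function $h(w,a,x)$, since Picard's construction is performed separately at each conditioning value. A measurable selection argument --- exploiting that the singular system of a jointly measurable family of Hilbert--Schmidt operators can be chosen measurably in the parameters --- closes this gap and yields the desired $h(w,a,x)$ such that $\mathbb{E}[Y\mid Z,A,X] = \int h(w,A,X)\,\mathrm{d}\mathbb{P}(w\mid Z,A,X)$ almost surely.
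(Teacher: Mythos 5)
Your proposal is correct and takes essentially the same route as the paper (which states this result as cited from \cite{cui2023semiparametric} after setting up exactly this machinery): Picard's theorem applied to the compact conditional-expectation operator $T_{a,x}$, with Assumption~\ref{assumption-Regularity1}(1) giving the Hilbert--Schmidt/compactness property, (2) placing $\mathbb{E}[Y\mid z,a,x]$ in $L^{2}\{\mathbb{P}(z\mid a,x)\}$, (3) giving the Picard series condition, and the completeness in Assumption~\ref{assum:bridge}(1) giving $\ker(T'_{a,x})=\{0\}$ so that the closure-of-range condition holds automatically. Your closing remark on measurable selection in $(a,x)$ is an extra refinement not addressed in the cited argument, which applies Picard pointwise and asserts the a.s.\ identity directly.
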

\begin{theorem}[\cite{cui2023semiparametric}]
    Under Assumption \ref{assumption-Regularity1}(1,4,5) and Assumption \ref{assum:bridge}(2), there exist functions $q(z,a,x)$ such that
    $$
    \mathbb{E}[q(Z,a,x)|W,A=a,x]=\dfrac{1}{p(A=a|W,x)}
    $$
   almost surely.
\end{theorem}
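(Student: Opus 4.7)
The plan is to cast the identity
$\mathbb{E}[q(Z,a,x)\mid W,A=a,x] = 1/p(A=a\mid W,x)$
as a Fredholm integral equation of the first kind in Hilbert space and apply Picard's theorem, using the singular system $(\lambda'_{a,x,n},\varphi'_{a,x,n},\phi'_{a,x,n})$ of the operator $T'_{a,x}$ already introduced in the preamble. Concretely, for fixed $(a,x)$, I would view $T'_{a,x}:L^2\{\mathbb{P}(z\mid a,x)\}\to L^2\{\mathbb{P}(w\mid a,x)\}$, defined by $(T'_{a,x}q)(w)=\mathbb{E}[q(Z)\mid W=w,a,x]$, and seek $q(\cdot,a,x)\in L^2\{\mathbb{P}(z\mid a,x)\}$ solving $T'_{a,x}q = 1/p(a\mid\cdot\,,x)$.

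The first technical step is to show that $T'_{a,x}$ is compact (in fact Hilbert--Schmidt). Writing the kernel of $T'_{a,x}$ with respect to the reference measures $\mathbb{P}(z\mid a,x)$ and $\mathbb{P}(w\mid a,x)$ and applying Bayes' rule reduces its Hilbert--Schmidt norm to $\int\!\int p(w\mid z,a,x)\,p(z\mid w,a,x)\,\mathrm{d}w\,\mathrm{d}z$, which is finite by Assumption \ref{assumption-Regularity1}(1). Next, Assumption \ref{assumption-Regularity1}(4) gives $\int p^{-2}(a\mid w,x)\,p(w\mid a,x)\,\mathrm{d}w<\infty$, so the target $1/p(a\mid W,x)$ indeed belongs to $L^2\{\mathbb{P}(w\mid a,x)\}$, i.e., the codomain of $T'_{a,x}$.

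Next I would invoke the completeness hypothesis, Assumption \ref{assum:bridge}(2): $\mathbb{E}[\nu(W)\mid Z,a,x]=0$ implies $\nu\equiv 0$, which is equivalent to the adjoint $T_{a,x}$ having trivial null space. Consequently $\overline{R(T'_{a,x})} = N(T_{a,x})^{\perp} = L^2\{\mathbb{P}(w\mid a,x)\}$, so the Picard range condition is automatic for the target $1/p(a\mid\cdot\,,x)$. The Picard series condition, Assumption \ref{assumption-Regularity1}(5), then guarantees convergence in $L^2\{\mathbb{P}(z\mid a,x)\}$ of the explicit candidate
$$q(z,a,x) \;=\; \sum_{n=1}^{\infty}(\lambda'_{a,x,n})^{-1}\,\bigl\langle 1/p(a\mid\cdot\,,x),\,\phi'_{a,x,n}\bigr\rangle\,\varphi'_{a,x,n}(z).$$
Applying $T'_{a,x}\varphi'_{a,x,n}=\lambda'_{a,x,n}\phi'_{a,x,n}$ term by term and using that $\{\phi'_{a,x,n}\}$ is an orthonormal basis for the closure of the range then verifies $T'_{a,x}q = 1/p(a\mid\cdot\,,x)$ in $L^2\{\mathbb{P}(w\mid a,x)\}$, which yields the claimed almost sure equality.

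The main obstacle I expect is the bookkeeping needed to verify the Hilbert--Schmidt property under the chosen pair of weighted $L^2$ spaces and to establish density of the range. Because $T'_{a,x}$ is defined through conditional laws rather than a single fixed reference measure, one must carefully rewrite the integral kernel via Bayes' rule before Assumption \ref{assumption-Regularity1}(1) delivers the clean Hilbert--Schmidt bound; and one must be precise in identifying the orthogonal complement $N(T_{a,x})^{\perp}$ with the full codomain to invoke the completeness hypothesis. Once these two reductions are made, the argument reduces to a textbook application of Picard's theorem using the singular system postulated to exist.
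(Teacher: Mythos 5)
Your proposal is correct and follows exactly the route the paper intends: it is the standard Picard's-theorem argument that the paper's regularity-condition section sets up (compactness via Assumption \ref{assumption-Regularity1}(1), square-integrability of $1/p(a\mid\cdot,x)$ via (4), the range condition via the completeness Assumption \ref{assum:bridge}(2) giving $N(T_{a,x})=\{0\}$, and the Picard coefficient condition (5)), which is also how the cited result of \cite{cui2023semiparametric} is proved. No gaps worth noting.
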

\newpage
\section{Estimating nuisance function}\label{appx.nuisance}
We next return to how to solve the nuisance function $q$. For simplicity, define 
$$
\mathcal{R} _q(q;p)=\mathbb{E} \left[ \left(\frac{1}{p\left( A\mid W,X \right)}- q\left( A,Z,X \right) \right)\mid A,W,X\right]
$$

Notice that $\mathcal{R} _q(q,p)$ being zero almost surely is actually a conditional moment equation, which is equivalent to finding $q$ such that the following residual mean squared error (RMSE) is minimized for all $q$:
\begin{equation}\label{eq.qrmse}
\min_{q\in \mathcal{Q}}\mathcal{L} _q(q;p):=\mathbb{E} \left[ \left( \mathcal{R} _q\left( q,p \right) \right) ^2 \right] 
\end{equation}

\subsection{Proof of Lemma~\ref{lemma.min-max}}
In the following steps, we will utilize the technique of minimax estimation. To begin with, we will introduce Interchangeability:
\begin{definition}[Fenchel duality]
    Let $\ell:\mathbb{R}\times\mathbb{R}\to\mathbb{R}_+$ be a proper, convex, and lower semi-continuous loss function for any value in its first argument and $\ell_y^{\star}:=\ell^{\star}(y,\cdot)$ a convex conjugate of $\ell_{y}:=\ell(y,\cdot)$ which is also proper, convex, and lower semi-continuous w.r.t. the second argument. Then, $\ell_y(v)=\mathrm{max}_{u}\{u v-\ell_y^{\star}(u)\}$. The maximum is achieved at $v\in\partial\ell^{\star}(u)$, or equivalently $u\in\partial\ell(v)$.
\end{definition}

\begin{theorem}[Interchangeability]
Let $\omega$ be a random variable on $\Omega$ and, for any $\omega\in\Omega$, the function $f(\cdot,\omega):\mathbb{R}\to(-\infty,\infty)$ is proper and upper semi-continuous concave function. Then,
$$
\mathbb{E}_\omega\left[\max\limits_{u\in\mathbb{R}}f(u,\omega)\right]=\max\limits_{u(\cdot)\in\mathcal{U}(\Omega)}\mathbb{E}_\omega[f(u(\omega),\omega)],
$$
where $\mathcal{U}(\Omega):=\{u(\cdot):\Omega \to \mathbb{R}\}$ is the entire space of functions defined on the support $\Omega$.
\end{theorem}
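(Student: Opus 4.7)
The plan is to prove the two inequalities separately. The easy direction $\mathbb{E}_\omega[\max_u f(u,\omega)] \ge \sup_{u(\cdot) \in \mathcal{U}(\Omega)} \mathbb{E}_\omega[f(u(\omega),\omega)]$ is a pointwise-monotonicity argument: for every selector $u(\cdot) \in \mathcal{U}(\Omega)$ and every $\omega$, the bound $f(u(\omega),\omega) \le \max_{v \in \mathbb{R}} f(v,\omega)$ holds by definition of the inner maximum, so integrating against the law of $\omega$ and taking the supremum over $u(\cdot)$ yields the inequality. The only subtlety is measurability of $\omega \mapsto \max_v f(v,\omega)$; this is handled by noting that upper semicontinuity of $f(\cdot,\omega)$ lets us replace the supremum over $\mathbb{R}$ by a supremum over any countable dense subset, so the map is measurable as a countable supremum of measurable functions.

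The nontrivial direction requires constructing an approximately maximizing selector. I would avoid assuming pointwise attainment because a proper upper semicontinuous concave function on $\mathbb{R}$ need not attain its maximum (e.g.\ $u \mapsto -e^{-u}$). Instead, for any $\varepsilon > 0$ I would define the correspondence $A_\varepsilon(\omega) := \{u \in \mathbb{R} : f(u,\omega) \ge \max_v f(v,\omega) - \varepsilon\}$, which is nonempty for every $\omega$ by definition of the supremum. Using joint measurability of $f$ (a standing regularity assumption in this context) together with upper semicontinuity of $f(\cdot,\omega)$ and measurability of $\omega \mapsto \max_v f(v,\omega)$ established above, one verifies that $A_\varepsilon(\cdot)$ has closed values and a measurable graph. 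The Kuratowski--Ryll-Nardzewski measurable selection theorem then produces a measurable selector $u_\varepsilon \in \mathcal{U}(\Omega)$ with $u_\varepsilon(\omega) \in A_\varepsilon(\omega)$ almost surely, yielding $\mathbb{E}_\omega[f(u_\varepsilon(\omega),\omega)] \ge \mathbb{E}_\omega[\max_v f(v,\omega)] - \varepsilon$. Taking the supremum over $\mathcal{U}(\Omega)$ and then letting $\varepsilon \downarrow 0$ closes the reverse inequality.

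The main obstacle is the measurable-selection step, specifically verifying that $A_\varepsilon$ is a closed-valued measurable correspondence in the precise sense required by Kuratowski--Ryll-Nardzewski. Concavity and upper semicontinuity of $f(\cdot,\omega)$, combined with joint measurability of $(u,\omega) \mapsto f(u,\omega)$, are what make this work; concavity itself is not strictly needed for the selection but is natural in the min-max settings where this theorem is applied (and is used earlier in the excerpt through Fenchel duality). A small edge case requires separate treatment: on the event $\{\omega : \max_v f(v,\omega) = +\infty\}$, both sides equal $+\infty$ (witnessed by choosing a selector with $f(u_\varepsilon(\omega),\omega) \ge 1/\varepsilon$), so the identity is trivial there, and the main argument can be restricted to the complementary event where the inner supremum is finite. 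The result can be recognized as a special case of the Rockafellar--Wets interchange principle (\emph{Variational Analysis}, Theorem 14.60).
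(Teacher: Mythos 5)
The paper never proves this statement: it is imported as a known background result (it is essentially the interchange theorem of Rockafellar and Wets, \emph{Variational Analysis}, Thm.~14.60, which you correctly identify) and is used only as a black box in the proof of the min-max reformulation of $\mathcal{L}_q$. So there is no in-paper argument to compare against; your proposal supplies a genuine proof along the standard lines, and its overall structure -- pointwise domination for the easy inequality, an $\varepsilon$-argmax correspondence with a Kuratowski--Ryll-Nardzewski measurable selector for the reverse inequality, and a separate treatment of the event where the inner supremum is $+\infty$ -- is sound and is exactly how the cited result is proved in the literature.

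Two points deserve repair. First, your measurability argument for $\omega \mapsto \sup_v f(v,\omega)$ is misattributed: replacing the supremum over $\mathbb{R}$ by a supremum over a countable dense set is valid for \emph{lower} semicontinuous functions, not upper semicontinuous ones (the indicator of the single point $\{\sqrt{2}\}$ is u.s.c.\ and its supremum over the rationals is $0$, not $1$). In your setting the step is rescued by concavity: a proper concave u.s.c.\ function on $\mathbb{R}$ is continuous relative to its effective domain (concavity forces $\liminf_{x \to a} f(x) \ge f(a)$ at domain endpoints, and u.s.c.\ gives the reverse), so the dense-subset trick does apply; alternatively, invoke joint measurability and the measurable projection theorem, i.e.\ the normal-integrand machinery, which is also what justifies the measurable-graph property of $A_\varepsilon$ that you need for the selection theorem. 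Second, the paper's $\mathcal{U}(\Omega)$ is literally \emph{all} functions $\Omega \to \mathbb{R}$, so in the easy direction $\mathbb{E}_\omega[f(u(\omega),\omega)]$ need not be defined for a non-measurable selector; either restrict the supremum to measurable selectors (the form in which the result is actually used, and stated, in the standard references) or read that side via outer expectations. The hard direction is unaffected since your selector is measurable by construction.
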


\lemminima*
\begin{proof}
Notice that the squared loss function $\ell_y(v) = \frac{1}{4\lambda}(y-v)^2$, we have $\ell_{y}^{\star}(u)=uy+\lambda u^{2}$. Then by Fenchel duality, we have

$$
\begin{aligned}
	\ell _y(v)&=\frac{1}{4\lambda}(y-v)^2=\max_{u\in \mathbb{R}} \left\{ vu-\ell _{y}^{\star}(u) \right\}\\
	&=\max_{u\in \mathbb{R}} \left\{ vu-uy-\lambda u^2 \right\}=\max_{u\in \mathbb{R}} \left\{ (v-y)u-\lambda u^2 \right\}\\
\end{aligned}
$$

Let $y=\mathbb{E} \left[ \frac{1}{p\left( A\mid W,X \right)}\mid A,W,X \right],v = \mathbb{E} \left[ q\left( A,Z,X \right) \mid A,W,X \right], u = m $, we have

$$
\begin{aligned}
	&\frac{1}{4\lambda _m}\left( \mathbb{E} \left[ \frac{1}{p\left( A\mid W,X \right)}\mid A,W,X \right] -\mathbb{E} \left[ q\left( A,Z,X \right) \mid A,W,X \right] \right) ^2\\
	&=\max_{m\in \mathbb{R}} \left\{ \left( \mathbb{E} \left[ q\left( A,Z,X \right) \mid A,W,X \right] -\mathbb{E} \left[ \frac{1}{p\left( A\mid W,X \right)}\mid A,W,X \right] \right) m-\lambda _mm^2 \right\}\\
\end{aligned}
$$
Therefore, applying the interchangeability, we have
$$
\begin{aligned}
	&\mathcal{L} _q(q;p)=4\lambda _m\mathbb{E} \left[ \frac{1}{4\lambda _m}\left( \mathbb{E} \left[ \left( \frac{1}{p\left( A\mid W,X \right)}-q\left( A,Z,X \right) \right) \mid A,W,X \right] \right) ^2 \right]\\
	=&4\lambda _m\mathbb{E}\left[ \max_{m\in \mathbb{R}} \left\{ \left( \mathbb{E} \left[ q\left( A,Z,X \right) \mid A,W,X \right] -\mathbb{E} \left[ \frac{1}{p\left( A\mid W,X \right)}\mid A,W,X \right] \right) m-\lambda _mm^2 \right\} \right]\\
	=&4\lambda _m\mathbb{E}\left[ \max_{m\in \mathbb{R}} \left\{ \mathbb{E} _Z\left[ q\left( A,Z,X \right) -\frac{1}{p\left( A\mid W,X \right)}\mid A,W,X \right] m-\lambda _mm^2 \right\} \right]\\
	=&4\lambda _m\max_{m\in \mathcal{M}} \mathbb{E} \left[ \mathbb{E}\left[ q\left( A,Z,X \right) -\frac{1}{p\left( A\mid W,X \right)}\mid A,W,X \right] m\left( A,W,X \right) -\lambda _mm^2\left( A,W,X \right) \right]\\
	=&4\lambda _m\max_{m\in \mathcal{M}} \mathbb{E} \left[ \left( q\left( A,Z,X \right) -\frac{1}{p\left( A\mid W,X \right)} \right) m\left( A,W,X \right) -\lambda _mm^2\left( A,W,X \right) \right]\\
\end{aligned}
$$
We define
\begin{equation}\label{eq.exp_lambda_q}
    \Phi_{q}^{\lambda_m}(q,m;p)=\mathbb{E} \left[ \left( q\left( A,Z,X \right) -\frac{1}{p\left( A\mid W,X \right)} \right) m\left( A,W,X \right) \right] -\lambda_m \left\| m \right\| _{2}^{2}.
\end{equation}

As long as the dual function class $\mathcal{M}$ is expressive enough such that $\frac{1}{2\lambda_m}\mathcal{R} _q\left( q,p \right) \in \mathcal{M}$, we have
$$
 \mathcal{L} _q(q;p) = \max_{m \in \mathcal{M}} \Phi_q^{n,\lambda_m}(q,m;p)
$$
\end{proof}

we can express Eq.~\ref{eq.qrmse} in an alternative form:
\begin{equation}
    \min_{q\in \mathcal{Q}} \max_{m\in \mathcal{M}} \Phi_{q}^{\lambda_m}(q,m;p).
\end{equation}
We denote the empirical version 
$$
\Phi_q^{n,\lambda_m}(q,m;p)=\frac{1}{n}\sum_i{\left( q\left( z_i,a_i,x_i \right) -\frac{1}{p\left( a_i\mid w_i,x_i \right)} \right) m\left( a_i,w_i,x_i \right)}-\lambda_m\left\| m \right\| _{2,n}^{2}.
$$
Furthermore, for simplicity, we define $\Phi _q(q,m;p)$ as the non regularized version of $\Phi_{q}^{\lambda_m}$.
\begin{equation}\label{eq.exp_q}
    \Phi _q(q,m;p)=\mathbb{E} \left[ \left( q\left( A,Z,X \right) -\frac{1}{p\left( A\mid W,X \right)} \right) m\left( A,W,X \right) \right] 
\end{equation}

In fact, due to the fact that we need to estimate the density function $p$, we consider the following min-max optimization problem:
\begin{equation}
    \min_{q\in \mathcal{Q}} \mathop {\mathrm{max}} \limits_{m\in \mathcal{M}}\Phi_q^{n,\lambda_m}(q,m;\hat{p})
\end{equation}
Similarly, for the nuisance function $h$, we consider the following min-max optimization problem:
\begin{equation}
    \min_{h\in \mathcal{H}} \mathop {\mathrm{max}} \limits_{g\in \mathcal{G}}\Phi_{h}^{n,\lambda_g}(h,g)=\frac{1}{n}\sum_i{\left( y_i - h\left( w_i,a_i,x_i \right) \right) g\left( a_i,z_i,x_i \right)}-\lambda_g\left\| g \right\| _{2,n}^{2}.
\end{equation}

\subsection{Comparison of existing methods}\label{exist_method}
Closely related to us is the estimation of the general treatments proposed by \cite{kallus2021causal}. They hope to estimate the generalized average causal effect (GACE):
$$
J=\mathbb{E}\left[\int Y(a)\pi(a\mid X)\mathrm{d}\mu(a)\right].
$$
where $\pi(a\mid X)$ is contrast function. Based on the idea, they propose the method for solve the nuisance function $q$
$$
\min_{q\in \mathcal{Q}} \max_{m\in \mathcal{M}} \mathbb{E} _n[\pi (A\mid X)q(A,Z,X)m(A,W,X)-(\mathcal{T} m)(W,X)]-\lambda _m\left\| m \right\| ^2_{2,n}
$$
where $(\mathcal{T}m)(w,x)=\int m(a,w,x)\pi(a|x)d\mu(a).$

However, consider the continuous treatments $a\in \mathrm{supp}(A)$, then in this case, $\pi (a\mid X) = \mathbb I(A=a)$. Correspondingly, the conditional moment equation for the action bridge function $q$ is equivalent to
$$
\min_{q\in \mathcal{Q}} \max_{m\in \mathcal{M}} \mathbb{E} _n[\mathbb{I} \left( A=a \right) q(A,Z,X)m(A,W,X)-(\mathcal{T} m)(W,X)]-\lambda _m\left\| m \right\|^2_{2,n}
$$
As mentioned above, the main challenge for continuous treatments lies in the estimation infeasibility caused by the indicator function. Therefore we cannot estimate the nuisance function $q$ when the treatment is continuous.

The second paper to solve $q$ is from \cite{ghassami2022minimax}. They estimate the causal effects of binary treatments and illustrate how the double robustness property of these influence functions can be used to formulate estimating equations for the nuisance functions. Specifically,
$$
\min_{q\in \mathcal{Q}} \max_{m\in \mathcal{M}} \mathbb{E} _n\left[ \left\{ -\mathbb I(A=a)q(Z,X)+1 \right\} m\left( W,X \right) -m^2\left( W,X \right) \right] 
$$

Then functions $\hat{q}(a,z,x) = \mathbb I(a=0)\hat{q}_0(z,x)+\mathbb I(a=1)\hat{q}_1(z,x)$. Note that since the indicator function appears in their optimization equation, we still cannot solve the case of continuous treatments.

Instead of running an optimization for each $a$, we would like to estimate $q_0(A,Z,X)$ with a single optimization algorithm. To achieve this goal, we propose a two-stage estimation algorithm. We first estimate the policy function and then convert the conditional moments into equivalent forms, which can lead to a min-max optimization problem.

$$
\min_{q \in \mathcal{Q}} \max_{m \in \mathcal{M}}\frac{1}{n} \sum_{i} \left( q(a_i,z_i,x_i) - \frac{1}{p(a_i|w_i,x_i)} \right)m(a_i,w_i,x_i) - \lambda_m \Vert m \Vert_{2,n}^2.
$$

Since the treatment is continuous, we have to estimate the policy function before solving the moment equation. However, when the treatment is binary, we can transform the moment equation so that the policy function disappears. This is the cost of estimating the causal effects of continuous treatments.

\newpage
\section{Proofs and Derivation}\label{appx.proof}
\subsection{Proof of Theorem~\ref{thm.iden}}
\thmindenty*
\begin{proof}
    By definition we have
$$
\begin{aligned}
	\mathbb{E} [\mathbb I(A=a_0)q_0(a_0,Z,X)Y]&=\int_{O\backslash A}{yq(a_0,z,x) \mathrm{d}\mathbb P\left( a_0,z,x,y \right)}\\
	&=\int_{O}{\delta _{a_0}\left( a \right) yq_0(a,z,x)\mathrm{d}\mathbb P\left( a,z,x,y \right)}\\
	&=\int_{A}{\delta _{a_0}\left( a \right) \int_{O \backslash A}{yq_0(a,z,x)p\left( a,z,x,y \right) \mathrm{d}\mu(z,x,y)}\mathrm{d}a}\\
	&=\int_{A}{\delta _{a_0}\left( a \right) \beta \left( a \right) \mathrm{d}a}=\left< \delta _{a_0},\beta \right>=\beta \left( a_0 \right) \\
\end{aligned}
$$
where the last equation uses the properties of the Dirac function. Similarly for the kernel function, we also have
$$
\left< K_{h_{\mathrm{bw}}},\beta \right> =\int_{A}{K_{h_{\mathrm{bw}}}(a-a_0)\beta \left( a \right) \mathrm{d}a}\xlongequal{a=h_{\mathrm{bw}}s+a_0}\int_{S}{K(s)\beta \left( h_{\mathrm{bw}}s+a_0 \right) \mathrm{d}s}
$$
As $\beta$ is continous and bounded uniformly, this integral is dominated by $CK(s)$. Moreover, because $\beta$ is continuous, the integral converges point-wise to $K(s)\beta(a_0)$. Applying the dominated convergence theorem and Assumption~\ref{assum.kernel}(The kernel function integral is 1.) yields
$$
\lim_{h_{\mathrm{bw}}\rightarrow 0} \left< K_{h_{\mathrm{bw}}},\beta \right> =\int_{S}{\lim_{h_{\mathrm{bw}}\rightarrow 0} K(s)\beta \left( h_{\mathrm{bw}}s+a_0 \right) \mathrm{d}s}=\int_{S}{K(s)\beta \left( a_0 \right) \mathrm{d}s}=\beta \left( a_0 \right) 
$$
which finally shows
$$
\mathbb{E} [Y(a)]=\lim_{h_{\mathrm{bw}}\rightarrow 0} \mathbb{E} \left[ K_{h_{\mathrm{bw}}}(A-a)q_0(A,Z,X)Y \right]
$$
\end{proof}

\subsection{Proof of Theorem~\ref{theo.dr}}
We denote $\mathbb{P}_X$ as the distribution function for any variable $X$, and rewrite $\beta(a)$ as $\beta(a; \mathbb{P}^0_O)$ where $\mathbb{P}^0_O$ denotes the true distribution for $O:=(A,Z,W,X,Y)$. Besides, we consider the special submodel $\mathbb{P}_O^{\varepsilon h_{\mathrm{bw}}} = (1-\varepsilon) \mathbb{P}^0_{O} + \varepsilon \mathbb{P}_O^{h_{\mathrm{bw}}}$, where $\mathbb{P}_O^{h_{\mathrm{bw}}}(\cdot)$ maps a point $o$ to a distribution of $O$, \emph{i.e.}, $\mathbb{P}_O^{h_{\mathrm{bw}}}(o)$ for any $o$ denotes the distribution of $O$ that approach a point mass at $o$ as $h_{\mathrm{bw}}\to 0$. 
 
\thmdra*
\begin{proof}
Similar to $\beta(a)$ rewritten as $\beta(a; \mathbb{P}^0_O)$, we can rewrite $h_0\left( a,w,x \right) $ as $h_0\left( a,w,x;\mathbb{P} _{AWX}^{0} \right)$. For simplicity, we omit the subscript of the distribution. Please identify according to context.

$$
\begin{aligned}
	\left.\frac{\partial}{\partial\varepsilon}\beta(a;\mathbb P^{\varepsilon h_{\mathrm{bw}}})\right|_{\varepsilon=0}&=\frac{\partial}{\partial \varepsilon}\int{h_0\left( a,w,x;\mathbb{P} ^{\varepsilon h_{\mathrm{bw}}} \right) \mathrm{d}\mathbb{P} ^{\varepsilon h_{\mathrm{bw}}}\left( w,x \right)}\\
	&=\underbrace{\int{\frac{\partial}{\partial \varepsilon}\left. h_0\left( a,w,x;\mathbb{P} ^{\varepsilon h_{\mathrm{bw}}} \right) \right|_{\varepsilon =0}\mathrm{d}\mathbb{P} ^0\left( w,x \right)}}_{\displaystyle{\textbf{(I)}}}\\
    &\qquad\qquad \qquad\qquad\qquad +\underbrace{\int{h_0\left( a,w,x \right) \frac{\partial}{\partial \varepsilon}\left. p^{\varepsilon h_{\mathrm{bw}}}\left( w,x \right) \right|_{\varepsilon =0}\mathrm{d}\mu\left( w,x \right)}}_{\displaystyle{\textbf{(II)}}}\\
\end{aligned}
$$

Since $\mathbb{P}_O^{\varepsilon h_{\mathrm{bw}}} = (1-\varepsilon) \mathbb{P}^0_{O} + \varepsilon \mathbb{P}_O^{h_{\mathrm{bw}}}$, we have
$$
\frac{\partial}{\partial \varepsilon}\left. \mathbb{P} _{O}^{\varepsilon h_{\mathrm{bw}}} \right|_{\varepsilon =0}=\mathbb{P} _{O}^{h_{\mathrm{bw}}}-\mathbb{P} _{O}^{0}.
$$

For term \textbf{(II)}
$$
\begin{aligned}
	(\mathbf{II})&=\int{h_0\left( a,w,x \right) \frac{\partial}{\partial \varepsilon}\left. p^{\varepsilon h_{\mathrm{bw}}}\left( w,x \right) \right|_{\varepsilon =0}\mathrm{d}\mu\left( w,x \right)}\\
	&=\int{h_0\left( a,w,x \right) \left( p^{h_{\mathrm{bw}}}\left( w,x \right) -p^0\left( w,x \right) \right) \mathrm{d}\mu\left( w,x \right)}\\
    &\xlongequal{h_{\mathrm{bw}}\rightarrow0}h_0\left( a,W,X \right) -\beta \left( a \right)\\
\end{aligned}
$$

For term \textbf{(I)}
$$
\begin{aligned}
	(\mathbf{I})&=\int{\frac{\partial}{\partial \varepsilon}\left. h_0\left( a,w,x;\mathbb{P} ^{\varepsilon h_{\mathrm{bw}}} \right) \right|_{\varepsilon =0}\mathrm{d}\mathbb{P} ^0\left( w,x \right)}\\
	&=\int{\frac{\partial}{\partial \varepsilon}\left. h_0\left( a,w,x;\mathbb{P} ^{\varepsilon h_{\mathrm{bw}}} \right) \right|_{\varepsilon =0}\frac{p^0\left( a,w,x \right)}{p^0\left( a\mid w,x \right)}\mathrm{d}\mu\left( w,x \right)}\\
	&=\int{\frac{\partial}{\partial \varepsilon}\left. h_0\left( a,w,x;\mathbb{P} ^{\varepsilon h_{\mathrm{bw}}} \right) \right|_{\varepsilon =0}p^0\left( a,w,x \right) \int{q_0\left( a,z,x \right) p^0\left( z\mid a,w,x \right) \mathrm{d}\mu z}\mathrm{d}\mu\left( w,x \right)}\\
	&=\int{q_0\left( a,z,x \right) \frac{\partial}{\partial \varepsilon}\left. h_0\left( a,w,x;\mathbb{P} ^{\varepsilon h_{\mathrm{bw}}} \right) \right|_{\varepsilon =0}p^0\left( w,y\mid a,z,x \right) p^0\left( a,z,x \right) \mathrm{d}\mu\left( w,x,z,y \right)}\\
\end{aligned}
$$
And by Eq.~\ref{eq.h}, we have
$$
\left. \frac{\partial}{\partial \varepsilon}\int{\left( y-h_0\left( a,w,x;\mathbb{P} ^{\varepsilon h_{\mathrm{bw}}} \right) \right) p^{\varepsilon h_{\mathrm{bw}}}\left( y,w\mid a,z,x \right) \mathrm{d}\mu\left( y,w \right)} \right|_{\varepsilon =0}=0
$$

Then

$$
\begin{aligned}
	&\int{\frac{\partial}{\partial \varepsilon}\left. h_0\left( a,w,x;\mathbb{P} ^{\varepsilon h_{\mathrm{bw}}} \right) \right|_{\varepsilon =0}p^0\left( w,y\mid a,z,x \right)}\mathrm{d}\mu\left( w,y \right) \\
    =&\int{\left[ y-h_0\left( a,w,x \right) \right] \frac{\left. \frac{\partial}{\partial \varepsilon}p^{\varepsilon h_{\mathrm{bw}}}\left( w,y,a,z,x \right) \right|_{\varepsilon =0}}{p^0\left( a,z,x \right)}\mathrm{d}\mu\left( w,y \right)}\\
	&\qquad \qquad\qquad-\int{\left[ y-h_0\left( a,w,x \right) \right] \frac{p^0\left( w,y,a,z,x \right) \left. \frac{\partial}{\partial \varepsilon}p^{\varepsilon h_{\mathrm{bw}}}\left( a,z,x \right) \right|_{\varepsilon =0}}{(p^0)^2\left( a,z,x \right)}\mathrm{d}\mu\left( w,y \right)}\\
\end{aligned}
$$
where we use the equation
$$
\left. \frac{\partial}{\partial \varepsilon}p^{\varepsilon h_{\mathrm{bw}}}\left( w,y\mid a,z,x \right) \right|_{\varepsilon =0}=\frac{\left. \frac{\partial}{\partial \varepsilon}p^{\varepsilon h_{\mathrm{bw}}}\left( w,y,a,z,x \right) \right|_{\varepsilon =0}}{p^0\left( a,z,x \right)}-\frac{p^0\left( w,y,a,z,x \right) \left. \frac{\partial}{\partial \varepsilon}p^{\varepsilon h_{\mathrm{bw}}}\left( a,z,x \right) \right|_{\varepsilon =0}}{(p^0)^2\left( a,z,x \right)}
$$

% \begin{equation}\label{dr.lemma3}
%     \begin{aligned}
% 	&\int{\left[ y-h\left( a,w,x \right) \right] \left( \frac{p_{\varepsilon}'\left( w,y,a,z,x;0 \right)}{p\left( a,z,x \right)}-\frac{p\left( w,y,a,z,x \right)p_{\varepsilon}'\left( a,z,x;0 \right)}{p^2\left( a,z,x \right)} \right) dv\left( w,y \right)}\\
% 	&\qquad \qquad\qquad\qquad\qquad\qquad\qquad\qquad\qquad=\int{h'_{\varepsilon}\left( a,w,x;0 \right) p\left( w,y\mid a,z,x \right)}d\left( w,y \right)\\
% \end{aligned}
% \end{equation}

Substituting $(\mathbf{I})$, we obtain

$$
\begin{aligned}
	(\mathbf{I})&=\int{q_0\left(a,z,x \right) \left( y-h_0\left( a,w,x \right) \right) p^0\left( a,z,x \right) \frac{\left. \frac{\partial}{\partial \varepsilon}p^{\varepsilon h_{\mathrm{bw}}}\left( w,y,a,z,x \right) \right|_{\varepsilon =0}}{p^0\left( a,z,x \right)}\mathrm{d}\mu\left( w,x,z,y \right)}\\
	&\qquad-\int{q_0\left(a,z,x \right) \left( y-h_0\left( a,w,x \right) \right) p^0\left( a,z,x \right) \frac{p^0\left(o \right) \left. \frac{\partial}{\partial \varepsilon}p^{\varepsilon h_{\mathrm{bw}}}\left( a,z,x \right) \right|_{\varepsilon =0}}{\left( p^0 \right) ^2\left( a,z,x \right)}\mathrm{d}\mu\left( w,x,z,y \right)}\\
	&=\int{q_0\left( a,z,x \right) \left( y-h_0\left( a,w,x \right) \right) p^0\left( a,z,x \right) \frac{p^{h_{\mathrm{bw}}}\left( o \right) -p^0\left( o \right)}{p^0\left( a,z,x \right)}\mathrm{d}\mu\left( w,x,z,y \right)}\\
    &-\int{q_0\left( a,z,x\right) \left( y-h_0\left( a,w,x \right) \right) p^0\left( a,z,x \right) \frac{p\left( o \right) \left( p^{h_{\mathrm{bw}}}\left( a,z,x \right) -p^0\left( a,z,x \right) \right)}{\left( p^0 \right) ^2\left( a,z,x \right)}\mathrm{d}\mu\left( w,x,z,y \right)}\\
	&\xlongequal{h_{\mathrm{bw}}\rightarrow0}q_0\left( a,Z,X \right) \left( Y-h_0\left( a,w,x \right) \right) \lim_{h_{\mathrm{bw}}\rightarrow 0} p_{A}^{h_{\mathrm{bw}}}(a)\\
    &\qquad\qquad\qquad+\int{q_0\left( a,z,x \right) \left( y-h_0\left( a,w,x \right) \right) p^0\left( w,y\mid a,z,x \right) p^{h_{\mathrm{bw}}}\left( a,z,x \right)}\mathrm{d}\mu\left( w,x,z,y \right) \\
	&\xlongequal{h_{\mathrm{bw}}\rightarrow0}q_0\left( a,Z,X \right) \left( Y-h_0\left( a,W,X \right) \right) \lim_{h_{\mathrm{bw}}\rightarrow 0} p_{A}^{h_{\mathrm{bw}}}(a)\\
\end{aligned}
$$
where the last line is because of Eq.~\ref{eq.h}.

The corresponding probability density function (pdf) $p_O^{h_{\mathrm{bw}}}(o) = K_{h_{\mathrm{bw}}}(O-o)\mathbb{I}(p_O^0(o) > h_{\mathrm{bw}})$ is our kernel density, and we thus have $\lim_{h_{\mathrm{bw}} \to 0} p_O^{h_{\mathrm{bw}}}(o) = \lim_{h_{\mathrm{bw}} \to 0} K_{h_{\mathrm{bw}}}(O-o)$. Combining the two terms, we get
\begin{small}
    $$
    \lim_{h_{\mathrm{bw}}\rightarrow 0} \left.\frac{\partial}{\partial\varepsilon}\beta(a;\mathbb P^{\varepsilon h_{\mathrm{bw}}})\right|_{\varepsilon=0} =\left( Y-h_0(a,W,X) \right) q_0( a,Z,X ) \lim_{h_{\mathrm{bw}}\rightarrow 0} K_{h_{\mathrm{bw}}}(A-a)+h_0\left(a,W,X \right) -\beta(a) 
    $$
\end{small}
\end{proof}

\subsection{Proof of Theorem~\ref{thm.q_convergence}}
To prove Theorem~\ref{thm.q_convergence}, we first give some Lemma.
\begin{lemma}[Theorem 14.1 in \cite{wainwright2019high}]\label{lem.f1}
Given a star-shaped, $b$-uniformly bounded function class $\mathcal{F}$, let $\delta_n$ be any positive solution of the inequality $\overline{\mathcal{R} }_n(\delta ;\mathcal{G} )\le \delta^{2}/b$. Then for any $t \ge \delta_{n}$, we have
$$
\left|\|f\|_{2,n}^2-\|f\|_2^2\right|\leq\dfrac{1}{2}\|f\|_2^2+\dfrac{t^2}{2},\qquad\qquad\forall f\in\mathcal{F},
$$
with probability at least $1-c_1e^{-c_2nt^2/b^2}$. If in addition $n\delta_n^2\geq2\log\left(4\log\left(1/\delta_n\right)\right)/c_2$, then we have that
$$
\left|\|f\|_{2,n}^2-\|f\|_2^2\right|\leq c_0\delta_n,\qquad\qquad\forall f\in\mathcal{F},
$$
with probability at least $1-c_1'\exp(-c_2'n\delta_n^2/b^2)$.
\end{lemma}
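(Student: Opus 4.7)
The statement is the standard localized uniform ratio inequality (Wainwright, Theorem 14.1), so my plan is to follow the textbook route: symmetrization $\to$ contraction $\to$ concentration $\to$ peeling via the star-shape. Introduce the localized supremum
\[
Z_n(r) \;:=\; \sup_{f \in \mathcal{F},\, \|f\|_2 \le r} \bigl|\|f\|_{2,n}^2 - \|f\|_2^2\bigr|,
\]
and aim for the key intermediate bound: for every $r \ge \delta_n$, $Z_n(r) \le r^2/2$ with probability at least $1 - c_1 e^{-c_2 n r^2/b^2}$. Given this, the first conclusion follows by a one-line rescaling argument (see below); the second conclusion follows by specializing $r = \delta_n$ together with the $\log\log$ hypothesis, which just ensures the deviation term does not overwhelm $c_0 \delta_n$.

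\textbf{Step 1 (symmetrization and contraction).} First I would apply a standard symmetrization inequality to get $\mathbb{E}[Z_n(r)] \le 2\, \mathbb{E}\bigl[\sup_{f \in \mathcal{F}, \|f\|_2 \le r} |\tfrac{1}{n}\sum_i \varepsilon_i f^2(X_i)|\bigr]$, then invoke Ledoux--Talagrand contraction with the observation that $\phi(u) = u^2$ is $2b$-Lipschitz on $[-b,b]$. This yields $\mathbb{E}[Z_n(r)] \le 4b\, \overline{\mathcal{R}}_n(r; \mathcal{F})$. Combined with the defining inequality $\overline{\mathcal{R}}_n(r; \mathcal{F}) \le r^2/b$ valid for all $r \ge \delta_n$ (by the star-shaped nature of $\mathcal{F}$, which makes $\delta \mapsto \overline{\mathcal{R}}_n(\delta)/\delta$ non-increasing), this gives $\mathbb{E}[Z_n(r)] \le 4 r^2$, and after absorbing universal constants into the definition of $\delta_n$, an expectation bound of the desired $r^2/4$ shape.

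\textbf{Step 2 (concentration).} Next I would upgrade the expectation bound to a high-probability bound. The cleanest tool is Talagrand's inequality for empirical processes of bounded functions (variance proxy $\sigma^2 \le b^2 r^2$ and range $\le b^2$), giving
\[
Z_n(r) \;\le\; \mathbb{E}[Z_n(r)] \;+\; c\, b r \sqrt{\tfrac{t^2}{n}} \;+\; c\, b^2 \tfrac{t^2}{n}
\]
with probability at least $1 - c_1 e^{-c_2 n t^2/b^2}$. Setting $t = r$ (recall $r \ge \delta_n$) and using $n \delta_n^2 \gtrsim 1$ implicitly from the critical radius, the deviation contribution is $O(r^2)$, so after tuning constants one arrives at $Z_n(r) \le r^2/2$ on this event.

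\textbf{Step 3 (peeling via star-shape).} Finally I would upgrade the localized bound to a \emph{global} bound. For any $f \in \mathcal{F}$, set $\alpha = \min(1, t/\|f\|_2) \in [0,1]$ and $\tilde f := \alpha f$. Star-shapedness gives $\tilde f \in \mathcal{F}$, and $\|\tilde f\|_2 \le t$, so $\tilde f$ is covered by $Z_n(t)$. A direct identity gives $\bigl|\|f\|_{2,n}^2 - \|f\|_2^2\bigr| = \alpha^{-2}\, \bigl|\|\tilde f\|_{2,n}^2 - \|\tilde f\|_2^2\bigr|$; when $\|f\|_2 \le t$ this is $\le Z_n(t) \le t^2/2$, and when $\|f\|_2 > t$ it is $(\|f\|_2/t)^2 Z_n(t) \le \|f\|_2^2/2$. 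In either case the right-hand side is bounded by $\tfrac{1}{2}\|f\|_2^2 + \tfrac{t^2}{2}$, giving the first claim. For the second claim, choose $t = \delta_n$: the $\log\log$ hypothesis on $n \delta_n^2$ is exactly what one needs so that an iterated/truncation argument (or a direct application of step~2 at $t = \delta_n$) absorbs logarithmic losses into a constant, yielding the uniform bound $|\|f\|_{2,n}^2 - \|f\|_2^2| \le c_0 \delta_n$ with the stated exponential tail.

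\textbf{Main obstacle.} The routine but fiddly parts (symmetrization, contraction, peeling) are standard; the substantive work is step 2, where one needs a sufficiently sharp Talagrand/Bousquet-type concentration with the \emph{correct} variance proxy so that the deviation term degrades as $r^2$ rather than $r$ for $r \ge \delta_n$. Getting the constants aligned so that the additive concentration slack is strictly smaller than the $r^2/2$ budget, uniformly across the peeling scales, is the delicate calibration — and it is precisely what forces the supplementary $\log\log$ condition in the second conclusion.
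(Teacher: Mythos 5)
Your outline is essentially the proof of the cited result itself: the paper never proves this lemma but imports it verbatim as Theorem 14.1 of \cite{wainwright2019high}, and that textbook argument follows exactly your route — localize to $Z_n(t)=\sup_{f\in\mathcal F,\|f\|_2\le t}\bigl|\|f\|_{2,n}^2-\|f\|_2^2\bigr|$, bound its mean via symmetrization and Ledoux--Talagrand contraction together with the critical-radius inequality, upgrade to high probability with Talagrand's concentration, and extend to all of $\mathcal F$ by the star-shaped rescaling $f\mapsto (t/\|f\|_2)f$, which is precisely where the $\tfrac12\|f\|_2^2+\tfrac{t^2}{2}$ form comes from. The one loose phrase is ``absorbing universal constants into the definition of $\delta_n$'': $\delta_n$ is pinned down by $\overline{\mathcal{R}}_n(\delta;\mathcal F)\le\delta^2/b$ and cannot be redefined, so the calibration you flag as the main obstacle must be carried by the unspecified constants $c_1,c_2$ (and $c_0$ with the $\log\log$ condition in the second claim), exactly as in Wainwright's constant-tracking — a presentational imprecision rather than a wrong step.
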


\begin{lemma}[Lemma 11 in \cite{foster2019orthogonal}]\label{lem.f2}
Consider a function class $\mathcal{F}$, with $\sup_{f\in\mathcal F}\|f\|_{\infty}\leq b$, and pick any $f^{\star}\in\mathcal F$.  Also, consider a loss function $\ell:\mathbb{R}\times\mathcal{Y}\mapsto\mathbb{R}$ which is $L$-Lipschitz in its first argument with respect to the $l_2$ norm. Let $\delta_{n}^{2}\geq\frac{4d\log(41\log(2c_{2}n))}{c_{2}n}$ be any solution to the inequalities:
$$
\overline{\mathcal{R} _n}\left( \delta ;\mathrm{star}\left( \mathcal{F} -f \right) \right) \le \frac{\delta ^2}{\left\| \mathcal{F} \right\| _{\infty}},
$$
where $\mathrm{star}\left( \mathcal{F} -f \right)=\alpha\left(f-f^{*}\right)$ for $\forall f\in\mathcal{F},\alpha\in[0,1]$. Then for any $t \ge \delta_{n}$ and some universal constants $c_1,c_2>0$, with probability $1-c_1e^{-c_2nt^2/b^2}$, it holds that
$$
\left| \left( \mathbb{E} _n\left[ \ell\left( f(x),y \right) \right] -\mathbb{E} _n\left[ \ell\left( f^{\star}(x),y \right) \right] \right) -\left( \mathbb{E} \left[ \ell\left( f(x),y \right) \right] -\mathbb{E} \left[ \ell\left( f^{\star}(x),y \right) \right] \right) \right|\le 18Ldt\left\{ \left\| f-f^{\star} \right\| _2+t \right\},
$$
for any $f\in\mathcal{F}$. If furthermore, the loss function $\ell$ is linear in $f$, i.e., $\ell((f+f')(x),y)=\ell(f(x),y)+\ell(f'(x),y)$ and $\ell(\alpha f(x),y)=\alpha\ell(f(x),y)$, then the lower bound on $\delta_n^2$ is not required. If the outcome $\widehat{f}$ of constrained ERM satisfies that with the same probability,
$$
\mathbb{E} \left[ \ell \left( \hat{f}(x),y \right) \right] -\mathbb{E} \left[ \ell \left( f^{\star}(x),y \right) \right] \le 18Ldt\left\{ \left\| \hat{f}-f^{\star} \right\| _2+t \right\} .
$$
\end{lemma}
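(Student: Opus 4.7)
The plan is to prove this as a standard localized uniform-deviation result in the spirit of Bartlett--Bousquet--Mendelson and Koltchinskii. The four ingredients will be: (i) symmetrization to pass from the centered empirical process to a Rademacher process, (ii) the Ledoux--Talagrand contraction principle to strip off the $L$-Lipschitz loss, (iii) a star-hull rescaling that produces the multiplicative localization in $\|f-f^\star\|_2$, and (iv) Bousquet's form of Talagrand's inequality to lift the in-expectation bound to the claimed high-probability tail $1-c_1e^{-c_2nt^2/b^2}$.

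First I would set $Z_n(f):=(\mathbb{E}_n-\mathbb{E})[\ell(f(X),Y)-\ell(f^\star(X),Y)]$ and use standard symmetrization to bound $\mathbb{E}\sup_{f\in\mathcal{F}'}|Z_n(f)|$ by twice the Rademacher complexity of the loss-composed class, for any subset $\mathcal{F}'\subseteq\mathcal{F}$. Applying contraction (at cost of the factor $L$) reduces control to the Rademacher complexity of $\mathcal{F}-f^\star$ itself; in the linear-in-$f$ special case the contraction step is in fact an equality, which is why the lemma waives the lower bound on $\delta_n^2$ in that case. The localization then comes from star-hull rescaling: for any $f$ and any $t\ge\delta_n$, set $\alpha_f:=\min\{1,\,t/\|f-f^\star\|_2\}$, so $g_f:=\alpha_f(f-f^\star)\in\mathrm{star}(\mathcal{F}-f^\star)$ has $\|g_f\|_2\le t$. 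The critical-radius hypothesis $\overline{\mathcal{R}}_n(t;\mathrm{star}(\mathcal{F}-f^\star))\le t^2/\|\mathcal{F}\|_\infty$ bounds the Rademacher complexity on this ball, and rescaling by $\alpha_f^{-1}=\max\{1,\|f-f^\star\|_2/t\}$ produces exactly the multiplicative structure $Lt\{\|f-f^\star\|_2+t\}$ in expectation.

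Lifting to high probability uses Bousquet's concentration inequality with variance proxy controlled by $L^2\|f-f^\star\|_2^2$ and envelope of order $Lb$; the additional requirement $\delta_n^2\ge 4d\log(41\log(2c_2 n))/(c_2 n)$ is precisely the slack needed to absorb the iterated-logarithm term that arises from a peeling argument over geometric shells $\{2^{-k}\le\|f-f^\star\|_2\le 2^{-k+1}\}$. The main technical obstacle I anticipate is exactly this peeling: the bound must hold uniformly over $f\in\mathcal{F}$ while retaining the sharp multiplicative dependence on $\|f-f^\star\|_2$, so one has to match Talagrand's variance term to the localized $L_2$ radius on each shell and stitch the shell-wise bounds together without losing factors. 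Finally, the constrained-ERM corollary is immediate: since $\hat f$ minimizes the empirical loss over $\mathcal{F}$, the empirical gap $\mathbb{E}_n[\ell(\hat f(x),y)-\ell(f^\star(x),y)]$ is non-positive, so the two-sided uniform bound collapses to the one-sided population statement with the same rate.
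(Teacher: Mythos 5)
This lemma is not proved in the paper at all: it is imported verbatim as Lemma 11 of \cite{foster2019orthogonal} and used as a black box in the proof of Theorem~\ref{thm.q_convergence}, so there is no in-paper argument to compare against. Your outline does reconstruct the route by which the result is actually established in that reference: symmetrization, Lipschitz contraction, localization via the star hull and the critical-radius condition $\overline{\mathcal{R}}_n(\delta;\mathrm{star}(\mathcal{F}-f^\star))\le\delta^2/\|\mathcal{F}\|_\infty$, and a Talagrand/Bousquet concentration step combined with peeling over shells $\|f-f^\star\|_2\asymp 2^{-k}$ to get the multiplicative bound $t\{\|f-f^\star\|_2+t\}$ uniformly over $f$, with the ERM corollary following from $\mathbb{E}_n[\ell(\hat f(x),y)-\ell(f^\star(x),y)]\le 0$. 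Two caveats. First, what you have is a plan, not a proof: the steps you yourself flag as the technical crux (matching the variance proxy to the localized $L_2$ radius on each shell and stitching the shells together without degrading the constant, which is where the explicit $18$ and the $\log\log$ lower bound on $\delta_n^2$ actually come from) are exactly the parts left unexecuted. Second, you never address the factor $d$ in the bound $18Ldt\{\cdot\}$: in the source the predictions are $\mathbb{R}^d$-valued and the loss is $L$-Lipschitz in the $\ell_2$ norm on that range, so the contraction step is a vector-valued (Maurer-type) contraction, which is where the dimension factor and part of the $d\log\log$ requirement on $\delta_n^2$ enter; a scalar Ledoux--Talagrand contraction as in your sketch would not produce the stated constant. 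Neither point undermines the strategy, but both would have to be carried out to claim a proof rather than a citation-level reconstruction.
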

% \begin{remark}
%     We remark that in the original Lemma 11 of Foster and Syrgkanis (2019), inequality (I.3) only holds for $\delta_n$, and we extend it to any $t\geq\delta_n$ since according to Lemma 13.6 of Wainwright (2019) we know that $\mathcal{R}_n(\mathcal{F};\delta)/\delta$ is a non-increasing function of $\delta$ on $(0,+\infty)$, which indicates that $t\geq\delta_n$ also solves the inequality.
% \end{remark}
\begin{lemma}\label{lem.quadratic}
Let some $a,b, d \ge 0$ be given, and suppose that
$$
aX^2\leq bX+d,
$$
for some $X \ge 0$. Then, we have
$$
X\leq \frac{b+\sqrt{ad}}{a}
$$
\end{lemma}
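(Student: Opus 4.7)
The plan is to prove this elementary quadratic inequality by a single substitution, splitting into the trivial case $X\le b/a$ (where the bound holds immediately since $\sqrt{ad}/a\ge 0$) and the nontrivial case $X>b/a$. Throughout I would assume $a>0$, since otherwise the stated bound is not well defined; when $a=0$ the hypothesis collapses to the tautology $0\le bX+d$ and no finite bound can be concluded, so this degenerate case should simply be excluded in the statement.

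In the nontrivial case, I would set $Y:=X-b/a>0$, substitute $X=Y+b/a$ into $aX^2\le bX+d$, and cancel the $b^2/a$ terms that appear on both sides after expansion. This should reduce to the clean inequality $aY^2+bY\le d$. Since $b,Y\ge 0$, the term $bY$ is nonnegative and may be dropped, giving $aY^2\le d$, hence $Y\le\sqrt{d/a}$, and rearranging yields $X\le b/a+\sqrt{d/a}=(b+\sqrt{ad})/a$, as claimed.

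There is no real obstacle here; the result is essentially a convenient weakening of the quadratic formula. An alternative route would be to apply the quadratic formula directly to $aX^2-bX-d\le 0$, giving $X\le(b+\sqrt{b^2+4ad})/(2a)$, and then verify the slackened bound $\sqrt{b^2+4ad}\le b+2\sqrt{ad}$ by squaring (the cross term $4b\sqrt{ad}\ge 0$ does the work). Both routes are routine, but the shift $Y=X-b/a$ is shorter and more transparent, so that is what I would write up.
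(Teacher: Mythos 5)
Your proof is correct. Your primary route differs from the paper's: the paper applies the quadratic formula to $aX^2-bX-d\le 0$, observes the quadratic has one nonnegative and one nonpositive root, concludes $X\le \frac{b+\sqrt{b^2+4ad}}{2a}$, and then weakens this via the subadditivity bound $\sqrt{b^2+4ad}\le b+2\sqrt{ad}$ --- i.e.\ exactly the "alternative route" you sketch at the end. Your preferred argument (shift $Y=X-b/a$, cancel, drop the nonnegative term $bY$, take square roots) is more elementary in that it never touches the discriminant, and it makes the slack in the lemma transparent (one simply discards $bY\ge 0$); the paper's version has the mild advantage of recording the sharper intermediate bound $\frac{b+\sqrt{b^2+4ad}}{2a}$ along the way. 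Your remark about $a>0$ is also well taken: the lemma as stated allows $a\ge 0$, where the bound is undefined, and the paper's proof silently assumes strict positivity ("Since $a,b$ and $d$ are both positive"); in the paper's application $a=\lambda_m>0$, so nothing breaks, but your explicit exclusion of the degenerate case is the cleaner statement.
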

\begin{proof}
    Since $a,b$ and $d$ are both positive, the quadratic $aX^2-bX-d$ must have a positive and a negative root. Therefore, this quadratic is negative if and only if $X$ is less than the positive root; that is, we have
    $$
     \begin{aligned}
&aX^2-bX-d\leq0 \\
&\iff X\leq \frac{b+\sqrt{b^2+4ad}}{2a} \\
&\Longrightarrow X\leq \frac{b+\sqrt{ad}}{a}.
\end{aligned}
    $$
\end{proof}

\begin{proposition}
    Let $\delta^q_n$ respectively be the upper bound on the Rademacher complexity of $\mathcal{M}$. For any $\eta \in (0,1)$, define $\delta^q:=\delta^q_n+c^q_0\sqrt{\frac{\log(c^q_1/\eta)}{n}}$ for some constants $c^q_0, c^q_1$; then under assump.~\ref{assum:q_convergence}, we have with probability $1-\eta$ that
    $$
    \sup \limits_{m\in \mathcal{M}}\Phi _{q}^{n,\lambda_m}\left( q_0,m;\hat{p} \right) \le \frac{1}{\lambda_m }\left\| \frac{1}{p}-\frac{1}{\hat{p}} \right\|^2 _2+\left\{ \frac{\lambda_m}{2}+\frac{c^2C_{1}^{2}}{\lambda_m}+cC_1 \right\} (\delta ^q)^2
    $$
\end{proposition}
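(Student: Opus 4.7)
The plan is to decompose $\Phi_q^{n,\lambda_m}(q_0,m;\hat{p})$ into a piece where the true propensity $p$ appears (on which the conditional moment restriction \eqref{eq.q} for $q_0$ forces zero mean) plus a correction piece measuring the propensity estimation error $\|1/p-1/\hat p\|_2$, and to handle these with two different tools: localized Rademacher control via Lemma~\ref{lem.f2} for the first, and Cauchy--Schwarz plus AM--GM plus the norm comparison Lemma~\ref{lem.f1} for the second. Concretely, write
\begin{align*}
\Phi_q^{n,\lambda_m}(q_0,m;\hat p) \;=\; \underbrace{\tfrac{1}{n}\sum_i \bigl(q_0(z_i,a_i,x_i)-\tfrac{1}{p(a_i\mid w_i,x_i)}\bigr) m(a_i,w_i,x_i)}_{T_1(m)} + \underbrace{\tfrac{1}{n}\sum_i \bigl(\tfrac{1}{p(a_i\mid w_i,x_i)}-\tfrac{1}{\hat p(a_i\mid w_i,x_i)}\bigr) m(a_i,w_i,x_i)}_{T_2(m)} - \lambda_m\|m\|_{2,n}^2.
\end{align*}

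For $T_1(m)$, observe that \eqref{eq.q} gives $\mathbb{E}[(q_0-1/p)m(A,W,X)]=0$ for every $m\in\mathcal{M}$, so $T_1(m)$ is a centered empirical process indexed by $\mathcal{M}$. Since $\mathcal{M}$ is star-shaped and symmetric and $q_0-1/p$ is uniformly bounded under Assumption~\ref{assum:q_convergence}, I can apply Lemma~\ref{lem.f2} with the linear loss $\ell(m;\cdot)=(q_0-1/p)m$ (Lipschitz constant absorbed into $C_1$) and the reference function $f^\star=0$; this yields, with probability at least $1-\eta$, simultaneously for all $m\in\mathcal{M}$,
\begin{align*}
T_1(m) \;\le\; cC_1\,\delta^q\bigl(\|m\|_2+\delta^q\bigr),
\end{align*}
where the slack $c_0^q\sqrt{\log(c_1^q/\eta)/n}$ baked into $\delta^q$ accounts for the high-probability event. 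Splitting via AM--GM, $cC_1\delta^q\|m\|_2\le \tfrac{\lambda_m}{2}\|m\|_2^2+\tfrac{c^2C_1^2}{2\lambda_m}(\delta^q)^2$, so the $\|m\|_2^2$ piece is poised to cancel against the stabilizer.

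For $T_2(m)$, Cauchy--Schwarz gives $T_2(m)\le \|1/p-1/\hat p\|_{2,n}\|m\|_{2,n}$, and AM--GM with weight $\lambda_m$ yields $T_2(m)\le \tfrac{1}{2\lambda_m}\|1/p-1/\hat p\|_{2,n}^2+\tfrac{\lambda_m}{2}\|m\|_{2,n}^2$. To convert from empirical to population norms, invoke Lemma~\ref{lem.f1} on both $\mathcal{M}$ and on the singleton class $\{1/p-1/\hat p\}$, which (up to the $O((\delta^q)^2)$ slack already present) yields $\|m\|_{2,n}^2\le \tfrac{3}{2}\|m\|_2^2+O((\delta^q)^2)$ and an analogous upper bound for $\|1/p-1/\hat p\|_{2,n}^2$. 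Combining all contributions, gathering every $\|m\|_2^2$ term against $-\lambda_m\|m\|_2^2$ so they cancel (or become nonpositive, hence discardable under the sup), and absorbing the residual constants into the coefficient $\{\tfrac{\lambda_m}{2}+\tfrac{c^2C_1^2}{\lambda_m}+cC_1\}$ gives the stated bound after taking the supremum over $m\in\mathcal{M}$.

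The main obstacle is obtaining the sharp $\|m\|_2$-dependent scaling in the uniform bound on $T_1$: a naive application of Rademacher complexity would give $|T_1(m)|\le \delta^q$ uniformly without the $\|m\|_2$ factor, which is too crude to be cancelled by the $\lambda_m\|m\|_{2,n}^2$ stabilizer and would destroy the balance between bias and variance. The localized analysis of Lemma~\ref{lem.f2} (which is precisely why Assumption~\ref{assum:q_convergence} imposes star-shapedness, symmetry, and boundedness) is what supplies the key $\delta^q(\|m\|_2+\delta^q)$ rate. A secondary nuisance, handled cleanly by Lemma~\ref{lem.f1}, is that the stabilizer acts on the empirical norm $\|m\|_{2,n}$ while the desired bound is phrased in the population norm $\|m\|_2$; this discrepancy contributes only an extra $O((\delta^q)^2)$ term and is safely absorbed into the final constant.
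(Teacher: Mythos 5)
Your overall architecture is the same as the paper's (Lemma~\ref{lem.f1} for the empirical/population norm comparison, Lemma~\ref{lem.f2} for localized uniform concentration, the moment restriction Eq.~\ref{eq.q} to center the leading term, and Cauchy--Schwarz plus a quadratic trick paid for by the stabilizer), and your treatment of $T_1$ via Lemma~\ref{lem.f2} with the \emph{true} $p$ and $f^\star=0$ is sound. But as written the bookkeeping does not close. The stabilizer is $-\lambda_m\|m\|_{2,n}^2$, and Lemma~\ref{lem.f1} only converts it to $-\tfrac{\lambda_m}{2}\|m\|_2^2+\tfrac{\lambda_m}{2}(\delta^q)^2$: the discrepancy between the two norms is a \emph{constant-factor} loss of $\tfrac12\|m\|_2^2$, not merely the additive $O((\delta^q)^2)$ you assert. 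Since you spend $\tfrac{\lambda_m}{2}\|m\|_2^2$ (population norm) on the $T_1$ AM--GM and another $\tfrac{\lambda_m}{2}\|m\|_{2,n}^2$ on the $T_2$ AM--GM, the total exceeds what the stabilizer can absorb, leaving an uncancelled $+\tfrac{\lambda_m}{4}\|m\|_2^2$ under the supremum; this is only bounded by a constant of order $\lambda_m\|\mathcal{M}\|_\infty^2$, not by $(\delta^q)^2$, so the stated inequality does not follow. The fix is to use the weight $\tfrac{\lambda_m}{4}$ in the $T_1$ AM--GM (costing $\tfrac{c^2C_1^2}{\lambda_m}(\delta^q)^2$, which is exactly the paper's coefficient) and reserve the remaining quarter of the stabilizer for the propensity term.

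The second gap is your conversion of $\|1/p-1/\hat p\|_{2,n}$ to $\|1/p-1/\hat p\|_2$ by invoking Lemma~\ref{lem.f1} on ``the singleton class $\{1/p-1/\hat p\}$'': $\hat p$ is estimated from data, so this is not a fixed function class, and Lemma~\ref{lem.f1} is a uniform concentration statement over a fixed star-shaped class; even granting independence of $\hat p$ (e.g.\ by sample splitting), you would pick up a further constant in front of $\|1/p-1/\hat p\|_2^2$ rather than the stated $\tfrac{1}{\lambda_m}$. The paper avoids this step entirely by ordering the argument differently: it applies Lemma~\ref{lem.f2} to the plugged-in process, i.e.\ to $\Phi_q^n(q_0,m;\hat p)-\Phi_q(q_0,m;\hat p)$ with Lipschitz constant $C_1$ built from $\|\mathcal Q\|_\infty$ and the bound on $1/\hat p$, and only \emph{then} splits at the population level, $\Phi_q(q_0,m;\hat p)=\Phi_q(q_0,m;p)+\mathbb{E}\left[\left(\tfrac1p-\tfrac1{\hat p}\right)m\right]$, where the first term vanishes by Eq.~\ref{eq.q} and the second is bounded by population Cauchy--Schwarz followed by $\sup_{t\ge0}\{at-\tfrac{\lambda_m}{4}t^2\}\le a^2/\lambda_m$, yielding exactly $\tfrac{1}{\lambda_m}\|1/p-1/\hat p\|_2^2$. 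Splitting the moment at the empirical level, as you do, is what forces both the extra norm-conversion step and the tighter budget arithmetic; if you keep your decomposition you must repair both points, or else adopt the paper's ordering.
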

\begin{proof}
To relate $\|m\|_{2,n}^2$ and $\|m\|_{2}^2$, by Lemma~\ref{lem.f1}, it holds with probability at least $1-\eta$ that
\begin{equation}\label{eq.uniform1}
    \left| \left\| m \right\| _{2,n}^{2}-\left\| m \right\| _{2}^{2} \right|\le \frac{1}{2}\left\| m \right\| _{2}^{2}+\frac{1}{2}\left( \delta ^q \right) ^2,\qquad\qquad\forall m\in\mathcal{M},
\end{equation}
as long as we choose $t$ equal to $\delta^q=\delta^q_n+c_0\sqrt{\frac{\log(c_1/\eta)}{n}}$, where $\delta^q_n$ is an upper bound on the critical radius of $\mathcal{M}$. On the other hand, to relate $\Phi _{q}^{n,\lambda_m}\left( \hat{q},m;\hat{p} \right) =\Phi _{q}^{n}\left( \hat{q},m;\hat{p} \right) -\lambda_m \left\| m \right\| _{2}^{2}$ to its empirical version, we apply Lemma~\ref{lem.f2} to $\ell(a_1,a_2):=a_1a_2 ,a_1=m(A,W,X), a_2=q_0(A,Z,X)-\frac{1}{\hat{p}(A\mid W,X)}$ that is $C_1$-Lipschitz with respect to $a_1$ by noting $q_0(A,Z,X)-\frac{1}{\hat{p}(A\mid W,X)}$ is in $[-C_1, C_1]$ with some constants $C_1=1/\|\hat{p}\|_{\infty}+\|Q\|_{\infty}$:
$$
\left| \ell \left( a_1,a_2 \right) -\ell \left( a_1',a_2 \right) \right|\le C_1\left| a_1-a_1' \right|.
$$
Therefore, we have that there exists a positive constant $c$ such that, with probability at least $1-\eta$,

\begin{equation}\label{eq.uniform2}
 \left| \Phi _{q}^{n}\left( q_0,m;\hat{p} \right) -\Phi _q\left( q_0,m;\hat{p} \right) \right|\le cC_1\left\{ \delta ^q\left\| m \right\| _2+\left( \delta ^q \right) ^2 \right\} .  \qquad\qquad\forall m\in\mathcal{M}   
\end{equation} 

Thus, we can further deduce that, for some absolute constants $c>0$, with probability at least $1-\eta$,
$$
\begin{aligned}
	\sup \limits_{m\in \mathcal{M}}\Phi _{q}^{n,\lambda _m}\left( q_0,m;\hat{p} \right) &=\sup \limits_{m\in \mathcal{M}}\left\{ \Phi _{q}^{n}\left( q_0,m;\hat{p} \right) -\lambda _m\left\| m \right\| _{2,n}^{2} \right\}\\
	&\overset{(1)}{\leq}\sup \limits_{m\in \mathcal{M}}\left\{ \Phi _q\left( q_0,m;\hat{p} \right) +cC_1\{\delta ^q\left\| m \right\| _2+\left( \delta ^q \right) ^2\}-\lambda _m\left\| m \right\| _{2,n}^{2} \right\}\\
	\overset{(2)}{\leq} \sup \limits_{m\in \mathcal{M}}&\left\{ \Phi _q\left( q_0,m;\hat{p} \right) +cC_1\{\delta ^q\left\| m \right\| _2+\left( \delta ^q \right) ^2\}-\frac{\lambda _m}{2}\left\| m \right\| _{2}^{2}+\frac{\lambda _m}{2}\left( \delta ^q \right) ^2 \right\} ,\\
\end{aligned}
$$

where (1) is derived from Eq.~\ref{eq.uniform2}, and (2) is derived from Eq.~\ref{eq.uniform1}. We cam further bound the right-hand side of the above inequality as
$$
\begin{aligned}
	\sup \limits_{m\in \mathcal{M}}\Phi _{q}^{n,\lambda_m}\left( q_0,m;\hat{p} \right) &\le \sup \limits_{m\in \mathcal{M}}\left\{ \Phi _q\left( q_0,m;\hat{p} \right) -\frac{\lambda_m}{4}\left\| m\right\| _{2}^{2} \right\} \\
    &\qquad\qquad+\sup \limits_{m\in \mathcal{M}}\left\{ cC_1\{\delta ^q\left\| m\right\| _2+(\delta ^q)^2\}-\frac{\lambda_m}{4}\left\| m\right\| _{2}^{2}+\frac{\lambda_m}{2}(\delta ^q)^2 \right\}\\
	&\overset{(1)}{\leq} \sup \limits_{m\in \mathcal{M}}\Phi _q\left( q_0,m;p \right) +\left\{ \frac{\lambda_m}{2}+\frac{c^2C_{1}^{2}}{\lambda_m}+cC_1 \right\} (\delta ^q)^2\\
    &\qquad\qquad\qquad+\sup \limits_{m\in \mathcal{M}}\left\{ \Phi _q\left( q_0,m;\hat{p} \right) -\Phi _q\left( q_0,m;p \right) -\frac{\lambda_m}{4}\left\| m\right\| _{2}^{2} \right\} \\
	&\overset{(2)}{\leq} \sup \limits_{m\in \mathcal{M}}  \left\{ \mathbb{E} \left[ \left( \frac{1}{p}-\frac{1}{\hat{p}} \right) m \right] -\frac{\lambda_m}{4}\left\| m\right\| _{2}^{2} \right\} +\left\{ \frac{\lambda_m}{2}+\frac{c^2C_{1}^{2}}{\lambda_m}+cC_1 \right\} (\delta ^q)^2\\
	&\overset{(3)}{\leq} \sup \limits_{m\in \mathcal{M}} \left\{ \left\| \frac{1}{p}-\frac{1}{\hat{p}} \right\| _2\left\| m \right\| _2-\frac{\lambda_m}{4}\left\| m\right\| _{2}^{2} \right\} +\left\{ \frac{\lambda_m}{2}+\frac{c^2C_{1}^{2}}{\lambda_m}+cC_1 \right\} (\delta ^q)^2\\
	&\overset{(1)}{\leq} \frac{1}{\lambda_m }\left\| \frac{1}{p}-\frac{1}{\hat{p}} \right\|^2_2+\left\{ \frac{\lambda_m}{2}+\frac{c^2C_{1}^{2}}{\lambda_m}+cC_1 \right\} (\delta ^q)^2
\end{aligned}
$$
where (1) from the fact $\sup_{\|m\|_2}\{a\|m\|_2-b\|m\|_2^2\}\le a^2/4b$ for any $b > 0$, (2) holds from the fact that $\Phi _q\left( q_0,m;p \right)=0$ and (3) holds from Cauchy's inequality. 

\end{proof}

\thmqconv*

\begin{proof}
First we note that,
$$
\begin{aligned}
	\sup \limits_{m\in \mathcal{M}}\Phi _{q}^{n,\lambda_m}\left(\hat{q},m;\hat{p}\right)&=\sup \limits_{m\in \mathcal{M}}\left\{ \Phi _{q}^{n}\left( \hat{q},m;\hat{p} \right) -\Phi_{q}^{n}\left( q_0,m;\hat{p}\right) +\Phi _{q}^{n}\left( q_0,m;\hat{p}\right) -\lambda _m\left\| m\right\| _{2,n}^{2} \right\}\\
	&\ge \underbrace{\sup \limits_{m\in \mathcal{M}}\left\{ \Phi_{q}^{n}\left( \hat{q},m;\hat{p} \right) -\Phi_{q}^{n}\left( q_0,m;\hat{p}\right) -2\lambda _m\left\| m\right\| _{2,n}^{2} \right\}}_{\displaystyle{(\star)}} \\&\qquad\qquad\qquad\qquad\qquad\qquad+\mathop {\mathrm{inf}} \limits_{m\in \mathcal{M}}\left\{ \Phi_{q}^{n}\left( q_0,m;\hat{p} \right) +\lambda _m\left\| m\right\| _{2,n}^{2} \right\}
\end{aligned}
$$
By the symmetry of $\mathcal{M}$, we have
$$
\begin{aligned}
	\mathop {\mathrm{inf}} \limits_{m\in \mathcal{M}}\left\{ \Phi _{q}^{n}\left( q_0,m;\hat{p} \right) +\lambda _m\left\| m\right\| _{2,n}^{2} \right\} &=\mathop {\mathrm{inf}} \limits_{-m\in \mathcal{M}}\left\{ \Phi _{q}^{n}\left( q_0,-m ;\hat{p}\right) +\lambda _m\left\| m\right\| _{2,n}^{2} \right\}\\
	&=\mathop {\mathrm{inf}} \limits_{-m\in \mathcal{M}}\left\{ -\Phi _{q}^{n}\left( q_0,m;\hat{p} \right) +\lambda _m\left\| m\right\| _{2,n}^{2} \right\}\\
	&=-\sup \limits_{-m\in \mathcal{M}}\left\{ \Phi _{q}^{n}\left( q_0,m;\hat{p} \right) -\lambda _m\left\| m\right\| _{2,n}^{2} \right\}\\
	&=-\sup \limits_{-m\in \mathcal{M}}\Phi _{q}^{n,\lambda_m}\left( q_0,m;\hat{p} \right)\\
    &=-\sup \limits_{m\in \mathcal{M}}\Phi _{q}^{n,\lambda_m}\left( q_0,m;\hat{p}\right)
\end{aligned}
$$

In the sequel, we upper and lower bound term $(\star) $ respectively.

\textbf{(i). Upper bound of term $(\star) $.} 

By definition of the estimator $\hat{q}$ and the assumption $q_0\in \mathcal{Q}$, we have
$$
\sup \limits_{m\in \mathcal{M}}\Phi _{q}^{n,\lambda_m}\left( \hat{q},m;\hat{p} \right) \le \sup \limits_{m\in \mathcal{M}}\Phi _{q}^{n,\lambda_m}\left( q_0,m;\hat{p} \right).
$$
We have
$$
\begin{aligned}
	\left( \star \right) &\le \sup \limits_{m\in \mathcal{M}}\Phi_q^{n,\lambda_m}\left( \hat{q},m;\hat{p} \right) +\sup \limits_{m\in \mathcal{M}}\Phi _{q}^{n,\lambda_m}\left( q_0,m;\hat{p} \right)\\
	&\le 2\sup \limits_{m\in \mathcal{M}}\Phi _{q}^{n,\lambda_m}\left( q_0,m;\hat{p} \right) \le 2\left( \frac{1}{\lambda_m}\left\| \frac{1}{p}-\frac{1}{\hat{p}} \right\|^2_2 +\left\{ \frac{\lambda_m}{2}+\frac{c^2C_{1}^{2}}{\lambda_m}+cC_1 \right\} (\delta ^q)^2 \right)\\
\end{aligned}
$$

\textbf{(ii). Lower bound of term $(\star) $.} 

We now invoke Lemma~\ref{lem.f2} with $\ell(a_1,a_2), a_1=m$ and $a_2=q-q_0$ that is $C_2$-Lipschitz with respect to $a_1$ by noting $q(A,Z,X)-q_0(A,Z,X)$ is in $[-C_2, C_2]$ with some constants $C_2=2\|Q\|_{\infty}$:
$$
|\ell(a_1,a_2)-\ell(a_1',a_2)|\leq C_2|a_1-a_1'|.
$$
Therefore, we have that there exists a positive constant $c$ such that, with probability at least $1-\eta $,
\begin{equation}\label{eq.uniform3}
    \left| \left\{ \Phi _{q}^{n}\left( q,m;\hat{p} \right) -\Phi _{q}^{n}\left( q_0,m;\hat{p} \right) \right\} -\left\{ \Phi _q\left( q,m;\hat{p} \right) -\Phi _q\left( q_0,m;\hat{p} \right) \right\} \right|\le cC_2\{\delta ^q\left\| m\right\| _2+(\delta ^q)^2\}
\end{equation}
Now we are ready to prove the lower bound on term $(\star) $. Since $m_q:=\frac{1}{2\lambda_m}\operatorname{proj} _q\left( q-q_0\right) \in \mathcal{M}$, and Star-shaped, we have $\frac{m_q}{2} \in \mathcal{M}$

$$
\begin{aligned}
	\left( \star \right) &=\sup \limits_{m\in \mathcal{M}}\left\{ \Phi _{q}^{n}\left( \hat{q},m;\hat{p} \right) -\Phi _{q}^{n}\left( q_0,m;\hat{p} \right) -2\lambda _m\left\| m\right\| _{2,n}^{2} \right\}\\
	&\ge \Phi _{q}^{n}\left( \hat{q},\frac{m_q}{2};\hat{p} \right) -\Phi _{q}^{n}\left( q_0,\frac{m_q}{2};\hat{p} \right) -\frac{\lambda _m}{2}\left\| m_q\right\| _{2,n}^{2}\\
	\ge & \underbrace{\Phi _q\left( \hat{q},\frac{m_q}{2};\hat{p} \right) -\Phi _q\left( q_0,\frac{m_q}{2};\hat{p} \right)}_{\displaystyle{(\diamond)}} -cC_2\left\{ \delta ^q\left\| \frac{m_q}{2} \right\| _2+\left( \delta ^q \right) ^2 \right\} -\frac{\lambda _m}{2}\left( \frac{3}{2}\left\| m_q \right\| _{2}^{2}+\frac{\left( \delta ^q \right) ^2}{2} \right) \\
\end{aligned}
$$ 

where the last line holds from the Eq.~\ref{eq.uniform1} and \ref{eq.uniform3}. For the term $(\diamond)$, we have

$$
\begin{aligned}
	(\diamond)&=\Phi _q\left( \hat{q},\frac{m_q}{2};p \right) +\Phi _q\left( \hat{q},\frac{m_q}{2};\hat{p} \right) -\Phi _q\left( \hat{q},\frac{m_q}{2};p \right) -\Phi _q\left( q_0,\frac{m_q}{2};p \right) \\
    &\qquad\qquad\qquad \qquad\qquad\qquad\qquad\qquad\qquad+\Phi _q\left( q_0,\frac{m_q}{2};p \right) -\Phi _q\left( q_0,\frac{m_q}{2};\hat{p} \right)\\
	&=\Phi _q\left( \hat{q},\frac{m_q}{2};p \right) +\mathbb{E} \left[ \left( \frac{1}{p}-\frac{1}{\hat{p}} \right) \frac{m_q}{2} \right] +\mathbb{E} \left[ \left( \frac{1}{\hat{p}}-\frac{1}{p} \right) \frac{m_q}{2} \right] =\Phi _q\left( \hat{q},\frac{m_q}{2};p \right)\\
\end{aligned}
$$
where we used $\Phi _q\left( q_0,\frac{m_q}{2};p \right)=0$. Moreover, since we have
$$
\begin{aligned}
	\mathbb{E} \left[ \hat{q}\left( A,Z,X \right) -\frac{1}{p\left( A\mid W,X \right)}\mid A,W,X \right] &=\mathbb{E} \left[ \hat{q}\left( A,Z,X \right) -q_0\left( A,Z,X \right) \mid A,W,X \right]\\
	&=\operatorname{proj}_q\left( \hat{q}-q_0 \right)\\
\end{aligned}
$$
and $m_q=\frac{1}{2\lambda_m}\operatorname{proj}_q\left( \hat{q}-q_0 \right)$, we have

$$
\begin{aligned}
	\Phi _q\left( \hat{q},\frac{m_q}{2};p \right) &=\frac{1}{2}\mathbb{E} \left[ \left( q\left( Z,A,X \right) -\frac{1}{p\left( A\mid W,X \right)} \right) m_q\left( A,W,X \right) \right]\\
	&=\frac{1}{2}\mathbb{E} \left[ m_q\left( A,W,X \right) \mathbb{E} \left[ q\left( A,Z,X \right) -\frac{1}{p\left( A\mid W,X \right)}\mid A,W,X \right] \right]\\
	&=\frac{1}{4\lambda_m}\mathbb{E} \left[ \left( \operatorname{proj}_q\left( \hat{q}-q_0 \right)\right) ^2 \right]=\lambda_m \left\| m_q \right\| _{2}^{2}\\
\end{aligned}
$$
Therefore, we obtain the lower bound of $(\star)$:
$$
\left( \star \right) \ge \lambda _m\left\| m_q \right\| _{2}^{2}-cC_2\left\{ \delta ^q\left\| \frac{m_q}{2} \right\| _2+\left( \delta ^q \right) ^2 \right\} -\frac{\lambda _m}{2}\left( \frac{3}{2}\left\| m_q \right\| _{2}^{2}+\frac{\left( \delta ^q \right) ^2}{2} \right) 
$$

\textbf{(iii). Combining upper bound and lower bound of term $(\star) $.} 

Now we are ready to combine the upper bound
and lower bound of $(\star) $.
$$
\begin{aligned}
   &\lambda _m\left\| m_q \right\| _{2}^{2}-cC_2\left\{ \delta ^q\left\| \frac{m_q}{2} \right\| _2+\left( \delta ^q \right) ^2 \right\} -\frac{\lambda _m}{2}\left( \frac{3}{2}\left\| m_q \right\| _{2}^{2}+\frac{\left( \delta ^q \right) ^2}{2} \right) \\
   &\qquad\qquad\qquad\qquad\qquad\qquad\qquad\le 2\left( \frac{1}{\lambda_m}\left\| \frac{1}{p}-\frac{1}{\hat{p}} \right\|^2 _2+\left\{ \frac{\lambda_m}{2}+\frac{c^2C_{1}^{2}}{\lambda_m}+cC_1 \right\} (\delta ^q)^2 \right)
\end{aligned}
$$

This give a quadratic inequality on $\left\| m_q \right\| _{2}$, i.e.
$$
\lambda _m\left\| m_q \right\| _{2}^{2}-\underbrace{2cC_2\delta ^q}_{\displaystyle{(B)}}\left\| m_q\right\| _2-\underbrace{\left( \frac{8}{\lambda _m}\left\| \frac{1}{p}-\frac{1}{\hat{p}} \right\| _{2}^{2}+\left\{ 5\lambda _m+\frac{8c^2C_{1}^{2}}{\lambda _m}+8cC_1+4cC_2 \right\} (\delta ^q)^2 \right) }_{\displaystyle{(C)}} \le 0
$$
By Lemma~\ref{lem.quadratic}, we have that
$$
\left\| m_q \right\| _2\le \frac{B+\sqrt{B^2+4\lambda _mC}}{2\lambda _m}\le \frac{1}{\lambda _m}\left( B+\sqrt{\lambda _mC} \right) 
$$
Applying the definition of A and B, we conclude that, with probability at least $1-\eta$,
$$
\begin{aligned}
	\left\| m_q \right\| _2&\le \frac{2cC_2\delta ^q}{\lambda _m}+\frac{1}{\lambda _m}\sqrt{\lambda _m\left( \frac{8}{\lambda _m}\left\| \frac{1}{p}-\frac{1}{\hat{p}} \right\| _{2}^{2}+\left\{ 5\lambda _m+\frac{8c^2C_{1}^{2}}{\lambda _m}+8cC_1+4cC_2 \right\} (\delta ^q)^2 \right)}\\
	&\le \left( \frac{2cC_2}{\lambda _m}+\sqrt{5+\frac{8cC_1+4cC_2}{\lambda _m}+\frac{8c^2C_{1}^{2}}{\lambda _{m}^{2}}} \right) \delta ^q+\frac{2\sqrt{2}}{\lambda _m}\left\| \frac{1}{p}-\frac{1}{\hat{p}} \right\| _2\\
	&\le \sqrt{2\left(5+\frac{8cC_1+4cC_2}{\lambda _m}+\frac{4c^2\left( 2C_{1}^{2}+C_{2}^{2} \right)}{\lambda _{m}^{2}}\right)}\delta ^q+\frac{2\sqrt{2}}{\lambda _m}\left\| \frac{1}{p}-\frac{1}{\hat{p}} \right\| _2\\
	&\lesssim \sqrt{1+\frac{1}{\lambda _m}+\frac{1}{\lambda _{m}^{2}}}\delta ^q+\frac{1}{\lambda _m}\left\| \frac{1}{p}-\frac{1}{\hat{p}} \right\| _2\\
\end{aligned}
$$
where the third line holds from the $\sqrt a + \sqrt b\leq \sqrt{ 2(a+b)}$. According Eq.~\ref{eq.qrmse}, we have 
$$\sqrt{\mathcal{L} _q\left( q;p \right)}=\sqrt{\mathbb{E} \left[ \left( \operatorname{proj}_q\left( \hat{q}-q_0 \right)\right) ^2 \right]}=\sqrt{4\lambda _{m}^{2}\mathbb{E} \left[ \left( m_q \right) ^2 \right]}=2\lambda _m\left\| m_q \right\| _2,
$$
we can bound the Projected RMSE by
$$
\left\| \mathrm{proj}_q(\hat{q}-q_0) \right\| _2=\sqrt{\mathcal{L} _q\left( q;p \right)}=2\lambda _m\left\| m_q \right\| _2\lesssim \delta ^q\sqrt{\lambda _{m}^{2}+\lambda _m+1}+\left\| \frac{1}{p}-\frac{1}{\hat{p}} \right\| _2
$$
\end{proof}

For bridge function $h$, we also the similar theorem. We first give some assumption.
\begin{assumption}\label{assum:h_convergence}
   \emph{(1) (Boundness)} $\left\|\mathcal{G}\right\|_{\infty}<\infty$ and $y$ is uniformly bounded; \emph{(2) (Symmetric)} $\mathcal{G}$ is a symmetric class, i.e, if $g\in \mathcal{G}$, then $-g\in \mathcal{G}$; \emph{(3) (Star-shaped)} $\mathcal{G}$ is star-shaped class, that is for each function $g$ in the class, $\alpha g$ for any $\alpha\in [0,1]$ also belongs to the class; \emph{(4) (Realizability)} $h_0\in \mathcal{H}$; \emph{(5) (Closedness)} $\frac{1}{2\lambda_g}\operatorname{proj}_h(h-h_0) \in \mathcal{H}$.
\end{assumption}
\begin{theorem}
    \label{thm.h_convergence}
Let $\delta^h_n$ respectively be the upper bound on the Rademacher complexity of $\mathcal{G}$. For any $\eta \in (0,1)$, define $\delta^h:=\delta^h_n+c^h_0\sqrt{\frac{\log(c^h_1/\eta)}{n}}$ for some constants $c^h_0, c^h_1$; then under assump.~\ref{assum:h_convergence}, we have with probability $1-\eta$ that
$$
\left\| \mathrm{proj}_h(\hat{h}-h_0) \right\| _2=O\left(\delta ^h\sqrt{\lambda _{g}^{2}+\lambda _g+1}\right)
$$
\end{theorem}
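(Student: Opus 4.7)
The plan is to mirror the proof of Theorem~\ref{thm.q_convergence} with two key simplifications: the role of $q_0$ is played by $h_0$, the critic class becomes $\mathcal{G}$, and—crucially—there is no preliminary nuisance to plug in, because $Y$ appears directly in the moment restriction $\mathcal{R}_h(h_0;y)=\mathbb{E}[Y-h_0(A,W,X)\mid A,Z,X]=0$. Consequently, the $\|1/p-1/\hat{p}\|_2$ term that appeared in the $q$ bound drops out, and only the Rademacher term survives. Define $\Phi_h(h,g):=\mathbb{E}[(Y-h(A,W,X))g(A,Z,X)]$ together with its empirical counterpart $\Phi_h^n$, and note $\Phi_h(h_0,g)=0$ for every $g\in\mathcal{G}$ by the tower property.

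First, I would reproduce the decomposition step: using symmetry of $\mathcal{G}$,
$$
\sup_{g\in\mathcal{G}}\Phi_h^{n,\lambda_g}(\hat{h},g)\;\ge\;\sup_{g\in\mathcal{G}}\Big\{\Phi_h^n(\hat{h},g)-\Phi_h^n(h_0,g)-2\lambda_g\|g\|_{2,n}^2\Big\}\;-\;\sup_{g\in\mathcal{G}}\Phi_h^{n,\lambda_g}(h_0,g),
$$
and by realizability $h_0\in\mathcal{H}$ combined with the defining property of $\hat{h}$, the left-hand side is at most $\sup_g\Phi_h^{n,\lambda_g}(h_0,g)$. This reduces the argument to upper-bounding the baseline term and lower-bounding the optimization gap.

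Second, for the upper bound I would invoke Lemma~\ref{lem.f1} on the star-shaped bounded class $\mathcal{G}$ to obtain the event $|\|g\|_{2,n}^2-\|g\|_2^2|\le \frac{1}{2}\|g\|_2^2+\frac{1}{2}(\delta^h)^2$ uniformly in $g$, and Lemma~\ref{lem.f2} on the loss $\ell(g;y,h):=g(y-h)$, which is $C_3$-Lipschitz in $g$ with $C_3:=\|Y\|_\infty+\|\mathcal{H}\|_\infty$ (both bounded by assumption). Since $\Phi_h(h_0,g)=0$, the usual chain of substitutions yields $\sup_g\Phi_h^{n,\lambda_g}(h_0,g)\lesssim\{\lambda_g+c^2C_3^2/\lambda_g+cC_3\}(\delta^h)^2$.

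Third, for the lower bound I would choose the specific critic $g_h:=\tfrac{1}{2\lambda_g}\operatorname{proj}_h(\hat{h}-h_0)$, which lies in $\mathcal{G}$ by closedness, so $g_h/2\in\mathcal{G}$ by star-shapedness. Plugging $g_h/2$ into the gap, using the concentration bounds to pass from $\Phi_h^n$ back to $\Phi_h$, and applying the tower identity $\mathbb{E}[Y-\hat{h}(A,W,X)\mid A,Z,X]=-\operatorname{proj}_h(\hat{h}-h_0)$ yields $\Phi_h(\hat{h},g_h/2)=\lambda_g\|g_h\|_2^2$. Matching against the upper bound gives a quadratic inequality in $\|g_h\|_2$,
$$
\lambda_g\|g_h\|_2^2 - \underbrace{2cC_3\delta^h}_{\text{linear coefficient}}\|g_h\|_2 - \underbrace{\big(5\lambda_g+8c^2C_3^2/\lambda_g+8cC_3+4cC_3\big)(\delta^h)^2}_{\text{constant term}}\;\le\;0,
$$
to which Lemma~\ref{lem.quadratic} applies, giving $\|g_h\|_2\lesssim \delta^h\sqrt{1+1/\lambda_g+1/\lambda_g^2}$. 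Since $\|\operatorname{proj}_h(\hat{h}-h_0)\|_2=2\lambda_g\|g_h\|_2$, multiplying through yields the claimed rate.

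The main obstacle I anticipate is bookkeeping the Lipschitz constant and the boundedness conditions through Lemma~\ref{lem.f2}, specifically verifying that $\ell(g;y,h)=g(y-h)$ is uniformly bounded and $C_3$-Lipschitz on the event of interest (requiring bounded $Y$ and $\mathcal{H}$, both assumed in Assump.~\ref{assum:h_convergence}). Beyond that, the argument is strictly easier than Theorem~\ref{thm.q_convergence} because the moment restriction has no plugged-in nuisance; heuristically the bound is the $\|1/p-1/\hat{p}\|_2=0$ specialization of that theorem.
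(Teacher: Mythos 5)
Your proposal is correct and is essentially the intended argument: the paper does not write out a proof of Theorem~\ref{thm.h_convergence} at all (it simply cites \cite{kallus2021causal}), and your plan is exactly the specialization of the paper's own proof of Theorem~\ref{thm.q_convergence} with the plugged-in nuisance removed, so the $\|1/p-1/\hat p\|_2$ term drops and only the Rademacher term survives. Two small bookkeeping points: with $\Phi_h(h,g)=\mathbb{E}[(Y-h)g]$ and $\operatorname{proj}_h(\cdot)=\mathbb{E}[\cdot\mid A,Z,X]$, your choice $g_h=\tfrac{1}{2\lambda_g}\operatorname{proj}_h(\hat h-h_0)$ gives $\Phi_h(\hat h,g_h/2)=-\lambda_g\|g_h\|_2^2$, so you should take $g_h=\tfrac{1}{2\lambda_g}\operatorname{proj}_h(h_0-\hat h)$ (or invoke symmetry of $\mathcal{G}$) to get the positive sign needed for the lower bound; and the boundedness of $\mathcal{H}$ that your Lipschitz constant $C_3$ requires is not literally listed in Assumption~\ref{assum:h_convergence} (which states $\|\mathcal{G}\|_\infty<\infty$ and bounded $y$), though this is a looseness in the paper's assumption statement rather than a flaw in your argument, and neither point affects the claimed rate.
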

The proof of the Thm.~\ref{thm.h_convergence} is detailed in \cite{kallus2021causal}.

\subsection{Proof of Theorem~\ref{thm.dr1_causal} }

We prove the bias and variance of the PKDR estimator given Eq.~\ref{eq.kernel-doubly} respectively. The proof method comes from \cite{colangelo2020double,kallus2020doubly}.

\begin{theorem}\label{thm.bias}
    Under assump.~\ref{assum:pos}-\ref{assum:bridge} and \ref{assum.kernel}, suppose $\|\hat{h}-h\|_{2}=o(1)$, $\|\hat{q}-q\|_{2}=o(1)$ and $\|\hat{h}-h\|_{2}\|\hat{q}-q\|_{2}=o((nh_{\mathrm{bw}})^{-1/2}),nh_{\mathrm{bw}}^5=O(1),nh_{\mathrm{bw}}\to\infty$, $h_0(a,w,x),p(a,z| w,x)$ and $p(a,w| z,x)$ are twice continuously differentiable wrt $a$, as well as $h_0,q_0,\hat{h},\hat{q}$ are uniformly bounded. Then for any $a$, we have the following for the bias of the PKDR estimator given Eq.~\ref{eq.kernel-doubly}:
    $$
    \mathbb E\left[ \hat{\beta}\left( a \right) \right] -\beta (a)=\frac{h_{\mathrm{bw}}^{2}}{2}\kappa _2(K) \mathrm{B} +o((nh_{\mathrm{bw}})^{-1/2}),
    $$
    where $B = \mathbb{E} [ q_0( a,Z,X ) [ 2\frac{\partial}{\partial A}h_0( a,W,X ) \frac{\partial}{\partial A}p( a,W\mid Z,X ) +\frac{\partial ^2}{\partial A^2}h_0( a,W,X ) ] ]$.
\end{theorem}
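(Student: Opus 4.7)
The plan is to split $\mathbb{E}[\hat{\beta}(a)]-\beta(a)$ into an oracle part that uses the true $h_0,q_0$ plus a remainder $R_n$ induced by plugging in $\hat h,\hat q$, and control each by a Taylor expansion in $h_{\mathrm{bw}}$ combined with the doubly robust structure. Writing $\Delta_h:=\hat h-h_0$ and $\Delta_q:=\hat q-q_0$, the i.i.d.\ structure gives
\begin{align*}
\mathbb{E}[\hat{\beta}(a)]-\beta(a) &= \underbrace{\mathbb{E}\!\left[K_{h_{\mathrm{bw}}}(A-a)\bigl(Y-h_0(a,W,X)\bigr)q_0(a,Z,X)\right]}_{\text{oracle piece}} \\
&\quad+\underbrace{\mathbb{E}[h_0(a,W,X)]-\beta(a)}_{=\,0\text{ by Eq.~\ref{eq.iden}}}+R_n,
\end{align*}
where $R_n$ collects $\mathbb{E}[K_{h_{\mathrm{bw}}}(A-a)\{(Y-\hat h)\hat q-(Y-h_0)q_0\}]+\mathbb{E}[\Delta_h(a,W,X)]$.

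For the oracle piece I would change variable $u=(a'-a)/h_{\mathrm{bw}}$ and rewrite it as $\int K(u)\,\Psi(a+h_{\mathrm{bw}} u)\,du$, where $\Psi(a'):=\int q_0(a,z,x)(y-h_0(a,w,x))\,p(a',w,z,x,y)\,dw\,dz\,dx\,dy$. The outcome-bridge identity Eq.~\ref{eq.h} forces $\Psi(a)=0$, so together with Assumption~\ref{assum.kernel} (vanishing first moment, finite $\kappa_2(K)$) a second-order Taylor expansion yields $\tfrac{h_{\mathrm{bw}}^2}{2}\kappa_2(K)\,\Psi''(a)+o(h_{\mathrm{bw}}^2)$. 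To identify $\Psi''(a)$ with $B$, I apply Eq.~\ref{eq.h} pointwise in $a'$,
\[
\int \mathbb{E}[Y\mid a',w,z,x]\,p(a',w,z,x)\,dw=\int h_0(a',w,x)\,p(a',w,z,x)\,dw,
\]
and differentiate twice in $a'$ at $a$, cancelling the $h_0(a,w,x)\partial^2_{a'}p$ piece against the $-h_0(a,w,x)$ half of $\Psi$; what remains is precisely the two summands making up $B$. The twice-continuous differentiability of $h_0$ and of the relevant conditional densities in $a$, plus uniform boundedness, justifies differentiating under the integral and bounding the Taylor remainder.

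For $R_n$ I would use the identity
\[(Y-\hat h)\hat q-(Y-h_0)q_0=(Y-h_0)\Delta_q-q_0\Delta_h-\Delta_h\Delta_q,\]
and analyse the three pieces separately. The cross contribution $\mathbb{E}[K_{h_{\mathrm{bw}}}(A-a)\Delta_h\Delta_q]$ is bounded by Cauchy--Schwarz and the hypothesis $\|\Delta_h\|_2\|\Delta_q\|_2=o((nh_{\mathrm{bw}})^{-1/2})$, using that $\|K_{h_{\mathrm{bw}}}(A-a)\|_2=O(h_{\mathrm{bw}}^{-1/2})$. For each of the two single-error pieces I condition on $(A,Z,X)$ or $(A,W,X)$: at $A=a$ the outcome bridge eliminates $\mathbb{E}[Y-h_0\mid A=a,Z,X]$, and the treatment bridge matches $\mathbb{E}[q_0(a,Z,X)\mid A=a,W,X]=1/p(a\mid W,X)$ against the explicit $\mathbb{E}[\Delta_h(a,W,X)]$; for $A$ near $a$ the same conditional expectations admit a first-order Taylor expansion in $A-a$ under the smoothness of $h_0,p(a,z|w,x),p(a,w|z,x)$, and the factor $K_{h_{\mathrm{bw}}}(A-a)(A-a)=O(h_{\mathrm{bw}})$ multiplies $\|\Delta_q\|_2$ or $\|\Delta_h\|_2=o(1)$. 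Because $nh_{\mathrm{bw}}^5=O(1)$ makes $h_{\mathrm{bw}}^2=O((nh_{\mathrm{bw}})^{-1/2})$, every subleading term collapses into the announced $o((nh_{\mathrm{bw}})^{-1/2})$.

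The main obstacle will be the two single-error contributions inside $R_n$: the product term is essentially free once the cross-rate hypothesis is invoked, but showing that the individual pieces also dissolve into $o((nh_{\mathrm{bw}})^{-1/2})$ requires combining both bridge equations with a careful first-order expansion of $\mathbb{E}[h_0(a',W,X)-h_0(a,W,X)\mid A=a',Z,X]$ and of the conditional mean of $q_0(a,Z,X)$ in $a'-a$, so that the exact cancellation at $A=a$ upgrades into a uniform $O(h_{\mathrm{bw}})\cdot o(1)$ remainder of the required order.
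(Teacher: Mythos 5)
Your overall architecture is the same as the paper's: the same decomposition into the oracle term $\mathbb{E}[K_{h_{\mathrm{bw}}}(A-a)(Y-h_0)q_0]$ plus a remainder, the same algebraic identity $(Y-\hat h)\hat q-(Y-h_0)q_0=(Y-h_0)\Delta_q-q_0\Delta_h-\Delta_h\Delta_q$ (the paper's Eq.~\ref{eq.three_a}--\ref{eq.three_c}), and the same bridge-equation cancellations; your $\Psi$-based Taylor expansion of the oracle term reproduces the paper's constant $B$ correctly. However, your quantitative control of the remainder has a genuine gap. For the two single-error pieces you stop at a first-order expansion and claim a bound of the form $o(1)\times O(h_{\mathrm{bw}})$, asserting this is "of the required order." It is not: under $nh_{\mathrm{bw}}^5=O(1)$ (so $h_{\mathrm{bw}}\asymp n^{-1/5}$) the target $o((nh_{\mathrm{bw}})^{-1/2})$ is $o(n^{-2/5})$, while $o(1)\cdot h_{\mathrm{bw}}=o(n^{-1/5})$; you would need $\|\Delta_h\|_2,\|\Delta_q\|_2=o(h_{\mathrm{bw}})$, which is not assumed. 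The missing ingredient is the second-order/symmetric kernel property $\int uK(u)\,\mathrm{d}u=0$ from Assumption~\ref{assum.kernel}, which you invoke for the oracle piece but not here: after conditioning, the first-order Taylor term appears multiplied by $\int uK(u)\,\mathrm{d}u$ and integrates to zero, so the kernel-weighted conditional expectations such as $\mathbb{E}[K_{h_{\mathrm{bw}}}(A-a)(h_0(A,W,X)-h_0(a,W,X))\mid Z,X]$ and $\mathbb{E}[K_{h_{\mathrm{bw}}}(A-a)q_0(a,Z,X)-1\mid W,X]$ are $O(h_{\mathrm{bw}}^2)$ (the latter exactly cancelling at leading order via the $q_0$ bridge identity). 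This is what the paper does, yielding $o(1)\times O(h_{\mathrm{bw}}^2)=o((nh_{\mathrm{bw}})^{-1/2})$ since $h_{\mathrm{bw}}^2=O((nh_{\mathrm{bw}})^{-1/2})$.

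A second, smaller issue is the cross term: bounding $\mathbb{E}[K_{h_{\mathrm{bw}}}(A-a)\Delta_h\Delta_q]$ by Cauchy--Schwarz with $\|K_{h_{\mathrm{bw}}}(A-a)\|_2=O(h_{\mathrm{bw}}^{-1/2})$ leaves a spurious $h_{\mathrm{bw}}^{-1/2}$ factor, giving $o\left(h_{\mathrm{bw}}^{-1/2}(nh_{\mathrm{bw}})^{-1/2}\right)$, which again misses the target. The correct observation is that $\Delta_h(a,W,X)\Delta_q(a,Z,X)$ does not depend on $A$ (both nuisance errors are evaluated at the fixed $a$), so conditioning on $(W,Z,X)$ the kernel contributes only $\mathbb{E}[K_{h_{\mathrm{bw}}}(A-a)\mid W,Z,X]=O(1)$ (a bounded conditional density), and then Cauchy--Schwarz in $\Delta_h,\Delta_q$ gives $O(1)\,\|\Delta_h\|_2\|\Delta_q\|_2=o((nh_{\mathrm{bw}})^{-1/2})$ exactly as the cross-rate hypothesis is designed to deliver. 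With these two repairs your argument coincides with the paper's proof.
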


\begin{proof}
We calculate the expectation of the estimator for a single data point. For simplicity, we treat this data point as a random variable. We have

$$
\begin{aligned}
	\mathbb{E} \left[ \hat{\beta}\left( a \right) -\beta (a) \right] &=\mathbb{E} \left[ \hat{\beta}\left( a \right) \right] -\beta (a)\\
	&=\underbrace{\mathbb{E} \left[ K_{h_{\mathrm{bw}}}\left( A-a \right) \left\{ \left( Y-\hat{h}\left( a,W,X \right) \right) \hat{q}\left( a,Z,X \right)\right\} \right] }_{\displaystyle{\textbf{(I)}}} \\
    &-\underbrace{\mathbb{E} \left[ K_{h_{\mathrm{bw}}}\left( A-a \right) \left\{ \left( Y-h_0\left( a,W,X \right) \right) q_0\left( a,Z,X \right)\right\} \right] }_{\displaystyle{\textbf{(II)}}}\\ 
    &+\underbrace{\mathbb{E} \left[ \hat{h}\left( a,W,X \right) -h_0\left( a,W,X \right) \right]}_{\displaystyle{\textbf{(III)}}}\\
    &+\underbrace{\mathbb{E} \left[ K_{h_{\mathrm{bw}}}\left( A-a \right) \left( Y-h_0\left( a,W,X \right) \right) q_0\left( a,Z,X \right) \right] }_{\displaystyle{\textbf{(IV)}}}\\ 
\end{aligned}
$$

For the \textbf{(IV)} term, we first have
$$
\begin{aligned}
	&\mathbb{E} \left[ K_{h_{\mathrm{bw}}}\left( A-a \right) q_0\left( a,Z,X \right) \left( Y-h_0\left( a,W,X \right) \right) \right]\\
	=&\mathbb{E} \left[ K_{h_{\mathrm{bw}}}\left( A-a \right) q_0\left( a,Z,X \right) \mathbb{E} \left[ \left( Y-h_0\left( a,W,X \right) \right) \mid A,Z,X \right] \right]\\
	=&\mathbb{E} \left[ K_{h_{\mathrm{bw}}}\left( A-a \right) q_0\left( a,Z,X \right) \mathbb{E} \left[ \left( h_0\left( A,W,X \right) -h_0\left( a,W,X \right) \right) \mid A,Z,X \right] \right]\\
	% =&\mathbb{E} [ q_0\left( a,Z,X \right) K_{h_{\mathrm{bw}}}\left( A-a \right) \int{\left( h_0\left( a',w,x \right) -h_0\left( a,w,x \right) \right) p\left( w\mid a',z,x \right) dw} ]\\
	=&\mathbb{E} \left[ q_0\left( a,Z,X \right) \mathbb{E} \left[ K_{h_{\mathrm{bw}}}\left( A-a \right) \left( h_0\left( A,W,X \right) -h_0\left( a,W,X \right) \right) \mid Z,X \right] \right]\\
	=&\mathbb{E} \left[ q_0\left( a,Z,X \right) \int{K_{h_{\mathrm{bw}}}\left( a'-a \right) \left( h_0\left( a',w,x \right) -h_0\left( a,w,x \right) \right) p\left( a',w\mid z,x \right) \mathrm{d}\mu \left( a',w \right)} \right] \\
	=&\mathbb{E} \left[ q_0\left( a,Z,X \right) \int{K\left( u \right) \left( h_0\left( a+h_{\mathrm{bw}}u,w,x \right) -h_0\left( a,w,x \right) \right) p\left( a+h_{\mathrm{bw}}u,w\mid z,x \right) \mathrm{d}\mu \left( u,w \right)} \right] \\
\end{aligned}
$$

where the last line holds from $a'=h_{\mathrm{bw}}u+a$. Consider Taylor expansion of $h_0\left( a,w,x \right)$ and $p\left( a,w\mid z,x \right)$ around $A=a$:
$$
\begin{aligned}
	p\left( h_{\mathrm{bw}}u+a,w\mid z,x \right) -p\left( a,w\mid z,x \right) &=h_{\mathrm{bw}}u\frac{\partial}{\partial A}p\left( a,w\mid z,x \right) +O\left( h_{\mathrm{bw}}^{2} \right)\\
	h_0\left( a+h_{\mathrm{bw}}u,w,x \right) -h_0\left( a,w,x \right) &=h_{\mathrm{bw}}u\left( \frac{\partial}{\partial A}h_0\left( a,w,x \right) \right) \\
    &\qquad\qquad\quad+\frac{\left( h_{\mathrm{bw}}u \right) ^2}{2}\left( \frac{\partial ^2}{\partial A^2}h_0\left( a,w,x \right) \right) +O\left( h_{\mathrm{bw}}^{3} \right)\\
\end{aligned}
$$
Then, we can compute the conditional expectation by integrating the approximation to the density term by term. Here, $\kappa_{j}(K)$ represents the jth kernel moment, defined as $\kappa_{j}(K)=\int u^{j}K(u)du$. It's important to note that for a symmetric kernel, the odd-order moments integrate to 0. Therefore, we have
$$
\begin{aligned}
	&\mathbb{E} \left[ K_{h_{\mathrm{bw}}}\left( A-a \right) q_0\left( a,Z,X \right) \left( Y-h_0\left( a,W,X \right) \right) \right]\\
	=&\mathbb{E} \left[ q_0\left( a,Z,X \right) \int{K\left( u \right) \left( h_0\left( a+h_{\mathrm{bw}}u,w,x \right) -h_0\left( a,w,x \right) \right) p\left( a+h_{\mathrm{bw}}u,w\mid z,x \right) \mathrm{d}\mu \left( u,w \right)} \right]\\
	% =&\mathbb{E} \left[ q_0\left( a,Z,X \right) \int{K\left( u \right) \left( h_{\mathrm{bw}}u \right) ^2\left( \left( \frac{\partial}{\partial A}h_0\left( a,w,x \right) \frac{\partial}{\partial A}p\left( a,w\mid z,x \right) +\frac{1}{2}\left( \frac{\partial ^2}{\partial A^2}h_0\left( a,w,x \right) \right) \right) p\left( a,w\mid z,x \right) \right) d\mu \left( u,w \right)} \right] +O\left( h_{\mathrm{bw}}^{3} \right)\\
	=&h_{\mathrm{bw}}^{2}\kappa _2(K)\mathbb{E} \left[ q_0\left( a,Z,X \right) \left[ \frac{\partial}{\partial A}h_0\left( a,W,X \right) \frac{\partial}{\partial A}p\left( a,W\mid Z,X \right) +\frac{1}{2}\left( \frac{\partial ^2}{\partial A^2}h_0\left( a,W,X \right) \right) \right] \right] \\
    &+ o\left( h_{\mathrm{bw}}^{2} \right)
\end{aligned}
$$
 
For the \textbf{(I)}-\textbf{(III)} term, we have

\begin{align}
	&\textbf{(I)} -\textbf{(II)} +\textbf{(III)} \nonumber \\
    =&\mathbb{E} \left[ K_{h_{\mathrm{bw}}}\left( A-a \right) \left( \hat{q}\left( a,Z,X \right) -q_0\left( a,Z,X \right) \right) \left( h_0\left( a,W,X \right) -\hat{h}\left( a,W,X \right) \right) \right]   \label{eq.three_a}\\
	+&\mathbb{E} \left[ K_{h_{\mathrm{bw}}}\left( A-a \right) \left( \hat{q}\left( a,Z,X \right) -q_0\left( a,Z,X \right) \right) \left( Y-h_0\left( a,W,X \right) \right) \right] \label{eq.three_b}\\
	+&\mathbb{E} \left[ K_{h_{\mathrm{bw}}}\left( A-a \right) q_0\left( a,Z,X \right) \left( h_0\left( a,W,X \right) -\hat{h}\left( a,W,X \right) \right) -\left( h_0\left( a,W,X \right) -\hat{h}\left( a,W,X \right) \right) \right]  .\label{eq.three_c}
\end{align}

We will explain in turn that the above three convergence rates are $o((nh_{\mathrm{bw}})^{-1/2})$, $o(1)\times O (h_{\mathrm{bw}}^2)$ and $o(1)\times O (h_{\mathrm{bw}}^2)$   respectively. From now on, we prove Eq.~\ref{eq.three_b} is $o(1)\times O (h_{\mathrm{bw}}^2)$. 
$$
\begin{aligned}
	&\mathbb{E} \left[ K_{h_{\mathrm{bw}}}\left( A-a \right) \left( \hat{q}\left( a,Z,X \right) -q_0\left( a,Z,X \right) \right) \left( Y-h_0\left( a,W,X \right) \right) \right]\\
	=&\mathbb{E} \left[ K_{h_{\mathrm{bw}}}\left( A-a \right) \left( \hat{q}\left( a,Z,X \right) -q_0\left( a,Z,X \right) \right) \mathbb{E} \left[ \left( Y-h_0\left( a,W,X \right) \right) \mid A,Z,X \right] \right]\\
	\overset{(1)}{=} &\mathbb{E} \left[ K_{h_{\mathrm{bw}}}\left( A-a \right) \left( \hat{q}\left( a,Z,X \right) -q_0\left( a,Z,X \right) \right) \mathbb{E} \left[ \left( h_0\left( A,W,X \right) -h_0\left( a,W,X \right) \right) \mid A,Z,X \right] \right]\\
	=&\mathbb{E} \left[ K_{h_{\mathrm{bw}}}\left( A-a \right) \left( \hat{q}\left( a,Z,X \right) -q_0\left( a,Z,X \right) \right) \left( h_0\left( A,W,X \right) -h_0\left( a,W,X \right) \right) \right]\\
	= &\mathbb{E} \left[ \left( \hat{q}\left( a,Z,X \right) -q_0\left( a,Z,X \right) \right) \mathbb{E} \left[ K_{h_{\mathrm{bw}}}\left( A-a \right) \left( h_0\left( A,W,X \right) -h_0\left( a,W,X \right) \right) \mid Z,X \right] \right]\\
	\overset{(2)}{=}&\mathbb{E} \left[ \left( \hat{q}\left( a,Z,X \right) -q_0\left( a,Z,X \right) \right) \left\{ O (h_{\mathrm{bw}}^{2}) \right\} \right]\\
	\overset{(3)}{=}&o(1)\times O (h_{\mathrm{bw}}^{2}),\\
\end{aligned}
$$

where (1) is derived from Eq.~\ref{eq.h}, (3) is derived from assumption $\|\hat{q}-q\|_{2}=o(1)$ and (2) is because

$$
\begin{aligned}
	&\mathbb{E} \left[ K_{h_{\mathrm{bw}}}\left( A-a \right) \left( h_0\left( A,W,X \right) -h_0\left( a,W,X \right) \right) \mid Z,X \right]\\
	=&\int{K_{h_{\mathrm{bw}}}\left( a'-a \right) \left( h_0\left( a',w,x \right) -h_0\left( a,w,x \right) \right) p\left( a',w\mid z,x \right) \mathrm{d}\mu \left( a',w \right)}\\
	=&\int{K\left( u \right) \left( h_0\left( a+h_{\mathrm{bw}}u,w,x \right) -h_0\left( a,w,x \right) \right) p\left( a+h_{\mathrm{bw}}u,w\mid z,x \right) \mathrm{d}\mu \left( u,w \right)}\\
	=&\int{K\left( u \right) \left( h_{\mathrm{bw}}u\frac{\partial}{\partial A}h_0\left( a,w,x \right) + O\left( h_{\mathrm{bw}}^{2} \right)\right)  \left( p\left( a,w\mid z,x \right) +O \left( h_{\mathrm{bw}}u \right) \right) \mathrm{d}\mu \left( u,w \right)}\\
	=&O \left( h_{\mathrm{bw}}^{2} \right).\\
\end{aligned}
$$

Next, we prove Eq.~\ref{eq.three_c} is $o(1)\times O (h_{\mathrm{bw}}^2)$. 
$$
\begin{aligned}
	&\mathbb{E} [ K_{h_{\mathrm{bw}}}\left( A-a \right) q_0\left( a,Z,X \right) ( h_0\left( a,W,X \right) -\hat{h}\left( a,W,X \right)) -( h_0\left( a,W,X \right) -\hat{h}\left( a,W,X \right)) ]\\
	&=\mathbb{E} [ ( h_0\left( a,W,X \right) -\hat{h}\left( a,W,X \right) ) \mathbb{E} \left[ K_{h_{\mathrm{bw}}}\left( A-a \right) q_0\left( a,Z,X \right) -1\mid W,X \right] ].\\
\end{aligned}
$$
We consider
$$
\begin{aligned}
	&\mathbb{E} \left[ K_{h_{\mathrm{bw}}}\left( A-a \right) q_0\left( a,Z,X \right) -1\mid W,X \right]\\
	=&\int{\left( K_{h_{\mathrm{bw}}}\left( a'-a \right) q_0\left( a,z,x \right) -1 \right) p\left( a',z\mid w,x \right) \mathrm{d}\mu \left( a',z \right)}\\
	=&\int{K_{h_{\mathrm{bw}}}\left( a'-a \right) q_0\left( a,z,x \right) p\left( a',z\mid w,x \right) \mathrm{d}\mu \left( a',z \right)}-1\\
	=&\int{K\left( u \right) q_0\left( a,z,x \right) p\left( a+h_{\mathrm{bw}}u,z\mid w,x \right) \mathrm{d}\mu \left( u,z \right)}-1\\
	=&\int{K\left( u \right) q_0\left( a,z,x \right) \left( p\left( a,z\mid w,x \right) +h_{\mathrm{bw}}u\frac{\partial}{\partial A}h_0\left( a,w,x \right) +O \left( h_{\mathrm{bw}}^{2} \right) \right) \mathrm{d}\mu \left( u,z \right)}-1\\
	=&\int{q_0\left( a,z,x \right) p\left( a,z\mid w,x \right) d\mu \left( z \right)}-1+O \left( h_{\mathrm{bw}}^{2} \right),\\
\end{aligned}
$$
where we use the first-order Taylor expansion of $p\left( a,z\mid w,x \right)$:
$$
p\left( h_{\mathrm{bw}}u+a,z\mid w,x \right) = p\left( a,z\mid w,x \right) +h_{\mathrm{bw}}u\frac{\partial}{\partial A}h_0\left( a,w,x \right) +O \left( h_{\mathrm{bw}}^{2} \right) .
$$
Therefore, we have
$$
\begin{aligned}
	&\mathbb{E} \left[ \left( h_0\left( a,W,X \right) -\hat{h}\left( a,W,X \right) \right) \left( \int{q_0\left( a,z,x \right) p\left( a,z\mid w,x \right) \mathrm{d}\mu \left( z \right)}-1+O \left( h_{\mathrm{bw}}^{2} \right) \right) \right]\\
	\overset{(1)}{=}&\mathbb{E} \left[ \left( h_0\left( a,W,X \right) -\hat{h}\left( a,W,X \right) \right) \left( \int{q_0\left( a,z,x \right) p\left( a,z\mid w,x \right) \mathrm{d}\mu \left( z \right)}-1 \right) \right] \\&+o\left( 1 \right) \times O \left( h_{\mathrm{bw}}^{2} \right)
\end{aligned}
$$
where (1) is derived from assumption $\|\hat{h}-h\|_{2}=o(1)$. We assert that the first expression is 0:
$$
\begin{aligned}
	&\mathbb{E} \left[ \left( h_0\left( a,W,X \right) -\hat{h}\left( a,W,X \right) \right) \left( \int{q_0\left( a,z,x \right) p\left( a,z\mid w,x \right) \mathrm{d}\mu \left( z \right)}-1 \right) \right]\\
	=&\int{\left( h_0\left( a,w,x \right) -\hat{h}\left( a,w,x \right) \right) q_0\left( a,z,x \right) p\left( a,z,w,x \right) \mathrm{d}\mu \left( z,w,x \right)}\\
    &\qquad\qquad\qquad\qquad\qquad\qquad\qquad -\int{\left( h_0\left( a,w,x \right) -\hat{h}\left( a,w,x \right) \right) p\left( w,x \right) \mathrm{d}\mu \left( w,x \right)}\\
	\overset{(1)}{=}&\int{\hat{h}\left( a,w,x \right) p\left( w,x \right) \mathrm{d}\mu \left( w,x \right)}-\int{\hat{h}\left( a,w,x \right) q_0\left( a,z,x \right) p\left( a,z,w,x \right) \mathrm{d}\mu \left( z,w,x \right)}\\
	=&\int{\hat{h}\left( a,w,x \right) p\left( w,x \right) \mathrm{d}\mu \left( w,x \right)}-\int{\hat{h}\left( a,w,x \right) \frac{p\left( a,w,x \right)}{p\left( a\mid w,x \right)}\mathrm{d}\mu \left( w,x \right)}=0
\end{aligned}
$$
where we used the following property for (1)
$$
\begin{aligned}
	\beta \left( a \right) &=\int{h_0\left( a,w,x \right) p\left( w,x \right) \mathrm{d}\mu \left( w,x \right)}\\
	&=\int{h_0\left( a,w,x \right) q_0\left( a,z,x \right) p\left( a,z,w,x \right) \mathrm{d}\mu \left( z,w,x \right)}.
\end{aligned}
$$
By assumption $\|\hat{h}-h\|_{2}\|\hat{q}-q\|_{2}=o((nh_{\mathrm{bw}})^{-1/2})$, we have Eq.~\ref{eq.three_a} is $o((nh_{\mathrm{bw}})^{-1/2})$. Combining these terms we get
$$
\textbf{(I)} -\textbf{(II)} +\textbf{(III)}=o((nh_{\mathrm{bw}})^{-1/2})+o(1)\times O (h_{\mathrm{bw}}^2)+o(1)\times O (h_{\mathrm{bw}}^2)=o((nh_{\mathrm{bw}})^{-1/2})
$$
where we use $nh_{\mathrm{bw}}^5=O(1)$. Therefore,
$$
    \mathbb E\left[ \hat{\beta}\left( a \right) \right] -\beta (a)=\frac{h_{\mathrm{bw}}^{2}}{2}\kappa _2(K) B +o((nh_{\mathrm{bw}})^{-1/2}),
    $$
    where $B = \mathbb{E} [ q_0( a,Z,X ) [2 \frac{\partial}{\partial A}h_0( a,W,X ) \frac{\partial}{\partial A}p( a,W\mid Z,X ) + \frac{\partial ^2}{\partial A^2}h_0( a,W,X ) ] ]$.
\end{proof}
\begin{theorem}\label{thm.var}
    Under assump.~\ref{assum:pos}-\ref{assum:bridge} and \ref{assum.kernel}, suppose $\|\hat{h}-h\|_{2}=o(1)$, $\|\hat{q}-q\|_{2}=o(1)$ and $\|\hat{h}-h\|_{2}\|\hat{q}-q\|_{2}=o((nh_{\mathrm{bw}})^{-1/2}),nh_{\mathrm{bw}}^5=O(1),nh_{\mathrm{bw}}\to\infty$, $h_0(a,w,x),p(a,z| w,x)$ and $p(a,w| z,x)$ are twice continuously differentiable wrt $a$ as well as $h_0,q_0,\hat{h},\hat{q}$ are uniformly bounded. Then for any $a$, we have the following for the variance of the PKDR estimator given Eq.~\ref{eq.kernel-doubly}:
    $$
    \mathrm{Var}[ \hat{\beta}(a)] = \frac{\Omega_{2}(K)}{nh_{\mathrm{bw}}}(V+o(1)),
    $$
    where $V = \mathbb{E} [ \mathbb{I} ( A=a ) q_0( a,Z,X ) ^2( Y-h_0( a,W,X ) ) ^2 ]$.
\end{theorem}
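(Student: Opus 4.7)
The plan is to reduce the variance to a one-sample second moment calculation and then carry out a standard change-of-variables kernel expansion. Writing $S_i := K_{h_{\mathrm{bw}}}(a_i-a)(y_i-\hat h(a,w_i,x_i))\hat q(a,z_i,x_i) + \hat h(a,w_i,x_i)$, so that $\hat\beta(a) = n^{-1}\sum_i S_i$, and assuming (via cross-fitting, or by treating $\hat h,\hat q$ as fixed at their limits up to $o(1)$ perturbations justified by consistency and uniform boundedness) that the $S_i$ are iid, we have $\mathrm{Var}[\hat\beta(a)] = n^{-1}\mathrm{Var}[S]$. Decompose $S = T_1 + T_2$ with $T_1 := K_{h_{\mathrm{bw}}}(A-a)(Y-\hat h(a,W,X))\hat q(a,Z,X)$ and $T_2 := \hat h(a,W,X)$. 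Uniform boundedness of $\hat h$ gives $\mathrm{Var}[T_2] = O(1)$, and we will see that $\|T_1\|_2 = O(h_{\mathrm{bw}}^{-1/2})$, so by Cauchy--Schwarz $|\mathrm{Cov}(T_1,T_2)| \leq \|T_1\|_2\|T_2\|_2 = O(h_{\mathrm{bw}}^{-1/2})$. Both contributions are $O(n^{-1}h_{\mathrm{bw}}^{-1/2})$, which is negligible relative to the target rate $(nh_{\mathrm{bw}})^{-1}$. Moreover $\mathbb E[T_1]$ is $O(1)$ by the bias computation of Thm.~\ref{thm.bias}, so $(\mathbb E[T_1])^2 = O(1)$ is likewise negligible, and the dominant contribution reduces to $n^{-1}\mathbb{E}[T_1^2]$.

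Next, I compute $\mathbb{E}[T_1^2]$ by conditioning on $A$. Writing $\psi(a') := \mathbb{E}\bigl[(Y - \hat h(a,W,X))^2\hat q(a,Z,X)^2 \bigm| A = a'\bigr]$ and substituting $u = (a'-a)/h_{\mathrm{bw}}$,
$$\mathbb{E}[T_1^2] = \int K_{h_{\mathrm{bw}}}^2(a'-a)\,\psi(a')\,p_A(a')\,da' = \frac{1}{h_{\mathrm{bw}}}\int K^2(u)\,\psi(a+h_{\mathrm{bw}}u)\,p_A(a+h_{\mathrm{bw}}u)\,du.$$
The twice-continuous-differentiability of $p(a,w\mid z,x)$ and $p(a,z\mid w,x)$ in $a$ propagates to smoothness of the product $\psi\cdot p_A$ near $a$; a first-order Taylor expansion together with $\int K^2(u)\,du = \Omega_2(K)$ (the symmetric kernel killing the $O(h_{\mathrm{bw}})$ correction) yields
$$\mathbb{E}[T_1^2] = \frac{\Omega_2(K)}{h_{\mathrm{bw}}}\bigl(\psi(a)\,p_A(a) + o(1)\bigr).$$

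Finally, I replace the nuisance estimates by their limits. Using $\|\hat h - h_0\|_2 = o(1)$, $\|\hat q - q_0\|_2 = o(1)$, and uniform boundedness of $h_0,q_0,\hat h,\hat q$, a dominated-convergence argument yields $\psi(a)\,p_A(a) = \mathbb{E}\bigl[(Y-h_0(a,W,X))^2 q_0(a,Z,X)^2 \bigm| A=a\bigr]\,p_A(a) + o(1) = V + o(1)$, adopting the paper's convention that the symbol $\mathbb{I}(A=a)$ inside the expectation defining $V$ encodes the density-weighted conditional expectation at $A=a$. Dividing by $n$ delivers $\mathrm{Var}[\hat\beta(a)] = \Omega_2(K)(nh_{\mathrm{bw}})^{-1}(V + o(1))$ as claimed. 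The main obstacle is justifying that the $L^2$-consistency of $\hat h,\hat q$ survives localisation by $K^2_{h_{\mathrm{bw}}}$ at $a$: the cleanest fix is cross-fitting, so that $\hat h,\hat q$ are independent of the evaluation sample and the ``conditional iid'' step becomes rigorous; without it, one has to lean on uniform boundedness of the nuisance class together with the product-rate condition $\|\hat h-h_0\|_2\|\hat q-q_0\|_2 = o((nh_{\mathrm{bw}})^{-1/2})$ from Thm.~\ref{thm.bias} to absorb the cross terms in the Taylor remainder.
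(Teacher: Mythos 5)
Your proposal is correct and follows essentially the same route as the paper: cross-fitting to reduce the variance to a one-sample second moment, a kernel change-of-variables expansion giving the $\Omega_2(K)/h_{\mathrm{bw}}$ factor times the localized second moment $V$, and nuisance consistency plus uniform boundedness to dispose of the remaining terms. The only difference is bookkeeping — the paper swaps $(\hat h,\hat q)$ for $(h_0,q_0)$ first via a three-term variance decomposition and then does the kernel calculation at the true nuisances, whereas you do the kernel calculation at the estimated nuisances and swap afterwards — and the localization concern you flag (marginal $L^2$ rates controlling the kernel-weighted error near $A=a$) is handled no more rigorously in the paper's own treatment of its term (II).
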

\begin{proof}
For convenience, we let

\begin{align}
	m\left( o;h,q \right) &=K_{h_{\mathrm{bw}}}\left( A-a \right) \left( Y-h\left( a,W,X \right) \right) q\left( a,Z,X \right)   \label{eq.m}\\
	\phi \left( o;h,q \right) &=m\left( o;h,q \right) +h\left( a,W,X \right)  \label{eq.phi}
\end{align}

We first use cross-fitting which allows us to exchange the order of summation and variance. More specifically, we split the data randomly into two halves $O_1$ and $O_2$. Then we have
$$
\begin{aligned}
	\mathbb{E} \left[ \mathbb{E} _n\left[ \phi _1\left( o;\hat{h},\hat{q} \right) \right] ^2 \right] &=\mathbb{E} \left[ \mathbb{E} \left[ \mathbb{E} _n\left[ \phi _1\left( o;\hat{h},\hat{q} \right) \right] ^2|O _2 \right] \right]\\
	&=n^{-1}\mathbb{E} \left[ \mathbb{E} \left[ \left( \phi _1\left( o;\hat{h},\hat{q} \right) ^2 \right) |O _2 \right] \right]\\
\end{aligned}
$$
We will omit this step later, please identify it according to the context. According to the definition of variance, we have
$$
\begin{aligned}
	&\mathrm{Var}\left( \mathbb{E} _n\left[ \phi \left( o;\hat{h},\hat{q} \right) \right] -\beta \left( a \right) \right) \\
   \leq &\underbrace{\mathrm{Var}\left( \mathbb{E} _n\left[ \phi \left( o;h_0,q_0 \right) \right] -\beta \left( a \right) \right)}_{\displaystyle{\textbf{(I)}}} +\underbrace{\mathrm{Var}\left( \mathbb{E} _n\left[ \phi \left( o;h_0,q_0 \right) -\phi \left( o;\hat{h},\hat{q} \right) \right] \right)}_{\displaystyle{\textbf{(II)}}}\\
	+&2\underbrace{\sqrt{\mathrm{Var}\left( \mathbb{E} _n\left[ \phi \left( o;h_0,q_0 \right) \right] -\beta \left( a \right) \right) \mathrm{Var}\left( \mathbb{E} _n\left[ \phi \left( o;h_0,q_0 \right) -\phi \left( o;\hat{h},\hat{q} \right) \right] \right)}}_{\displaystyle{\textbf{(III)}}}\\
\end{aligned}
$$
We use cross-fitting and by Eq.~\ref{eq.phi}:
\begin{equation} \label{eq.var}
   \textbf{(I)} =\frac{1}{n}\mathrm{Var}\left( m\left( o;h_0,q_0 \right) +h_0\left( a,W,X \right) -\beta \left( a \right) \right) 
\end{equation}
We first consider $\mathrm{Var}\left( m\left( o;h_0,q_0 \right) \right)$. Since
$$
\begin{aligned}
	\mathrm{Var}\left( m\left( o;h_0,q_0 \right) \right) &=\frac{1}{n}\left( \mathbb{E} \left[ m\left( o;h_0,q_0 \right) ^2 \right] -\left( \mathbb{E} \left[ m\left( o;h_0,q_0 \right) \right] \right) ^2 \right)\\
	&\le \frac{1}{n}\mathbb{E} \left[ m\left( o;h_0,q_0 \right) ^2 \right],
\end{aligned}
$$
we only consider the second moment of a term in the estimator
\begin{equation}\label{eq.first}
    \begin{aligned}
	&\mathbb{E} \left[ m\left( o;h_0,q_0 \right) ^2 \right]\\
	=&\mathbb{E} \left[ \left( K_{h_{\mathrm{bw}}}\left( A-a \right) \left( Y-h_0\left( a,W,X \right) \right) q_0\left( a,Z,X \right) \right) ^2 \right]\\
	=&\mathbb{E} \left[ q_0\left( a,Z,X \right) ^2\mathbb{E} \left[ K_{h_{\mathrm{bw}}}\left( A-a \right) ^2\left( Y-h_0\left( a,W,X \right) \right) ^2\mid Z,X \right] \right]\\
	=&\mathbb{E} \left[ q_0\left( a,Z,X \right) ^2\int{\frac{1}{h_{\mathrm{bw}}^{2}}K\left( \frac{a'-a}{h_{\mathrm{bw}}} \right) ^2{\left( y-h_0\left( a,w,x \right) \right) ^2}p\left( a',y,w\mid z,x \right) \mathrm{d}\mu \left( a',y,w \right)} \right]\\
	=&\mathbb{E} \left[ q_0\left( a,Z,X \right) ^2\int{\frac{1}{h_{\mathrm{bw}}}K\left( u \right) ^2\left( y-h_0\left( a,w,x \right) \right) ^2p\left( a+uh_{\mathrm{bw}},y,w\mid z,x \right) \mathrm{d}\mu \left( u,y,w \right)} \right]\\
	=&\mathbb{E} \left[ q_0\left( a,Z,X \right) ^2\int{\frac{1}{h_{\mathrm{bw}}}K\left( u \right) ^2\left( y-h_0\left( a,w,x \right) \right) ^2\left( p\left( a,y,w\mid z,x \right) +o\left( h_{\mathrm{bw}} \right) \right) \mathrm{d}\mu \left( u,y,w \right)} \right]\\
	=&\frac{1}{h_{\mathrm{bw}}}\left\{ \Omega _2(K)V+o\left( h_{\mathrm{bw}} \right) \right\}\\
\end{aligned}
\end{equation}

where $V = \mathbb{E} \left[ \mathbb{I} \left( A=a \right) q_0\left( a,Z,X \right) ^2\left( Y-h_0\left( a,W,X \right) \right) ^2 \right]$

Because $h_0$ is bound, we have $\mathrm{Var}\left( h_0\left( a,W,X \right) -\beta \left( a \right) \right)$ is bound. Therefore, substituting the above equation to Eq.~\ref{eq.var}, we have
$$
\begin{aligned}
	\textbf{(I)} &=\frac{1}{n}\mathrm{Var}\left( m\left( o;h_0,q_0 \right) +h_0\left( a,W,X \right) -\beta \left( a \right) \right)\\
	&\le \frac{1}{n}\mathrm{Var}\left( m\left( o;h_0,q_0 \right) \right) +\frac{1}{n}\mathrm{Var}\left( h_0\left( a,W,X \right) -\beta \left( a \right) \right)\\
	&+\frac{2}{n}\sqrt{\mathrm{Var}\left( m\left( o;h_0,q_0 \right) \right) \mathrm{Var}\left( h_0\left( a,W,X \right) -\beta \left( a \right) \right)}\\
	&\le \frac{1}{nh_{\mathrm{bw}}}\left\{ \Omega _2(K)V+o\left( h_{\mathrm{bw}} \right) \right\} +\frac{C}{n}+\frac{2}{n}\left\{ \frac{1}{nh_{\mathrm{bw}}}\left\{ V+o\left( h_{\mathrm{bw}} \right) \right\} \right\} ^{1/2}\left( \frac{C}{n} \right) ^{1/2}\\
	&\approx \frac{1}{nh_{\mathrm{bw}}}\left\{ \Omega _2(K)V+o\left( h_{\mathrm{bw}} \right) \right\}\\
\end{aligned}
$$

For the term of \textbf{(II)}, we have 
\begin{equation}\label{eq.var2}
    \begin{aligned}
	\textbf{(II)} = &\mathrm{Var}\left( \mathbb{E} _n\left[ \phi \left( o;h_0,q_0 \right) -\phi \left( o;\hat{h},\hat{q} \right) \right] \right) \\
    =&\frac{1}{n}\left( \mathbb{E} \left[ \left( \phi \left( o;h_0,q_0 \right) -\phi \left( o;\hat{h},\hat{q} \right) \right) ^2 \right] -\left( \mathbb{E} \left[ \phi \left( o;h_0,q_0 \right) -\phi \left( o;\hat{h},\hat{q} \right) \right] \right) ^2 \right)\\
	\le& \frac{1}{n}\mathbb{E} \left[ \left( \phi \left( o;h_0,q_0 \right) -\phi \left( o;\hat{h},\hat{q} \right) \right) ^2 \right].
\end{aligned}
\end{equation}
Similar to \cite{kallus2020doubly}, according to the definition of Eq.~\ref{eq.phi} and decomposition of Eq.~\ref{eq.phi} 
 (Eq.~\ref{eq.three_a}-\ref{eq.three_c}), we have
$$
\begin{aligned}
	&\mathbb{E} \left[ \left( \phi \left( o;h_0,q_0 \right) -\phi \left( o;\hat{h},\hat{q} \right) \right) ^2 \right]\\
	\overset{(1)}{=}& \mathbb{E} \left[ K_{h_{\mathrm{bw}}}\left( A-a \right) ^2\left( \hat{q}\left( a,Z,X \right) -q_0\left( a,Z,X \right) \right) ^2\left( h_0\left( a,W,X \right) -\hat{h}\left( a,W,X \right) \right) ^2 \right]\\
	&+\mathbb{E} \left[ K_{h_{\mathrm{bw}}}\left( A-a \right) ^2\left( \hat{q}\left( a,Z,X \right) -q_0\left( a,Z,X \right) \right) ^2\left( Y-h_0\left( a,W,X \right) \right) ^2 \right] \\
	&+\mathbb{E} \left[ K_{h_{\mathrm{bw}}}\left( A-a \right) ^2q_0\left( a,Z,X \right) ^2\left( h_0\left( a,W,X \right) -\hat{h}\left( a,W,X \right) \right) ^2 \right] \\
    &+\mathbb{E} \left[ \left( \hat{h}\left( a,W,X \right) -h_0\left( a,W,X \right) \right) ^2 \right]+\Delta\\
    \overset{(2)}{\lesssim}& \mathbb{E} \left[ K_{h_{\mathrm{bw}}}\left( A-a \right) ^2\left( \hat{q}\left( a,Z,X \right) -q_0\left( a,Z,X \right) \right) ^2\left( h_0\left( a,W,X \right) -\hat{h}\left( a,W,X \right) \right) ^2 \right]\\
	&+\mathbb{E} \left[ K_{h_{\mathrm{bw}}}\left( A-a \right) ^2\left( \hat{q}\left( a,Z,X \right) -q_0\left( a,Z,X \right) \right) ^2\left( Y-h_0\left( a,W,X \right) \right) ^2 \right] \\
	&+\mathbb{E} \left[ K_{h_{\mathrm{bw}}}\left( A-a \right) ^2q_0\left( a,Z,X \right) ^2\left( h_0\left( a,W,X \right) -\hat{h}\left( a,W,X \right) \right) ^2 \right] \\
    &+\mathbb{E} \left[ \left( \hat{h}\left( a,W,X \right) -h_0\left( a,W,X \right) \right) ^2 \right]+O(1)\\
    \overset{(3)}{\lesssim}& h_{\mathrm{bw}}^{-1}\max \left\{ \mathbb{E} \left[ \left\| h_0\left( a,W,X \right) -\hat{h}\left( a,W,X \right) \right\| _{2}^{2} \right] ,\mathbb{E} \left[ \left\| \hat{q}\left( a,Z,X \right) -q_0\left( a,Z,X \right) \right\| _{2}^{2} \right] \right\} +O(1)\\
    =&o (h_{\mathrm{bw}}^{-1})
\end{aligned}
$$
where (1) is the square expansion of Eq.~\ref{eq.three_a}-\ref{eq.three_c} and $\Delta$ is sum of cross terms, (2) is derived from $\hat{h},\hat{q},h,q$ is uniformly bounded, (3) uses the same approach as Eq.~\ref{eq.first}.

Therefore, substituting the
above equation to Eq.~\ref{eq.var2}, we have
\begin{equation}\label{eq.at2}
    \textbf{(II)} = \mathrm{Var}\left( \mathbb{E} _n\left[ \phi \left( o;h_0,q_0 \right) -\phi \left( o;\hat{h},\hat{q} \right) \right] \right) \le o\left( \left( nh_{\mathrm{bw}} \right) ^{-1} \right) 
\end{equation}

For the term of \textbf{(III)}, we only need to substitute \textbf{(I)} and \textbf{(II)} to \textbf{(III)}

$$
\begin{aligned}
	\textbf{(III)}=&\sqrt{\mathrm{Var}\left( \mathbb{E} _n\left[ \phi \left( o;h_0,q_0 \right) \right] -\beta \left( a \right) \right) \mathrm{Var}\left( \mathbb{E} _n\left[ \phi \left( o;h_0,q_0 \right) -\phi \left( o;\hat{h},\hat{q} \right) \right] \right)}\\
	=&\left\{ \frac{1}{nh_{\mathrm{bw}}}\left\{ V+o\left( h_{\mathrm{bw}} \right) \right\} \right\} ^{1/2}o\left( n^{-1/2}h_{\mathrm{bw}}^{-1/2} \right)\\
\end{aligned}
$$

Therefore, combining the
three terms \textbf{(I)}, \textbf{(II)} and \textbf{(III)}, we get

$$
\begin{aligned}
	&\mathrm{Var}\left( \mathbb{E} _n\left[ \phi \left( o;\hat{h},\hat{q} \right) \right] -\beta \left( a \right) \right)\\
	=&\frac{1}{nh_{\mathrm{bw}}}\left\{ \Omega _2\left( k \right) V+o\left( h_{\mathrm{bw}} \right) \right\} +2\left\{ \frac{1}{nh_{\mathrm{bw}}}\left\{ V+o\left( h_{\mathrm{bw}} \right) \right\} \right\} ^{1/2}o\left( n^{-1/2}h_{\mathrm{bw}}^{-1/2} \right) +o\left( n^{-1}h_{\mathrm{bw}}^{-1} \right)\\
	=&\frac{1}{nh_{\mathrm{bw}}}\left\{ \Omega _2(k)V+o\left( 1 \right) \right\}\\
\end{aligned}
$$
\end{proof}

\thmdracausal*
\begin{proof}
By Theorem.~\ref{thm.bias} and~\ref{thm.var}, we completed the proof. If we want to optimize the bias-variance tradeoff of the asymptotic mean squared error, we choose the optimal bandwidth $h_{\mathrm{bw}}$ such that neither term dominates the other.
$$
\begin{aligned}
	\mathrm{MSE}\left( \hat{\beta}\left( a \right) -\beta \left( a \right) \right) &=\mathrm{Bias}^2+\mathrm{Variance}\\
	&=\frac{h_{\mathrm{bw}}^{4}}{4}\left( \kappa \left( K \right) B \right) ^2+\frac{1}{nh_{\mathrm{bw}}}\Omega \left( K \right) V+o\left( \frac{1}{nh_{\mathrm{bw}}} \right)\\
\end{aligned}
$$
Optimizing the leading terms of the asymptotic MSE with respect to the bandwidth $h_{\mathrm{bw}}$:
$$
\frac{\partial}{\partial h_{\mathrm{bw}}}\mathrm{MSE}=\left( \kappa \left( K \right) B \right) ^2h_{\mathrm{bw}}^{3}-\frac{\Omega \left( K \right) V}{nh_{\mathrm{bw}}^{2}}=0
$$
Therefore, we can select the optimal bandwidth is $h_{\mathrm{bw}} = O(n^{-1/5})$ in terms of the mean squared error (MSE) that converges at the rate of $O(n^{-4/5})$.
\end{proof}

\subsection{Consistency of the Estimator}\label{Consistency}

\begin{theorem}
\label{thm.consistency}
    Under assump.~\ref{assum:pos}-\ref{assum:bridge} and \ref{assum.kernel}, suppose $h_0(a,w,x)$ and $p(a,w| z,x)$ are twice continuously differentiable wrt $a$ as well as $h_0,q_0,\hat{h},\hat{q}$ are uniformly bounded. Then, for some universal constants $c_1$ and $c_2$, with probability $1-\eta$, the PKDR estimator given Eq.~\ref{eq.kernel-doubly} error is bounded by:
    $$
\left| \beta (a)-\hat{\beta }(a) \right|\le \left\| \mathbb{I} \left( A=a \right) \left( h_0-\hat{h} \right) \right\| _2\left\| \mathrm{proj}\left( \hat{q}-q_0 \right) \right\| _2+c_{1}\sqrt{\frac{\log \left( c_{2}/\eta \right)}{n}}+\frac{h_{\mathrm{bw}}^2}2\kappa_2(K)R+o\left( h_{\mathrm{bw}}^{2} \right)
$$
where $R=\mathbb{E}\left[\hat{q}\left(a,Z,X\right)\left[2\frac{\partial}{\partial A}h_{0}\left(a,W,X\right)\frac{\partial}{\partial A}p\left(a,W\mid Z,X\right)+\left(\frac{\partial^{2}}{\partial A^{2}}h_{0}\left(a,W,X\right)\right)\right]\right]$.
\end{theorem}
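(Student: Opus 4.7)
The plan is to split the error $\beta(a)-\hat\beta(a)$ into a stochastic deviation and a population bias, and to reuse the bias decomposition already developed in the proof of Theorem~\ref{thm.bias}. Write
\[
\hat\beta(a)-\beta(a) \;=\; (\mathbb{E}_n-\mathbb{E})\bigl[\phi(O;\hat h,\hat q)\bigr] \;+\; \bigl(\mathbb{E}[\phi(O;\hat h,\hat q)]-\beta(a)\bigr),
\]
where $\phi$ is the PKDR integrand in Eq.~\ref{eq.phi}. Uniform boundedness of $Y$, $\hat h$, $\hat q$ and of the kernel $K$ makes $\phi(O;\hat h,\hat q)$ a uniformly bounded random variable. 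Using a cross-fitting construction (as in the proof of Theorem~\ref{thm.var}) so that $\hat h,\hat q$ are independent of the fold on which $\mathbb{E}_n$ is computed, Hoeffding's inequality yields
$\bigl|(\mathbb{E}_n-\mathbb{E})[\phi]\bigr| \le c_1 \sqrt{\log(c_2/\eta)/n}$
with probability at least $1-\eta$ for universal constants $c_1,c_2$. This produces the second term on the right-hand side of the claimed bound.

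For the population bias, I would reproduce the same four-piece decomposition used in the proof of Theorem~\ref{thm.bias}, namely the cross-piece Eq.~\ref{eq.three_a}, the two pieces Eq.~\ref{eq.three_b}--\ref{eq.three_c}, and the ``(IV)-type'' Taylor term. Crucially, I would \emph{not} swap $\hat q$ for $q_0$ anywhere, since the theorem's remainder $R$ involves $\hat q$. The two pieces Eq.~\ref{eq.three_b} and Eq.~\ref{eq.three_c} vanish exactly as before by the conditional moment equations Eq.~\ref{eq.h} and Eq.~\ref{eq.q}. For the (IV)-type piece, the second-order Taylor expansion of $h_0(a+u h_{\mathrm{bw}},W,X)$ and $p(a+u h_{\mathrm{bw}},W\mid Z,X)$ around $A=a$, combined with the second-order symmetric kernel assumption (Assumption~\ref{assum.kernel}) that kills odd moments, contributes precisely $\tfrac{h_{\mathrm{bw}}^2}{2}\kappa_2(K)\,R + o(h_{\mathrm{bw}}^2)$ with $R$ carrying $\hat q$. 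Finally, for the surviving ``product'' piece Eq.~\ref{eq.three_a}, I would apply iterated expectation conditional on $(A,W,X)$, collapse the $Z$-dependent factor $\hat q(a,Z,X)-q_0(a,Z,X)$ to $\operatorname{proj}(\hat q-q_0)$, and then use Cauchy--Schwarz to obtain $\|\mathbb{I}(A=a)(h_0-\hat h)\|_2\,\|\operatorname{proj}(\hat q-q_0)\|_2$; the localizing $\mathbb{I}(A=a)$ encodes the effect of the kernel weight $K_{h_{\mathrm{bw}}}(A-a)$ as $h_{\mathrm{bw}}$ shrinks.

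The main technical obstacle is making this last step rigorous at \emph{finite} bandwidth: one has to justify replacing the $K_{h_{\mathrm{bw}}}(A-a)$ factor inside the $L^2$ seminorm by the localizer $\mathbb{I}(A=a)$, absorbing the discrepancy into the $o(h_{\mathrm{bw}}^2)$ remainder, while simultaneously checking that the Taylor step in the (IV)-type term genuinely produces $R$ with $\hat q$ rather than $q_0$ (the asymmetric treatment of the two nuisances in the product-vs-Taylor split is what distinguishes this argument from Theorem~\ref{thm.bias}). Summing the concentration bound, the vanishing terms, the Taylor leading term $\tfrac{h_{\mathrm{bw}}^2}{2}\kappa_2(K) R$, and the Cauchy--Schwarz product bound yields the stated high-probability inequality.
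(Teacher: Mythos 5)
Your outer split (an empirical-deviation term controlled by a concentration inequality, plus a population bias term) matches the paper's term $(\mathbf{III})$ and is fine. The gap is in how you extract the other two terms of the bound. You propose to reuse the kernel-level decomposition from the proof of Thm.~\ref{thm.bias} (the pieces Eq.~\ref{eq.three_a}--\ref{eq.three_c} plus the ``(IV)-type'' Taylor term) and assert that Eq.~\ref{eq.three_b} and Eq.~\ref{eq.three_c} vanish exactly by the conditional moment equations. They do not: at the kernel level those pieces are only $o(1)\times O(h_{\mathrm{bw}}^2)$, and even that relies on $\|\hat h-h_0\|_2=o(1)$ and $\|\hat q-q_0\|_2=o(1)$, assumptions that Thm.~\ref{thm.consistency} does not make (it assumes only smoothness and uniform boundedness). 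Keeping $\hat q$ everywhere is also inconsistent with that decomposition, whose (IV) term carries $q_0$; and the obstacle you flag yourself --- replacing $K_{h_{\mathrm{bw}}}(A-a)$ by $\mathbb{I}(A=a)$ inside the $L^2$ seminorm after Cauchy--Schwarz and absorbing the discrepancy into $o(h_{\mathrm{bw}}^2)$ --- is precisely the step that cannot be justified, and it is not needed.

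The paper's proof avoids this by adding and subtracting the \emph{indicator-level} functional evaluated at the \emph{estimated} nuisances: it bounds $|\beta(a)-\hat\beta(a)|$ by $(\mathbf{I})+(\mathbf{II})+(\mathbf{III})$ with $(\mathbf{I})=\left|\mathbb{E}[\mathbb{I}(A=a)q_0(Y-h_0)+h_0]-\mathbb{E}[\mathbb{I}(A=a)\hat q(Y-\hat h)+\hat h]\right|$, $(\mathbf{II})=\left|\mathbb{E}[(\mathbb{I}(A=a)-K_{h_{\mathrm{bw}}}(A-a))\hat q(Y-\hat h)]\right|$, and $(\mathbf{III})$ the empirical deviation (Bernstein). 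In $(\mathbf{I})$ no kernel appears: using $\mathbb{E}[\mathbb{I}(A=a)Y(q_0-\hat q)]=\mathbb{E}[\mathbb{I}(A=a)h_0(q_0-\hat q)]$ (from Eq.~\ref{eq.h}) and $\mathbb{E}[h_0-\hat h]=\mathbb{E}[\mathbb{I}(A=a)q_0(h_0-\hat h)]$ (from Eq.~\ref{eq.q}), the cross terms cancel \emph{identically}, giving $(\mathbf{I})=\left|\mathbb{E}[\mathbb{I}(A=a)(q_0-\hat q)(h_0-\hat h)]\right|$, and conditioning on $(A,W,X)$ plus Cauchy--Schwarz yields $\|\mathbb{I}(A=a)(h_0-\hat h)\|_2\,\|\mathrm{proj}(\hat q-q_0)\|_2$ with no remainder to absorb. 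The kernel enters only in $(\mathbf{II})$, where the second-order Taylor expansion of $h_0$ and $p(\cdot,w\mid z,x)$ with the symmetric kernel produces $\frac{h_{\mathrm{bw}}^2}{2}\kappa_2(K)R+o(h_{\mathrm{bw}}^2)$ with $R$ carrying $\hat q$, exactly as in the statement. The missing idea in your plan is therefore this indicator-level add-and-subtract at $(\hat h,\hat q)$, which cleanly separates the doubly robust product term (exact, kernel-free) from the kernel-approximation bias (Taylor), instead of trying to read both off the decomposition of Thm.~\ref{thm.bias}.
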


\begin{proof}
From the relationship between causal effect and nuisance function, we have
$$
\begin{aligned}
\left| \beta \left( a \right) -\hat{\beta }\left( a \right) \right|&=\left| \mathbb{E} \left[ \mathbb{I} (A=a)q_0\left( a,Z,X \right) \left( Y-h_0\left( a,W,X \right) \right) +h_0\left( a,W,X \right) \right] -\hat{\beta }(a) \right| \\
&\leq\underbrace{\left| \mathbb{E} \left[ \mathbb{I} \left( A=a \right) q_0\left( Y-h_0 \right) +h_0 \right] -\mathbb{E} \left[ \mathbb{I} \left( A=a \right) \hat{q}\left( Y-\hat{h} \right) +\hat{h}\ \right] \right|}_{\mathbf{(I)}} \\
&+\underbrace{\left| \mathbb{E} \left[ \left( \mathbb{I} \left( A=a \right) -K_{h_{\mathrm{bw}}}\left( A-a \right) \right) \hat{q}\left( Y-\hat{h} \right) \right] \right|}_{\mathbf{(II)}} \\
&+\underbrace{\left| \mathbb{E} \left[ K_{h_{\mathrm{bw}}}\left( A-a \right) \hat{q}\left( Y-\hat{h} \right) +\hat{h} \right] -\hat{\beta }(a) \right|}_{\mathbf{(III)}}
\end{aligned}
$$
where $h_0=h_0\left(a,W,X\right), q_0=q_0\left(a,Z,X\right), \hat{h}=\hat{h}\left(a,W,X\right)$ and $\hat{q}=\hat{q}\left(a,Z,X\right)$.

We can bound each term as follows. For the first term, we have
$$
\begin{aligned}
	\left( \mathbf{I} \right) &=\mathbb{E} \left[ \mathbb{I} \left( A=a \right) \left\{ q_0\left( Y-h_0 \right) -\hat{q}\left( Y-\hat{h} \right) \right\} \right] -\mathbb{E} \left[ h_0-\hat{h} \right]\\
	&=\mathbb{E} \left[ \mathbb{I} \left( A=a \right) Y\left( q_0-\hat{q} \right) \right] +\mathbb{E} \left[ \mathbb{I} (A=a)\left\{ \hat{q}\hat{h}-q_0h_0 \right\} \right] -\mathbb{E} \left[ h_0-\hat{h} \right]\\
	&\overset{(1)}{=}\mathbb{E} \left[ \mathbb{I} \left( A=a \right) h_0\left( q_0-\hat{q} \right) \right] +\mathbb{E} \left[ \mathbb{I} (A=a)\left\{ \hat{q}\hat{h}-q_0h_0 \right\} \right] -\mathbb{E} \left[ \mathbb{I} (A=a)q_0\left( h_0-\hat{h} \right) \right]\\
	&=\mathbb{E} \left[ \mathbb{I} \left( A=a \right) \left( q_0-\hat{q} \right) \left( h_0-\hat{h} \right) \right] \overset{(2)}{\le} \left\| \mathbb{I} \left( A=a \right) \left( h_0-\hat{h} \right) \right\| _2\left\| \mathrm{proj}\left( \hat{q}-q_0 \right) \right\| _2\\
\end{aligned}
$$

where (2) is derived from Cauchy's inequality and (1) is derived from 
$$
\begin{aligned}
	\mathbb{E} \left[ \mathbb{I} \left( A=a \right) Y\left( q_0-\hat{q} \right) \right] &=\mathbb{E} \left[ \mathbb{I} \left( A=a \right) \left( q_0-\hat{q} \right) \mathbb{E} \left[ Y|A,Z,X \right] \right]\\
	&=\mathbb{E} \left[ \mathbb{I} \left( A=a \right) \left( q_0-\hat{q} \right) \mathbb{E} \left[ h_0\left( A,W,X \right) |A,Z,X \right] \right]\\
	&=\mathbb{E} \left[ \mathbb{I} \left( A=a \right) \left( q_0-\hat{q} \right) h_0 \right]\\
\end{aligned}
$$
$$
\begin{aligned}
	\mathbb{E} \left[ h_0-\hat{h} \right] &=\int{\left( h_0-\hat{h} \right) \frac{p\left( a,w,x \right)}{p\left( a|w,x \right)}d\mu(w,x) }\\
	&=\int{\left( h_0-\hat{h} \right) p\left( a,w,x \right) \mathbb{E} \left[ q_0\left( a,Z,X \right) |A,W,X \right] d\mu(w,x)}\\
	&=\mathbb{E} \left[ \mathbb{I} \left( A=a \right) \left( h_0-\hat{h} \right) q_0 \right]\\
\end{aligned}
$$

For the second term, we have
$$
\begin{aligned}
	&\mathbb{E} \left[ K_{h_{\mathrm{bw}}}\left( A-a \right) \hat{q}\left( a,Z,X \right) \left( Y-\hat{h}\left( a,W,X \right) \right) \right]\\
	=&\mathbb{E} \left[ K_{h_{\mathrm{bw}}}\left( A-a \right) \hat{q}\left( a,Z,X \right) \mathbb{E} \left[ \left( Y-\hat{h}\left( a,W,X \right) \right) |A,Z,X \right] \right]\\
	=&\mathbb{E} \left[ K_{h_{\mathrm{bw}}}\left( A-a \right) \hat{q}\left( a,Z,X \right) \left( h_0\left( A,W,X \right) -\hat{h}\left( a,W,X \right) \right) \right]\\
	=&\mathbb{E} \left[ \hat{q}\left( a,Z,X \right) \mathbb{E} \left[ K_{h_{\mathrm{bw}}}\left( A-a \right) \left( h_0\left( A,W,X \right) -\hat{h}\left( a,W,X \right) \right) |Z,X \right] \right]\\
	=&\mathbb{E} \left[ \hat{q}\left( a,Z,X \right) \int{K\left( a'-a \right) \left( h_0\left( a',w,x \right) -\hat{h}\left( a,w,x \right) \right) p\left( w,a'|z,x \right) d\mu \left( w,a' \right)} \right]\\
	=&\mathbb{E} \left[ \hat{q}\left( a,Z,X \right) \int{K\left( u \right) \underbrace{\left( h_0\left( a+h_{\mathrm{bw}}u,w,x \right) -\hat{h}\left( a,w,x \right) \right) p\left( w,a+h_{\mathrm{bw}}u|z,x \right)}_{\mathbf{(\star)}} d\mu \left( w,u \right)} \right]\\
\end{aligned}
$$

where the last line holds from $a'=h_{\mathrm{bw}}u+a$. Consider Taylor expansion of $h_0\left( a,w,x \right)$ and $p\left( a,w\mid z,x \right)$ around $A=a$:
$$
\begin{aligned}
	p\left( h_{\mathrm{bw}}u+a,w\mid z,x \right) -p\left( a,w\mid z,x \right) &=h_{\mathrm{bw}}u\frac{\partial}{\partial A}p\left( a,w\mid z,x \right) +O\left( h_{\mathrm{bw}}^{2} \right)\\
	h_0\left( a+h_{\mathrm{bw}}u,w,x \right) -h_0\left( a,w,x \right) &=h_{\mathrm{bw}}u\left( \frac{\partial}{\partial A}h_0\left( a,w,x \right) \right) \\
    &\qquad\qquad\quad+\frac{\left( h_{\mathrm{bw}}u \right) ^2}{2}\left( \frac{\partial ^2}{\partial A^2}h_0\left( a,w,x \right) \right) +O\left( h_{\mathrm{bw}}^{3} \right)\\
\end{aligned}
$$
Therefore, we have
$$
\begin{aligned}
	(\star )&=\left( h_0\left( a,w,x \right) -\hat{h}\left( a,w,x \right) \right) p\left( a,w\mid z,x \right)\\
	&+\left( h_0\left( a,w,x \right) -\hat{h}\left( a,w,x \right) \right) h_{\mathrm{bw}}u\frac{\partial}{\partial A}p\left( a,w\mid z,x \right)\\
	&+h_{\mathrm{bw}}u\left( \frac{\partial}{\partial A}h_0\left( a,w,x \right) \right) p\left( a,w\mid z,x \right)\\
	&+\frac{(h_{\mathrm{bw}}u)^2}{2}\left( \frac{\partial ^2}{\partial A^2}h_0\left( a,w,x \right) \right) p\left( a,w\mid z,x \right)\\
	&+h_{\mathrm{bw}}u\left( \frac{\partial}{\partial A}h_0\left( a,w,x \right) \right) h_{\mathrm{bw}}u\frac{\partial}{\partial A}p\left( a,w\mid z,x \right)\\
	&+\frac{(h_{\mathrm{bw}}u)^2}{2}\left( \frac{\partial ^2}{\partial A^2}h_0\left( a,w,x \right) \right) h_{\mathrm{bw}}u\frac{\partial}{\partial A}p\left( a,w\mid z,x \right) +O\left( h_{\mathrm{bw}}^{3} \right)\\
\end{aligned}
$$

Then, we can compute the conditional expectation by integrating the approximation to the density term by term. Here, $\kappa_{j}(K)$ represents the jth kernel moment, defined as $\kappa_{j}(K)=\int u^{j}K(u)du$. It's important to note that for a symmetric kernel, the odd-order moments integrate to 0. Therefore, we have
$$
\begin{aligned}
	&\mathbb{E} \left[ K_{h_{\mathrm{bw}}}\left( A-a \right) \hat{q}\left( a,Z,X \right) \left( Y-\hat{h}\left( a,W,X \right) \right) \right]\\
	=&\mathbb{E} \left[ \hat{q}\left( a,Z,X \right) \int{K\left( u \right) \left( h_0\left( a+h_{\mathrm{bw}}u,w,x \right) -\hat{h}\left( a,w,x \right) \right) p\left( a+h_{\mathrm{bw}}u,w\mid z,x \right) \mathrm{d}\mu \left( u,w \right)} \right]\\
	=&h_{\mathrm{bw}}^{2}\kappa_{2}(K)\mathbb{E}\left[\hat{q}\left(a,Z,X\right)\left[\frac{\partial}{\partial A}h_{0}\left(a,W,X\right)\frac{\partial}{\partial A}p\left(a,W\mid Z,X\right)+\frac{1}{2}\left(\frac{\partial^{2}}{\partial A^{2}}h_{0}\left(a,W,X\right)\right)\right]\right]\\
    &+\mathbb{E} \left[ \mathbb{I} \left( A=a \right) \hat{q}\left( a,Z,X \right) \left( Y-\hat{h}\left( a,W,X \right) \right) \right]+o\left( h_{\mathrm{bw}}^{2} \right)\\
\end{aligned}
$$
Therefore, we obtain
$$
\begin{aligned}
	\left( \mathbf{II} \right) &=\left| \mathbb{E} \left[ \left( \mathbb{I} \left( A=a \right) -K_{h_{\mathrm{bw}}}\left( A-a \right) \right) \hat{q}\left( Y-\hat{h} \right) \right] \right|\\
	&=\frac{h_{\mathrm{bw}}^2}2\kappa_2(K)R+o\left( h_{\mathrm{bw}}^{2} \right)\\
\end{aligned}
$$
where $R=\mathbb{E}\left[\hat{q}\left(a,Z,X\right)\left[2\frac{\partial}{\partial A}h_{0}\left(a,W,X\right)\frac{\partial}{\partial A}p\left(a,W\mid Z,X\right)+\left(\frac{\partial^{2}}{\partial A^{2}}h_{0}\left(a,W,X\right)\right)\right]\right]$.

The third terms are upper-bounded by Bernstein inequality. This concludes
$$
\left| \beta (a)-\hat{\beta }(a) \right|\le \left\| \mathbb{I} \left( A=a \right) \left( h_0-\hat{h} \right) \right\| _2\left\| \mathrm{proj}\left( \hat{q}-q_0 \right) \right\| _2+c_{1}\sqrt{\frac{\log \left( c_{2}/\eta \right)}{n}}+\frac{h_{\mathrm{bw}}^2}2\kappa_2(K)R+o\left( h_{\mathrm{bw}}^{2} \right)
$$
\end{proof}
\begin{remark}
    According to Thm.~\ref{thm.q_convergence} and~\ref{thm.h_convergence}, we have $\Vert \hat{h} - h_0 \Vert_2 = O(n^{-1/4})$ and $\Vert \hat{q} - q_0 \Vert_2 = O(n^{-1/4})$. Therefore the order of the estimation error is controlled by $h_{\mathrm{bw}}$. From Thm.~\ref{thm.dr1_causal}, we know that the optimal bandwidth is $h_{\mathrm{bw}} = O(n^{-1/5})$ in terms of estimator error that converges at the rate of $O(n^{-2/5})$. Note that this rate is slower than the optimal rate $O(n^{-1/2})$, which is a reasonable sacrifice to handle continuous treatment within the proximal causal framework and agrees with existing studies \citep{kennedy2017non,colangelo2020double}.
\end{remark}

\newpage
\section{Computation Details} \label{Computation}
We can consider the nuisance/bridge function class $\mathcal{Q}$ or $\mathcal{H}$  and the dual/critic functional class $\mathcal{M}$ or $\mathcal{G}$ are the RKHS class. The inner maximization in Eq.~\ref{eq.min-max-q} and~\ref{eq.min-max-h} may no longer have closed-form solutions with the RKHS norm constraints. Similar to \cite{dikkala2020minimax}, we consider the following optimization problem
\begin{small}
$$
    \begin{aligned}
	&\min_{q\in \mathcal{Q}} \max_{m\in \mathcal{M}} \frac{1}{n}\sum_i{\left( q(a_i,z_i,x_i)-\frac{1}{p(a_i|w_i,x_i)} \right)}m(a_i,w_i,x_i)-\lambda _m\left\| m \right\| _{2,n}^{2}-\gamma _m\left\| m \right\| _{\mathcal{M}}^{2}+\gamma _q\left\| q \right\| _{\mathcal{Q}}^{2}\\
	&\min_{h\in \mathcal{H}} \max_{g\in \mathcal{G}} \frac{1}{n}\sum_i{\left( y_i-h\left( w_i,a_i,x_i \right) \right) g\left( a_i,z_i,x_i \right)}-\lambda _g\left\| g \right\| _{2,n}^{2}-\gamma _g\left\| g \right\| _{\mathcal{M}}^{2}+\gamma _h\left\| h \right\| _{\mathcal{Q}}^{2}
\end{aligned}
$$
\end{small}

\begin{proposition}
Suppose $\mathcal{M}$ and $\mathcal{G}$ are RKHS spaces with kernel $K_{\mathcal{M}}$ and $K_{\mathcal{G}}$ equipped with the canonical RKHS norm, then for any $q,h$, we have
$$
\begin{array}{l}
	\begin{aligned}
	\max_{m\in \mathcal{M}} \Phi _{q}^{n}-\lambda _m\left\| m \right\| _{2,n}^{2}-\gamma _m\left\| m \right\| _{\mathcal{M}}^{2}&=\frac{1}{4\gamma _m}\psi _{q,n}^{\top}K_{\mathcal{M} ,n}\left( \frac{\lambda _m}{\gamma _m}\frac{1}{n}K_{\mathcal{M} ,n}+\gamma _mI \right) ^{-1}\psi _{q,n}\\
	&=\frac{1}{4\gamma _m}\psi _{n}^{\top}K_{\mathcal{M} ,n}^{1/2}\left( \frac{\lambda _m}{\gamma _m}\frac{1}{n}K_{\mathcal{M} ,n}+\gamma _mI \right) ^{-1}K_{\mathcal{M} ,n}^{1/2}\psi _n\\
\end{aligned}\\
	\begin{aligned}
	\max_{g\in \mathcal{G}} \Phi _{h}^{n}-\lambda _g\left\| g \right\| _{2,n}^{2}-\gamma _g\left\| g \right\| _{\mathcal{G}}^{2}&=\frac{1}{4\gamma _g}\psi _{h,n}^{\top}K_{\mathcal{G} ,n}\left( \frac{\lambda _g}{\gamma _g}\frac{1}{n}K_{\mathcal{G} ,n}+\gamma _gI \right) ^{-1}\psi _{h,n}\\
	&=\frac{1}{4\gamma _g}\psi _{h,n}^{\top}K_{\mathcal{G} ,n}^{1/2}\left( \frac{\lambda _g}{\gamma _g}\frac{1}{n}K_{\mathcal{G} ,n}+\gamma _gI \right) ^{-1}K_{\mathcal{G} ,n}^{1/2}\psi _{h,n}^{\top}\\
\end{aligned}\\
\end{array}
$$
where $K_{\mathcal{M},n} = (K_{\mathcal{M}}(a_i,w_i,x_i, a_j,w_j,x_j ))_{i,j=1}^{n},K_{\mathcal{G},n} = (K_{\mathcal{G}}(a_i,z_i,x_i, a_j,z_j,x_j ))_{i,j=1}^{n}$ the empirical kernel matrix and $\psi _{q,n}=( \frac{1}{n}( q(a_i,z_i,x_i)-\frac{1}{p(a_i|w_i,x_i)} )) _{i=1}^{n}$, $\psi _{h,n}=\left( \frac{1}{n}\left( y_i-h\left( a_i,w_i,x_i \right) \right) \right) _{i=1}^{n}$.
\end{proposition}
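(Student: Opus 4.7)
The plan is to reduce each inner maximization to a finite-dimensional quadratic program via the representer theorem, solve it in closed form by differentiating, and then manipulate the resulting expression into the two stated equivalent forms using elementary matrix algebra. Because the objective is strictly concave in $m$ (resp.\ $g$) whenever $\gamma_m>0$ (resp.\ $\gamma_g>0$), the maximizer is unique, and because every term in the objective depends on $m$ only through its RKHS inner product against the kernel sections $K_{\mathcal{M}}(\cdot,o_i)$ with $o_i=(a_i,w_i,x_i)$, the representer theorem guarantees a minimizer of the form $m(\cdot)=\sum_{i=1}^{n}\alpha_i K_{\mathcal{M}}(\cdot,o_i)$ for some $\alpha\in\mathbb{R}^n$. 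I would prove the $m$-case in detail and then remark that the $g$-case is identical.

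First, I would express each piece of the objective through the empirical Gram matrix $K_{\mathcal{M},n}$: the evaluations give $m(o_j)=(K_{\mathcal{M},n}\alpha)_j$, so the linear part becomes $\psi_{q,n}^{\top}K_{\mathcal{M},n}\alpha$; the empirical $L^2$ stabilizer becomes $\|m\|_{2,n}^{2}=\tfrac{1}{n}\alpha^{\top}K_{\mathcal{M},n}^{2}\alpha$; and the RKHS-norm penalty becomes $\|m\|_{\mathcal{M}}^{2}=\alpha^{\top}K_{\mathcal{M},n}\alpha$. Collecting terms, the problem reduces to
\begin{equation*}
\max_{\alpha\in\mathbb{R}^{n}}\; \psi_{q,n}^{\top}K_{\mathcal{M},n}\alpha \;-\; \alpha^{\top}\!\left(\tfrac{\lambda_m}{n}K_{\mathcal{M},n}^{2}+\gamma_m K_{\mathcal{M},n}\right)\!\alpha .
\end{equation*}
Setting the gradient to zero gives $K_{\mathcal{M},n}\bigl(\psi_{q,n}-2(\tfrac{\lambda_m}{n}K_{\mathcal{M},n}+\gamma_m I)\alpha\bigr)=0$, hence $\alpha^{\star}=\tfrac{1}{2}(\tfrac{\lambda_m}{n}K_{\mathcal{M},n}+\gamma_m I)^{-1}\psi_{q,n}$; substituting back and using the standard quadratic-maximization identity $\max_\alpha\{b^{\top}\alpha-\alpha^{\top}A\alpha\}=\tfrac{1}{4}b^{\top}A^{-1}b$ with $A=K_{\mathcal{M},n}(\tfrac{\lambda_m}{n}K_{\mathcal{M},n}+\gamma_m I)$ yields the closed-form optimum $\tfrac{1}{4}\psi_{q,n}^{\top}K_{\mathcal{M},n}(\tfrac{\lambda_m}{n}K_{\mathcal{M},n}+\gamma_m I)^{-1}\psi_{q,n}$.

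From here I would match the stated form by factoring a $\gamma_m$ out of the matrix in parentheses (so the prefactor $\tfrac{1}{4\gamma_m}$ absorbs the factor and the matrix becomes $\tfrac{\lambda_m}{\gamma_m}\tfrac{1}{n}K_{\mathcal{M},n}+\gamma_m I$, up to the normalization convention used by the authors inside the inverse). The second equality in the statement follows from the symmetric factorization $K_{\mathcal{M},n}=K_{\mathcal{M},n}^{1/2}K_{\mathcal{M},n}^{1/2}$: since $K_{\mathcal{M},n}$ commutes with any polynomial in itself, one can slide the two factors of $K_{\mathcal{M},n}^{1/2}$ inside the inverse and rewrite the expression in the manifestly symmetric bilinear form shown. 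An entirely parallel argument, replacing $(\mathcal{M},K_{\mathcal{M}},\psi_{q,n})$ by $(\mathcal{G},K_{\mathcal{G}},\psi_{h,n})$ and keeping track of $(a_i,z_i,x_i)$ in place of $(a_i,w_i,x_i)$, handles the $g$-maximization.

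The main obstacle I anticipate is purely bookkeeping: keeping the $\tfrac{1}{n}$ factors consistent between the representer-theorem parametrization and the definitions of $\psi_{q,n}$ and $\|\cdot\|_{2,n}^2$, and reconciling the normalization inside the inverse with the external $\tfrac{1}{4\gamma_m}$ prefactor so that the two displayed forms in the proposition agree (a brief matrix-algebra identity of the type $K\bigl(\tfrac{\lambda}{n}K+\gamma I\bigr)^{-1}=\tfrac{1}{\gamma}K\bigl(\tfrac{\lambda}{n\gamma}K+I\bigr)^{-1}$). A secondary but minor point is invertibility: if $K_{\mathcal{M},n}$ is only positive semidefinite, the pseudo-inverse can be used instead, or one can pass to the image of $K_{\mathcal{M},n}$; either way, because the stationarity equation factors through $K_{\mathcal{M},n}$, the closed-form expression is well defined on the relevant subspace and the derivation goes through unchanged.
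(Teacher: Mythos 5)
Your proposal is correct and follows essentially the same route as the paper's own proof: apply the representer theorem, rewrite the objective through the Gram matrix as a finite-dimensional concave quadratic in $\alpha$, solve the first-order condition, and substitute back, with the same factoring of $\gamma_m$ to reconcile the prefactor $\tfrac{1}{4\gamma_m}$ with the matrix inside the inverse (the paper's derivation indeed ends with $I$ rather than $\gamma_m I$ inside the inverse, so your normalization remark correctly flags the discrepancy in the stated display). Your additional comment on handling a merely positive semidefinite $K_{\mathcal{M},n}$ via the pseudo-inverse is a minor refinement the paper omits but does not change the argument.
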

\begin{proof}
    By the generalized representer theorem of \cite{scholkopf2001generalized}, implies that an optimal solution of the constrained problem takes the form
    $$
    m(a,w,x)=\sum_{i=1}^n{\alpha _i}K_{\mathcal{M}}(a_i,w_i,x_i,a,w,x).
    $$
    We denote $K_{\mathcal{M},n} = (K_{\mathcal{M}}(a_i,w_i,x_i, a_j,w_j,x_j ))_{i,j=1}^{n}$ the empirical kernel matrix. And we have $\|m\|_\mathcal{M}^2=\alpha^\top K_{\mathcal{M},n}\alpha,f(z_i)=e_i^\top K_{\mathcal{M},n}\alpha$ and $\left\| m \right\| _{2,n}^{2}= \frac{1}{n}\alpha ^{\top}K_{\mathcal{M} ,n}^{2}\alpha $.
    Thus the penalized problem is equivalent to the finite dimensional maximization problem:
    $$
\max \limits_{\alpha \in \mathbb{R} ^n}\psi _{q,n}^{\top}K_{\mathcal{M} ,n}\alpha -\alpha ^{\top}\left( \frac{\lambda _m}{n}K_{\mathcal{M} ,n}+\gamma _mI \right) K_{\mathcal{M} ,n}\alpha ,
$$
where $\psi _{q,n}=\left( \frac{1}{n}\left( q(a_i,z_i,x_i)-\frac{1}{p(a_i|w_i,x_i)} \right) \right) _{i=1}^{n}$. By taking the first order condition, the latter has a closed form optimizer of:
$$
\alpha =\frac{1}{2\gamma _m}\left( \frac{\lambda _m}{\gamma _m}\frac{1}{n}K_{\mathcal{M} ,n}+I \right) ^{-1}\psi _{q,n}
$$
and optimal value of:
$$
\frac{1}{4\gamma _m}\psi _{n}^{\top}K_{\mathcal{M} ,n}\left( \frac{\lambda _m}{\gamma _m}\frac{1}{n}K_{\mathcal{M} ,n}+I \right) ^{-1}\psi _{q,n}=\frac{1}{4\gamma _m}\psi _{n}^{\top}K_{\mathcal{M} ,n}^{1/2}\left( \frac{\lambda _m}{\gamma _m}\frac{1}{n}K_{\mathcal{M} ,n}+I \right) ^{-1}K_{\mathcal{M} ,n}^{1/2}\psi _{q,n}
$$
Similarly, we have
$$
\begin{aligned}
	\max_{g\in g} \Phi _{h}^{n}-\lambda _g\left\| g \right\| _{2,n}^{2}-\gamma _g\left\| g \right\| _{\mathcal{G}}^{2}&=\frac{1}{4\gamma _g}\psi _{h,n}^{\top}K_{\mathcal{G} ,n}\left( \frac{\lambda _g}{\gamma _g}\frac{1}{n}K_{\mathcal{G} ,n}+I \right) ^{-1}\psi _{h,n}\\
	&=\frac{1}{4\gamma _g}\psi _{h,n}^{\top}K_{\mathcal{G} ,n}^{1/2}\left( \frac{\lambda _g}{\gamma _g}\frac{1}{n}K_{\mathcal{G} ,n}+I \right) ^{-1}K_{\mathcal{G} ,n}^{1/2}\psi _{h,n}^{\top}\\
\end{aligned}
$$
where $\psi _{h,n}=\left( \frac{1}{n}\left( y_i-h\left( a_i,w_i,x_i \right) \right) \right) _{i=1}^{n}$.
\end{proof}

If further $\mathcal{Q}$ and $\mathcal{H}$ are RKHS, we can obtain the closed form solution about the outer maximization, by solving
$$
\begin{aligned}
	\hat{q}&=\mathrm{arg}\min_{q\in \mathcal{Q}} \psi _{q,n}^{\top}K_{\mathcal{M} ,n}\left( \frac{\lambda _m}{\gamma _m}\frac{1}{n}K_{\mathcal{M} ,n}+I \right) ^{-1}\psi _{q,n}+4\gamma _m\gamma _q\left\| q \right\| _{\mathcal{Q}}^{2}\\
	\hat{h}&=\mathrm{arg}\min_{h\in \mathcal{H}} \psi _{h,n}^{\top}K_{\mathcal{G} ,n}\left( \frac{\lambda _g}{\gamma _g}K_{\mathcal{G} ,n}+I \right) ^{-1}\psi _{h,n}+4\gamma _g\gamma _h\left\| h \right\| _{\mathcal{H}}^{2}\\
\end{aligned}
$$
Again by the representation theorem, we have
$$
\hat{h}\left( \cdot \right) =\sum_i{\hat{\alpha}_i}k\left( \left( a_i,w_i,x_i \right) ,\cdot \right) ,\quad \hat{q}\left( \cdot \right) =\sum_i{\hat{\beta}_i}k\left( \left( a_i,z_i,x_i \right) ,\cdot \right) 
$$
where
$$
\begin{aligned}
	\hat{\alpha}&=\left( K_{\mathcal{H} ,n}G_hK_{\mathcal{H} ,n}+4\gamma _h\gamma _gK_{\mathcal{H} ,n} \right) ^{-1}K_{\mathcal{H} ,n}G_hy_i\\
	\hat{\beta}&=\left( K_{\mathcal{Q} ,n}M_qK_{\mathcal{Q} ,n}+4\gamma _q\gamma _mK_{\mathcal{Q} ,n} \right) ^{-1}K_{\mathcal{Q} ,n}M_q\frac{1}{p(a_i|w_i,x_i)}\\
\end{aligned}
$$
for $G_h=K_{\mathcal{G} ,n}^{1/2}\left( \frac{\lambda _g}{\gamma _g}\frac{1}{n}K_{\mathcal{G} ,n}+I \right) ^{-1}K_{\mathcal{G} ,n}^{1/2}$ and $M_q=K_{\mathcal{M} ,n}^{1/2}\left( \frac{\lambda _m}{\gamma _m}\frac{1}{n}K_{\mathcal{M} ,n}+I \right) ^{-1}K_{\mathcal{M} ,n}^{1/2}$.

There are several tuning parameters in the estimation of $h_0$ and $q_0$. We accept the tricks and recommendation defaults by \cite{dikkala2020minimax}. The following parameters will be used to determine.
\begin{align}
	\frac{\lambda _g}{\gamma _g}\left( n \right) &=\frac{5}{n^{0.4}} \label{eq.hyp.1}\\
	\gamma _h\gamma _g(s,n)&=\frac{s}{2}\left( \frac{\lambda _g}{\gamma _g}\left( n \right) \right) ^4  \label{eq.hyp.2}
\end{align}
For $\frac{\lambda _m}{\gamma _m}$ and $\gamma _q\gamma _m$, we also choose parameters like this.

\newpage
\section{Additional Experiments}\label{additional}
\begin{figure}[htbp!]
    \centering
    \includegraphics[width=0.87\textwidth]{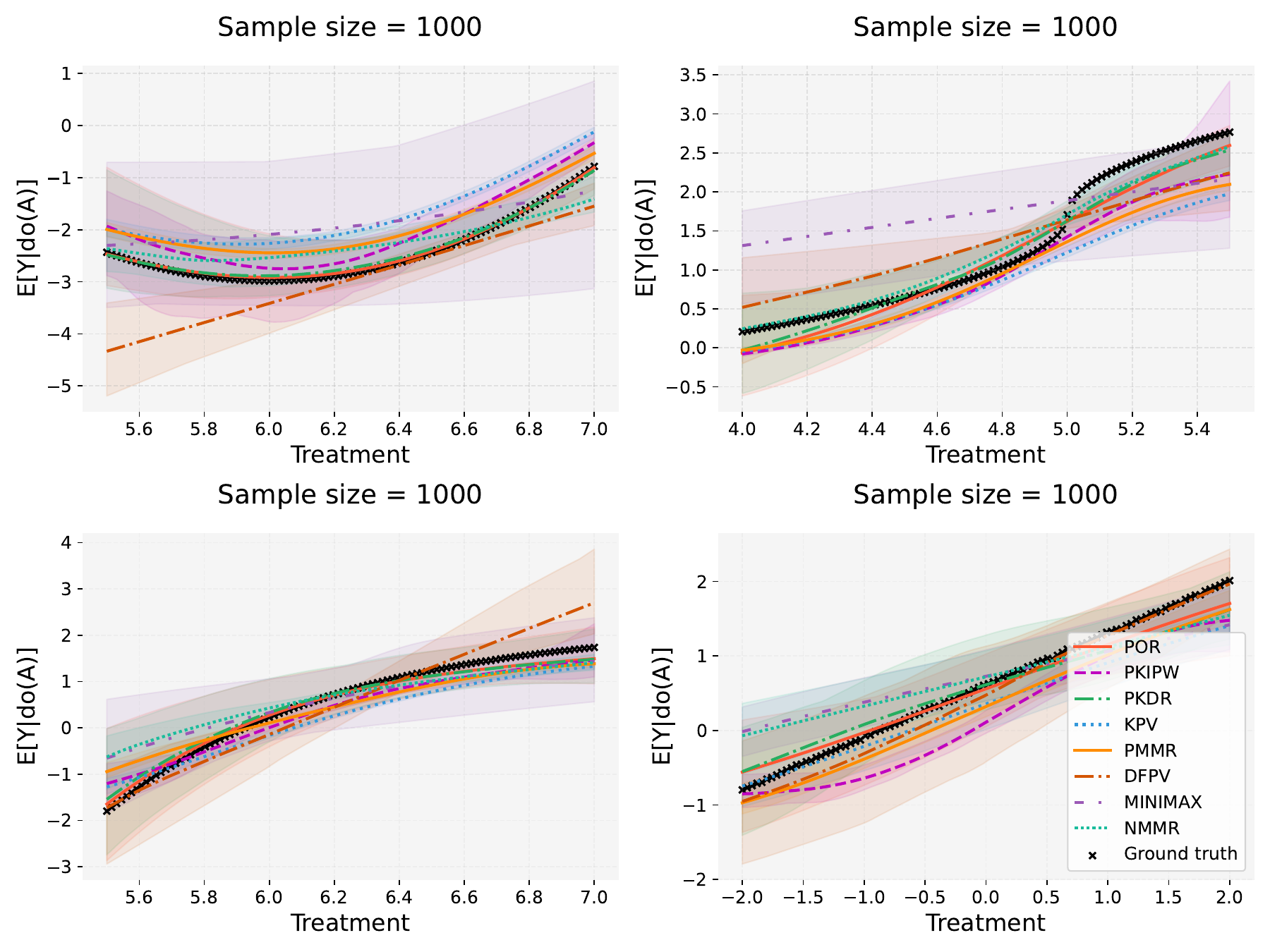}
    \caption{ATE comparison of different methods across various methods on three different data-generating mechanisms; Left Top: ATE comparison using 1000 samples in the Scenario 1; Right Top: ATE comparison using 1000 samples in the Scenario 2; Left Bottom: ATE comparison using 1000 samples in the Scenario 3; Right Bottom: ATE comparison using 1000 samples in the Times series data.}
\label{fig:additional_study}
\end{figure}

In this section, we consider three more synthetic settings introduced in \cite{hu2023bayesian}, as well as the times-series setting introduced in \cite{miao2018confounding} that satisfies the proximal causality framework. Similar to Tab.~\ref{tab:experiment}, our methods are comparable or better than others.

\textbf{Implementation Details. } In the PKIPW and PKDR estimators, we choose the second-order Epanechnikov kernel, with bandwidth $h_{\mathrm{bw}} = c \hat{\sigma}_A n^{-1/5}$ with estimated std $\hat{\sigma}_A$ and the hyperparameter $c=1.5$. For policy estimation, we employ the KDE in the two datasets. The rest of the implementation details are consistent with the experiments in the text.

\noindent\textbf{Evaluation metrics.} We report the causal Mean Squared Error (cMSE) across $100$ equally spaced points in the range of $\mathrm{supp}(A)$: $\mathrm{cMSE}:=\frac{1}{100}\sum_{i=1}^{100} (\mathbb{E}[Y^{a_i}]-\hat{\mathbb{E}}[Y^{a_i}])^2$. Here, we respectively take $\mathrm{supp}(A):=[5.5,7], [4,5.5], [5.5,7],[-2,2]$ in three synthetic data and the times series data. The truth $\mathbb{E}[Y^{a}]$ is derived through Monte Carlo simulations comprising 10,000 replicates of data generation for each $a$.

\subsection{Experiments with Different Data Generating Process}

\textbf{Data generation. }We consider three different data-generating mechanisms \citep{hu2023bayesian}. Under each of scenarios, we simulate the continuous unmeasured confounder $U$ following a normal distribution with mean
1 and variance 0.2, denoted by $U\sim N(1,0.2)$. Similarly, we simulate two type of proxy variables $W| U\sim N(1-2\cdot U,0.2)$ and $Z| U\sim N(-1+1.5\cdot U,0.2)$. The three scenarios vary according to the data generation process based on the models for the outcome $Y|A,U$, and for the treatment $A|U$.

\begin{itemize}[noitemsep,topsep=0.1pt,leftmargin=0.4cm]
    \item \textbf{Scenario 1.} We assume that the true distribution of the outcome $Y$ is a parabola, i.e., the outcome is a second-order regression function of the treatment:
    $$
    Y|A,U\sim N(-10+2.2\cdot(A-6)^2+4\cdot U_i,0.2)
    $$
    and $A|U\sim N(2.5+4\cdot U,0.2)$.
    \item \textbf{Scenario 2.} We assume that the true distribution of the outcome $Y$ has a sigmoidal shape:
    $$
    Y|A,U\sim N(1.5+\mathrm{sign}(A-5)\cdot\sqrt{|A-5|}+1.7\cdot U,0.05),
    $$
    where $\mathrm{sign}$ is the sign function such that it is equal to 1 when $a\geq 0$ and -1 otherwise. We assume $A| U\sim N(1+4\cdot U,0.2)$.

    \item  \textbf{Scenario 3.} We assume that the true distribution of the outcome $Y$ is monotonically increasing with a non-linear relationship with both variables $A$ and $U$:
    $$
    Y|A,U\sim N(-2\cdot e^{-1.4\cdot(A-6)}+0.8\cdot e^{U},0.2).
    $$
    We assume $A\mid U\sim N(2.5+4\cdot U,0.2)$
\end{itemize}

\textbf{Results.} We report the mean and the standard deviation (std) of cMSE over 20 times across four
scenarios, as depicted in Fig.~\ref{fig:additional_study} and Tab.~\ref{additional_tab}. For each scenario, we take $n=1,000$. We can see that the PKIPW method suffers from large errors in scenarios 1 and 2 while performing well in scenario 3, where the treatment-inducing proxy $Z$ is misspecified. However, the PKDR method still performs well due to its doubly robust. In addition, we find that the MINIMAX method does not perform well because it requires more samples to fit the neural network.

\begin{table}[htbp]
  \centering
  \caption{cMSE of all methods on three different data-generating mechanisms.}
  \renewcommand{\arraystretch}{1.3}
  \scalebox{0.70}{
    \begin{tabular}{c|c|c|cccccc|cc}
    \hline
    \hline
    \multicolumn{2}{c|}{\multirow{2}{*}{Dataset}} & \multirow{2}{*}{Size} & \multirow{2}{*}{PMMR} & \multirow{2}{*}{KPV} & \multirow{2}{*}{DFPV} & \multirow{2}{*}{MINIMAX} & \multirow{2}{*}{NMMR} & \multirow{2}{*}{POR} & \multirow{2}{*}{\textbf{PKIPW}} & \multirow{2}{*}{\textbf{PKDR}} \\
    \multicolumn{2}{c|}{} &       &       &       &       &       &       &       &       &  \\
    \hline
    \multirow{3}{*}{\parbox{1.5cm}{\cite{hu2023bayesian}}} & Scenario 1 & 1000  & $0.25_{\pm 0.05}$ & $0.25_{\pm 0.05}$ & $0.59_{\pm 0.35}$ &  $1.45_{\pm 1.32}$     & $0.17_{\pm 0.09}$ & $0.16_{\pm 0.23}$ & $0.29_{\pm 0.12}$ & $\mathbf{0.16_{\pm 0.22}}$ \\
          & Scenario 2 & 1000  & $0.16_{\pm 0.02}$ & $0.16_{\pm 0.02}$ & $0.22_{\pm 0.17}$ & $0.88_{\pm 0.29}$      & $\mathbf{0.05_{\pm 0.04}}$ & $0.08_{\pm 0.07}$ & $0.15_{\pm 0.06}$ & $0.07_{\pm 0.06}$ \\
          & Scenario 3 & 1000  & $0.10_{\pm 0.02}$ & $0.10_{\pm 0.02}$ & $0.28_{\pm 0.39}$ &  $0.45_{\pm 0.29}$     & $0.21_{\pm 0.10}$ & $0.22_{\pm 0.20}$ & $\mathbf{0.09_{\pm 0.03}}$ & $0.21_{\pm 0.19}$ \\
    \hline
    \multicolumn{2}{c|}{\multirow{2}{*}{Time Series
}} & 500  & $0.11_{\pm 0.04}$ & $0.13_{\pm 0.12}$ & $0.20_{\pm 0.14}$  & $0.21_{\pm 0.09}$ & $0.18_{\pm 0.12}$ & $0.10_{\pm 0.14}$ & $0.21_{\pm 0.05}$ & $\mathbf{0.09_{\pm 0.12}}$ \\
    \multicolumn{2}{c|}{} & 1000  & $0.10_{\pm 0.05}$ & $\mathbf{0.08_{\pm 0.07}}$ & $0.16_{\pm 0.21} $& $0.22_{\pm 0.06}$ & $0.18_{\pm 0.10}$ & $0.12_{\pm 0.12}$ & $0.20_{\pm 0.06}$ & $0.12_{\pm 0.10}$ \\
    \hline
    \hline
    \end{tabular}}
  \label{additional_tab}%
\end{table}%

\subsection{Experiments for Time Series Data}

\textbf{Data generation.} We follow \cite{miao2018confounding} to generate data.
$$
\begin{aligned}U_i&=\xi U_{i-1}+(1-\xi^2)^{1/2}\varepsilon_{1i},\quad V_i=0.6U_i+\varepsilon_{2i},\quad A_i=0.4+1.5V_i+\eta U_i+\varepsilon_{3i},\\
Y_i&=0.5+0.7A_i+1.5V_i+0.9U_i+\varepsilon_{4i},\quad\varepsilon_{1i},\varepsilon_{2i},\varepsilon_{3i},\varepsilon_{4i}\sim N(0,1),\end{aligned}
$$
where $U_i$ is a stationary autoregressive process with autocorrelation coefficient $\xi$, and $\eta$
controls the magnitude of confounding. Here, we let $\xi=0.8$, $\eta=0.5$. For our proximal causal approach, we use $W_i = Y_{i-1}$ and $Z_i = A_{i+1}$ as two types of proxy variables and do not need auxiliary data.

\textbf{Results.} We report the mean and the standard deviation (std) of cMSE over 20 times across four
scenarios, as depicted in Fig.~\ref{fig:additional_study} and Tab.~\ref{additional_tab}. For each scenario, we consider two sample sizes, $n=500$ and $n=1,000$. As shown in Fig.~\ref{fig:additional_study}, our PKIPW and PKDR accurately estimate the causal effect across all treatment values, making its overall cMSE comparable or better than other baselines. This result suggests the effectiveness of our methods for different scenarios.

\subsection{Rate}

Due to the error introduced in kernel approximation, this is a reasonable sacrifice to handle continuous treatment within the proximal causal framework. Besides, according to \cite{ichimura2022influence}, since the estimand is non-regular, therefore it may not enjoy the properties of $\sqrt{n}$-consistent and asymptotically normality. Such flexible kernel function approximation will make a non-negligible contribution to the limiting behavior of the estimator, preventing asymptotic normality and root-n consistency.

We conducted empirical numerical verification using Scenario 1 from the initial synthesis experiment outlined in Appendix~\ref{additional}. As per the synthesis mechanism, we can easily obtain the density function
$$
f(A|W)=\frac{1}{\sqrt{0.4\pi}}e^{-\frac{1}{0.4}\left( A-2.5+4W \right) ^2}.
$$
We compute the empirical estimator error with sample sizes $\{200, 400, 600, 800, 1,000\}$ and compare the estimator error in Figure~\ref{Rate}. As we speculated before, the convergence rate is difficult to reach $n^{-1/2}$.

\begin{figure}[htbp!]
    \centering
    \includegraphics[width=0.5\textwidth]{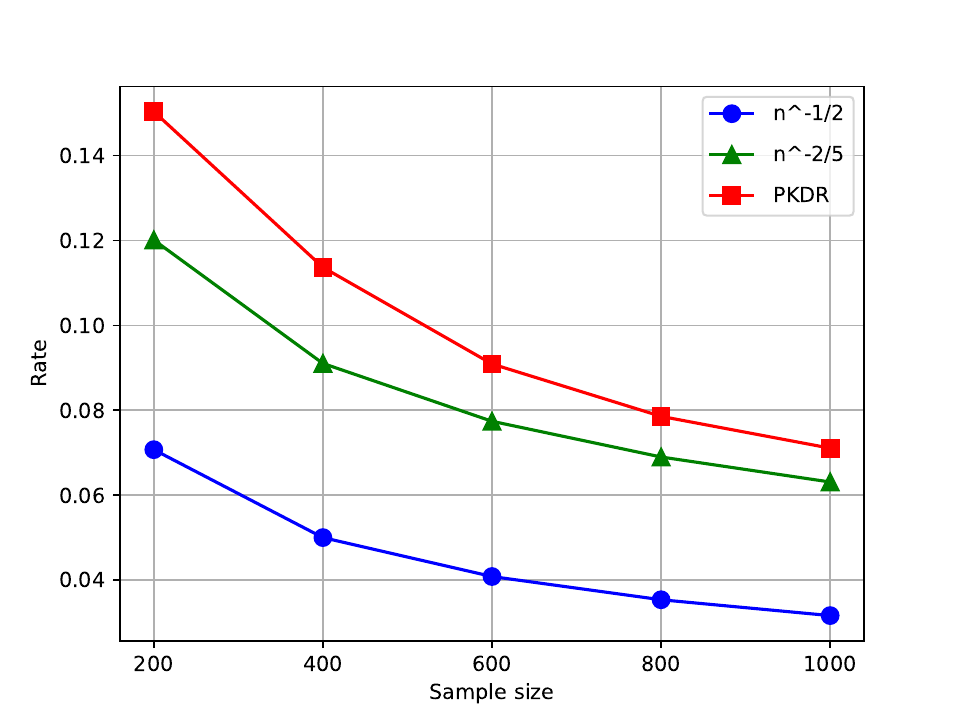}
    \caption{Empirical estimator error of the Scenario 1 in Appendix~\ref{additional}.}
    \label{Rate}
\end{figure}

\newpage
\section{Experiments}\label{appx.exp}
In this section, we present the data generation process of experiments and the detailed settings of hyper-parameters.

\subsection{Data Generating Process in the Low-Dimensional Synthetic Experiment}
We describe the data generating mechanism in the Synthetic-Data Experiment. The generative process from \cite{mastouri2021proximal}. Since the original data generates $Y\in[0,1]$ and the overall trend is flat, we modify the structural equation of $Y$ to make it easier to distinguish. 
$$
\begin{aligned}
	&U:=\left[ U_1,U_2 \right] ,\quad U_2\sim \mathrm{Uniform}\left[ -1,2 \right]\\
	&U_1\sim \mathrm{Uniform}\left[ 0,1 \right] -\mathbb{I} \left[ 0\le U_2\le 1 \right]\\
	&W:=\left[ W_1,W_2 \right] =\left[ U_1+\mathrm{Uniform}\left[ -1,1 \right] ,U_2+\varepsilon _1 \right]\\
	&Z:=\left[ Z_1,Z_2 \right] =\left[ U_1+\varepsilon _2,U_2+\mathrm{Uniform[}-1,1] \right]\\
	&A:=U_2+\varepsilon _3\\
	&Y:=3\cos \left( 2\left( 0.3U_1+0.3U_2+0.2 \right) +1.5A \right) +\varepsilon _4\\
\end{aligned}
$$
where $\{\epsilon_i\}_{i=1}^4 \sim N(0,1)$.

\subsection{Data Generating Process in the High-Dimensional Synthetic Experiment}

For $X\in \mathbb R^{\operatorname{dim}(X)}$, $Z\in \mathbb R^{\operatorname{dim}(Z)}$, $W\in \mathbb R^{\operatorname{dim}(W)}$ and $(A,D)\in \mathbb R$, we first generate the unobserved noise:
$$
\{\epsilon_{i}\}_{i\in[3]}\stackrel{i.i.d}{\sim}N(0,1),\quad\nu_{z}\sim\mathrm{Uniform}[-1,1]^{\operatorname{dim}(Z)},\quad\nu_{w}\sim\mathrm{Uniform}[-1,1]^{\operatorname{dim}(W)}
$$
Next, we generate the following data structure
\begin{itemize}[noitemsep,topsep=0.1pt,leftmargin=0.4cm]
    \item For unobserved confounders $U$, we have
    $$
    U_z=\epsilon_1+\epsilon_3,\quad U_w=\epsilon_2+\epsilon_3
    $$
    \item For two types of proxies $Z$ and $W$, we have
    $$
    Z=\nu_z+0.25\cdot U_z\cdot\mathbf{1}_{dim(Z)},\quad W=\nu_w+0.25\cdot U_w\cdot\mathbf{1}_{dim(W)}
    $$
    where $\mathbf{1}_p\in\mathbb{R}^p$ is the vector of ones of length $p$.
    \item For covariates $X$, we have
    $$
    X\sim N(0,\Sigma) ,\text{ where } \Sigma\in\mathbb{R}^{\operatorname{dim}(X)\times \operatorname{dim}(X)}, \Sigma_{ii}=1 \text{ and }\Sigma_{ij}=\frac12\cdot \mathbb I\{|i-j|=1\}\mathrm{~for~}i\neq j.
    $$
    \item For treatment $A$, we have
     $$
     A=\Lambda(3X^\top\beta_x+3Z^\top\beta_z)+0.25\cdot U_w,
     $$
     where $\beta_x\in\mathbb{R}^{\operatorname{dim}(X)}$ and $\beta_z\in\mathbb{R}^{\operatorname{dim}(Z)}$ are quadratically decaying coefficients, e.g. $[\beta_x]_j=j^{-2}$. $\Lambda$ is the truncated logistic link function $\Lambda(t)=(0.9-0.1)\frac{\exp(t)}{1+\exp(t)}+0.1$.
     \item For outcome $Y$, we have
     $$
     Y=\theta_{0}^{\operatorname{ATE}}(A)+1.2(X^{\top}\beta_{x}+W^{\top}\beta_{w})+AX_{1}+0.25\cdot U_{z},
     $$
     where $\beta_w\in\mathbb{R}^{\operatorname{dim}(W)}$ are quadratically decaying coefficients, e.g. $[\beta_w]_j=j^{-2}$. 
\end{itemize}
Follow \cite{colangelo2020double}, we use the quadratic design, $\theta_0^{\operatorname{ATE}}(a)=a^2+1.2a$. 

% For the sigmoid design, $\theta_0^{\operatorname{ATE}}(a)=\ln(|16a-8|+1)\cdot \operatorname{sign}(a-0.5)+1.2a$. Finally, for the peaked design, $\theta_0^{\operatorname{ATE}}(a)=2\{a^4/600+\exp(-4a^2)+a/10-2\}+1.2a$.

\subsection{Legalized Abortion and Crime}
In the Abortion and Criminality dataset, as described in the reference \cite{woody2020estimatingf}, the key variables are as follows:
\begin{itemize}[noitemsep,topsep=0.1pt,leftmargin=0.4cm]
    \item Treatment Variable $A$: Effective abortion rate;
    \item Outcome variable $Y$: Murder rate;
    \item Treatment-inducing proxy $Z$: Generosity towards families with dependent children;
    \item Outcome-inducing proxy $W$: Beer consumption per capita, log-prisoner population per capita, and concealed weapons laws.
\end{itemize}
We take the remaining variables as the unobserved confounding variables $U$. Following \cite{ mastouri2021proximal}, the ground-truth value of $\beta(a)$ is taken from the generative model fitted to the data. 

The dataset is available at \url{https://github.com/yuchen-zhu/kernel_proxies/tree/main/data/sim_1d_no_x}.

\subsection{Hyperparameters Selection}
In all our numerical studies,
RKHSs $\mathcal{G}, \mathcal{H}, \mathcal{M}, \mathcal{Q}$ are equipped with Gaussian kernels
$$
K(x_1,x_2)=\exp\{\gamma\|x_1-x_2\|_2^2\}.
$$
The median heuristic bandwidth parameter $\gamma^{-1}=\mathrm{median}\{\|x_{i}-x_{j}\|_{2}^{2}\}_{i<j\in I}$ for indices subset $I\subset \{1,\ldots,n\}$. For the regularization coefficient, we automatically select it according to Eq.~\ref{eq.hyp.1} and~\ref{eq.hyp.2}.

For KDE, we also choose the Gaussian kernel. For bandwidth, we employ three fold cross-validation, where the bandwidth is chosen as 20 values uniformly distributed in logarithmic space between $10^{-0.1}$ and 10 raised to $10^0$.

For CNFs, we recommend using the package $\mathrm{probaforms}$,  where the prior distributions is multivariate normal distribution.
\begin{table}[H]
\centering
\caption{CNFs-block}
 \begin{tabular}{cc}
    \hline
    \multicolumn{1}{l}{Layer} & Configuration \\
    \hline
    1     & Input($A_i,W_i,X_i$) \\
    2     & FC(in-dim, 128), ReLU \\
    3     & FC(128, 64), ReLU \\
    3     & FC(64, 32) \\
    \hline
    \end{tabular}%
  \label{tab:norm}%
\end{table}

We stack four CNFs-block and finally solve the density function.

\begin{table}[htbp]
  \centering
  \caption{Hyperparameters for CNFs.}
    \begin{tabular}{cc}
    \hline
    Hyperparameter &  \\
    \hline
    Learning rate & \multicolumn{1}{l}{1e-4} \\
    Epochs & 500 \\
    Batch\_size & 512 \\
    Weight\_decay& 1e-4\\
    \hline
    \end{tabular}%
  \label{tab:nkmmr}%
\end{table}%

For the KPV method,  we used the Gaussian kernel where the bandwidth is determined by the median trick. We select the regularizers $\lambda_1 = \lambda_2 = 0.005$.

For the PMMR method,  we used the Gaussian kernel where the bandwidth is determined by the median trick. We select the regularizers $\lambda_1 = \lambda_2 = 0.1$.

For the DFPV, we optimize the model using Adam with learning rate = 0.001, $\beta_1 = 0.9, \beta_2 = 0.999$ and $\varepsilon = 10^{-8}$. Regularizers $\lambda_1 = \lambda_2$ are both set to 0.1 as a result of the tuning procedure.

For the MINIMAX, we used learner and adversary  networks where learner l2= 1e-4,learner lr=1e-4, adversary l2 = 5e-3 and adversary lr = 5e-4.

For the NMMR, we optimize the model using Adam with learning rate =3e-3, decay=3e-6 and epoch = 10000.

\end{document}